\numberwithin{equation}{section}
\numberwithin{figure}{section}
\newcommand{\warn}[1]{\PackageWarning{TomerWarning}{Tom>>>> #1}}
\DeclareMathOperator*{\mybigsqcapDisp}{\mathrel{\reflectbox{\rotatebox[origin=c]{180}{\scalebox{.75}{$\displaystyle{\bigsqcup}$}}}}}
\global\long\def\NF{\mathsf{N_{F}}}
\global\long\def\NR{\mathsf{N_{R}}}
\global\long\def\NC{\mathsf{N_{C}}}
\global\long\def\NI{\mathsf{N_{n}}}
\global\long\def\mm{\mathcal{M}}
\global\long\def\pp{\mathcal{P}}
\global\long\def\op#1#2{{#1}_{{#2}\rhd}}
\global\long\def\nn{\mathcal{N}}
\global\long\def\ovtpn{\overline{tp}}
\global\long\def\LoiqNObTEXT{ALCQIO}
\global\long\def\ctwotree{CT^{2}}
\global\long\def\ctwo{C^{2}}
\global\long\def\L{\mathcal{\LoiqNObTEXT}_{b,Re}}
\global\long\def\LoiqNOb{\mathcal{\LoiqNObTEXT}}
\global\long\def\Loiq{\LoiqNOb_{b}}
\global\long\def\LoiqTEXT{\LoiqNObTEXT b}
\global\long\def\val{\mathit{val}}
\global\long\def\ORDW{\mathrm{ORD}}
\global\long\def\TYPES{\mathrm{TYPES}}
\global\long\def\Reacha#1#2#3{#1\scalebox{.5}{\ensuremath{\stackrel{\scalebox{1.5}{\ensuremath{\boldsymbol{\longrightarrow}}}}{\boldsymbol{\subseteq}}}}^{#2}#3}
\newtheorem{thm-th:exp}{Theorem \ref{th:exp}}
\def\@copyrightspace{\relax}
\begin{document}

\setlength{\pdfpageheight}{\paperheight}
\setlength{\pdfpagewidth}{\paperwidth}





\title{Extending \texorpdfstring{$\LoiqNOb$}{\LoiqNObTEXT} with reachability}
\subtitle{A description logic for shape analysis}

\author{Tomer Kotek \and Mantas \v{S}imkus \and Helmut Veith \and Florian
Zuleger}

\institute{Vienna University of Technology \\\email{ \{kotek,veith,zuleger\}@forsyte.at,
simkus@dbai.tuwien.ac.at}}

\maketitle

\begin{abstract}

We introduce a description logic $\L$ which adds reachability assertions to $\LoiqNOb$, a sub-logic of the two-variable fragment of first order logic with counting quantifiers.
$\L$ is well-suited for applications in software verification and shape analysis.
Shape analysis requires expressive logics which can express reachability and have good computational properties.
We show that $\L$ can describe complex data structures with a high degree of sharing and allows compositions such as list of trees.

We show that finite satisfiability and finite implication of $\L$- formulae are polynomial-time reducible to finite satisfiability of $\LoiqNOb$-formulae.
As a consequence, we get that finite satisfiability and finite implication in $\L$ are NEXPTIME-complete.
Description logics with transitive closure constructors have been studied before, but $\L$ is the first description logic that remains decidable on finite structures while allowing at the same time nominals, inverse roles, counting quantifiers and reachability assertions,
\end{abstract}




\warn{thanks}
\warn{spellcheck!}

\newcommand{\useful}{\ensuremath{\mathit{ext}}}
\newcommand{\order}{\ensuremath{\mathit{ord}}}
\newcommand{\nominals}{\ensuremath{O}}
\newcommand{\usefulExp}{\ensuremath{\mathit{ext}}}
\newcommand{\orderExp}{\ensuremath{\mathit{ord}}}
\newcommand{\nominalsExp}{\ensuremath{O}}
\newcommand{\st}{\ensuremath{\mathit{start}}}
\newcommand{\suc}{\ensuremath{\mathit{succ}}}
\newcommand{\ORDWExp}{\ensuremath{\mathit{ORD}}}
\newcommand{\eval}{\ensuremath{\mathit{eval}}}
\newcommand{\formula}[1]{\ensuremath{\mathit{assoc}(#1)}}
\newcommand{\semi}[1]{\ensuremath{\mathit{semi}(#1)}}
\newcommand{\reachCyclic}{\ensuremath{\mathit{reach-cyc}}}
\newcommand{\reach}[2]{\ensuremath{\mathit{Reach}_{#1}(#2)}}
\newcommand{\marked}{\ensuremath{\mathit{Marked}}}
\newcommand{\ghost}[1]{\ensuremath{#1^\mathit{ghost}}}

\section{Introduction}
\emph{Description Logics} (\emph{DLs}) are a well established family of logics for Knowledge Representation and Reasoning \cite{bk:Baader2003}.
They model the domain of interest in terms of \emph{concepts} (classes of objects) and \emph{roles} (binary relations between objects).
These features make DLs very useful to formally describe and reason about graph-structured information.
The usefulness of DLs is witnessed e.g.\,by the W3C choosing DLs to provide the logical
foundations to the standard Web Ontology Language (OWL)~\cite{owl2-overview}.
Another application of DLs is formalization and static analysis of UML class diagrams and ER diagrams, which are basic modeling artifacts in object-oriented software development and database design, respectively~\cite{BeCD05,ACKRZ07b}.
In these settings, standard reasoning services provided by DLs can be used to verify e.g.\, the consistency of a diagram.

In complex software projects, the source code is usually accompanied by design documents which provide extensive documentation and models of data structure content (e.g., using UML and ER diagrams).
This documentation is both an opportunity and a challenge for program verification.
However, the verification community has focused mostly on a bottom-up approach to the analysis of programs with dynamic data structures, which examines pointers and the shapes induced by them.
It is therefore a compelling question, if DLs can be used to model and verify richer properties of dynamic data structures.

A promising framework for shape analysis based on description logics was developed in \cite{georgieva2005description}.
\cite{georgieva2005description} is based mostly on
an extension of $\LoiqNOb$ with fixed points.
$\LoiqNOb$ is itself the extension of $\mathcal{ALC}$ with nominals, number restrictions and inverses, see e.g. \cite{DBLP:conf/aaai/BaaderLMSW05}.
$\mathcal{ALC}$ is a syntactic variant of the multimodal logic $K_m$~\cite{DBLP:journals/ai/Schmidt-SchaussS91,Schild91acorrespondence}.
$\LoiqNOb$ already fulfills most of the requirements for describing the memory of programs with dynamic data structures:
(i) nominals allow to represent the program's variables;
(ii) number restrictions allow to model the program's pointers by functions;
(iii) inverses allow to define the incoming pointers for data structure elements, e.g., in a tree every tree element must have at most one parent.
Additionally, it needs to be ensured that data structures only contain elements \emph{reachable} from program variables via program pointers (iv).
Reachability is expressible in $\mu \LoiqNOb$, the extension of $\LoiqNOb$ with fixed points, but not in $\LoiqNOb$.

The main disadvantage of \cite{georgieva2005description} is that finite satisfiability and implication of $\mu \LoiqNOb$ formulas
is undecidable \cite{ar:BonattiPeron2004}.
No extensions of $\LoiqNOb$ with reachability or transitive closure
were known to be decidable on finite structures.

{\em Our contribution: }
We introduce
and develop decision procedures over finite structures
for the logic $\L$, which extends the closure ($\Loiq$) of $\LoiqNOb$
under Boolean operations with reachability assertions.
The reachability assertions guarantee that elements of the universe of a model are reachable in the graph-theoretic sense from initial sets of elements using prescribed sets of binary relation symbols.
Alternatively, we can think of $\L$ as $\Loiq$ interpreted over structures containing an unbounded number of binary trees.

The main results of this paper are algorithms which decide the finite satisfiability and finite implication problems of $\L$.
The algorithms are reductions to finite satisfiability in $\LoiqNOb$, which suggests relatively simple implementation using existing $\LoiqNOb$ reasoners.
The algorithms run in NEXPTIME, which is optimal since $\LoiqNOb$ is already NEXPTIME-hard.

{\em A description logic for shape analysis.}
The logic $\L$ we introduce is especially well-suited to shape analysis, since $\L$ contains
nominals, number restrictions, inverses and reachability.
$\L$ is a flexible and powerful formalism for describing complex data structures with sharing.
This, together with the existence of a decision procedure for implication and not just satisfiability, makes $\L$ a promising candidate for software verification applications.

We discuss in Section \ref{se:examples} that $\L$ is strong enough to describe e.g. lists, trees and lists of trees, etc.
$\L$ supports programs with sharing, in which memory cells (which in model-theoretic terms are elements of the universe of the model) may participate in multiple data structures.
In particular, we show that $\L$ supports modular reasoning:
Our results support composition of data structures with disjoint domains, as well as composition of data structures whose domains are not disjoint, but have disjoint pointers.

The closure of the underlying logic $\Loiq$ under Boolean operations allows to describe conditional statements in programs.
The decision procedure for implication for $\L$ is essential for verification applications, since it allows to show that specifications relating pre- and post-conditions are correct.
We demonstrate the usefulness of $\L$ by giving a Hoare-style correctness proof for an example pointer-manipulating program in Section~\ref{se:verification}.
We leave the development of a full Hoare-style verification framework for future work.

\paragraph{Related Work.}
We discuss further related work in the following.

\emph{Separation Logic.}
In the last decade, the leading approach for shape analysis has been separation logic~\cite{pr:Reynolds02}.
While separation logic has a strong proof-theoretic tradition - in origin and in proof techniques used, model-theoretic approaches have been less prominent.
We believe that recent advances in finite model theory have created an opportunity for the development of new model-theoretic approaches to shape analysis.
In this paper, we explore this model-theoretic line of research building on previous results on description logics.

\emph{Description logics and reachability. }
Description logics extended with various forms of reachability have been studied in the literature, though the focus is mostly on arbitrary
rather than finite structures.
The important work of Schild~\cite{Schild91acorrespondence} exposed a correspondence between variants of propositional dynamic logic (PDL), a logic for reasoning about program behavior, and variants of DLs extended with further role constructors, e.g.\,the transitive closure of a role.
Close correspondences between DLs extended with fixpoints and variants of the $\mu$-calculus have also been identified~\cite{DBLP:conf/kr/Schild94,DBLP:conf/ecai/GiacomoL94}.
Recently, extensions of DLs with regular expressions over roles have been proposed~\cite{calv-etal-09}.
DLs with transitive roles and counting quantifiers were studied in  \cite{KSZ07,schroder2008many}.
\cite{pr:DLC13} proved decidability over arbitrary structures for a DL with transitive closure and counting quantifiers.

{\em The two-variable fragment.}
Our results bear similarity with a recent deep result~\cite{pr:CharatonikWitkowski13} based on \cite{ar:PrattHartmann2005}.
There, the complexity of finite satisfiability of the two-variable fragment of first order logic extended with counting quantifiers ($\ctwo$) and additionally with two forests ($\ctwotree$) is studied.

The results in our paper and in \cite{pr:CharatonikWitkowski13} are incomparable due to differences in several orthogonal aspects.
(i) $\ctwo$ strictly contains $\Loiq$.
(ii) $\ctwotree$ is restricted to at most two forests, whereas $\L$ allows an unbounded number of reachability conditions.
The decidability of the extension of $\ctwo$ with three successor relations is not known, while extending $\Loiq$ with three successor relations is covered by our results.
(iii) we have a decision procedure for implication in $\L$, while no such decision procedure is given for $\ctwotree$ in \cite{pr:CharatonikWitkowski13}.
(iv) to our knowledge, no reasoners for $\ctwo$ exist; the sophisticated construction in \cite{pr:CharatonikWitkowski13} makes the worthy task of implementing a reasoner for $\ctwotree$ a considerable challenge.
In contrast, our result reduces reasoning in $\L$ to satisfiability in $\LoiqNOb$, which is contained in the description logic $\mathcal{SROIQ}$ for which several reasoners have been implemented, e.g., \cite{pellet,fact,hermit}.

Due to the intricate nature of the proof in \cite{pr:CharatonikWitkowski13}, the exact relationship between our result and that of \cite{pr:CharatonikWitkowski13} is difficult to ascertain.
It would be beneficial in future work to understand whether these results can be united within a natural logic containing both $\ctwotree$ and $\L$.

\emph{Shape Analysis and Content Verification.}
In a recent paper~\cite{arXiv:CKSVZ13} the use of a DL to reason and verify correctness of entity-relations-type content of data structures on top of an existing shape analysis is discussed.
Technically, \cite{arXiv:CKSVZ13} is based on a reduction of a DL to satisfiability in $\ctwotree$.
The DL in \cite{arXiv:CKSVZ13} cannot express reachability and the approach there depends on a combination of DL with an existing shape analysis.
We believe the method of \cite{arXiv:CKSVZ13} can be modified to be based on $\L$ alone.
Exploration of this question is part of future work.

\section{The Formalism and Examples}

From the point of view of finite model theory, $\Loiq$ and $\L$
are syntactic variants of fragments of first or second order logic.
In description logics terminology, binary relation symbols are called
{\em atomic roles}, unary relation symbols are called {\em atomic concepts},
and constant symbols are called {\em nominals}. Let $\NR$, $\NC$
and $\NI$ denote the sets of atomic roles, atomic concepts and nominals.
A vocabulary $\tau$ is then the union of $\NR$, $\NC$ and $\NI$.
Let $\NF\subseteq\NR$ be a set of atomic roles. The roles in $\NF$
are called {\em functional}.

Formulae are built from the symbols in $\tau$. The various constructors
available to build formulae determine the particular DL,
giving rise to a wide family of logics with varying expressivity,
and decidability and complexity of reasoning. The semantics to formulae
is given in terms of structures, where atomic concepts and atomic
roles are interpreted as unary and binary relations in a structure,
respectively, and constants are interpreted as elements in the structure's
universe.
We now define $\Loiq$ and $\L$.
See Section \ref{se:examples} for examples.
\begin{definition}
[Syntax of $\Loiq$] The set of \emph{roles},\emph{ concepts} and\emph{
formulae} of $\Loiq$ are defined inductively:
\begin{itemize}
\item Atomic concepts and nominals are concepts; Atomic roles are roles;
\item If $r$ is a role, $C,D$ are concepts and $n$ is a positive integer,
then $C\sqcap D$, $C\sqcup D$, $\neg C$, $\exists r.C$ and $\exists^{\leq n}\, r.C$
are concepts, and $r^{-}$ is a role;
\item $C\sqsubseteq D$ (\emph{concept inclusion}) and $C\equiv D$ (\emph{concept equality}) where $C,D$ are concepts,
are formulae;
\item If $\varphi$ and $\psi$ are formulae, then $\varphi\land\psi$,
$\varphi\lor\psi$, and $\neg\psi$ are formulae.
\end{itemize}
\end{definition}
The sub-logic $\LoiqNOb\subseteq \Loiq$ does not allow negations and disjunctions.

A \emph{structure} (or \emph{interpretation}) is a tuple $\mm=(M,\tau,\cdot)$,
where $M$ is a finite set (the \emph{universe}), $\tau$ is a set
of \emph{constants} and unary and binary \emph{relation symbols} (\emph{the
vocabulary}), and $\cdot$ is an \emph{interpretation function}, which
assigns to each constant $c\in\tau$ an element $c^{\mm}\in M$, and
to each $n$-ary relation symbol $R\in\tau$ an $n$-ary relation
$R^{\mm}$ over $M$. In this paper, each relation is either unary
or binary (i.e. $n\in\{1,2\}$). In this paper, all structures are
finite. Satisfiability and implication always refer to finite structures
only.
\begin{definition}
[Semantics of $\Loiq$ ]
The semantics of an $\Loiq(\tau)$-formula $\varphi$ is given in terms of $\tau$-structures such that every $f\in\NF$, $f^{\mm}$ is a partial function.
The function $\cdot^{\mm}$ is extended to the remaining concepts and roles inductively below.
The satisfaction relation $\models$ is also given below.
If $\mm\models\varphi$, then $\mm$ is a \emph{model} of $\varphi$.
We write $\psi\models\varphi$ and say that $\psi$ implies $\varphi$ if every model of $\psi$ is also a model of $\varphi$.
\[
\begin{array}{llllll}
(C\sqcap D)^{\mm} & = & C^{\mm}\cap D^{\mm} \\
(C\sqcup D)^{\mm} & = & C^{\mm}\cup D^{\mm} \\
(\neg C)^{\mm} & = & M\setminus C^{\mm} \\

(r^{-})^{\mm} & = & \{(e,e')\mid (e',e)\in r^{\mm})\}\\
(\exists r.C)^{\mm} & = & \{e\mid\exists e':(e,e')\in r^{\mm},\,e'\in C^\mm\}\\
(\exists^{\leq n}r.C)^{\mm} & = & \{e\mid\exists^{\leq n}e':(e,e')\in r^{\mm},\,e'\in C^\mm\}\\

\mm\models C\sqsubseteq D & \mathrm{if} & C^{\mm}\subseteq D^{\mm} \\
\mm\models\varphi\land\psi & \mathrm{if} & \mm\models\varphi~\mathrm{and}~\mm\models\psi\\
\mm\models\neg\varphi & \mathrm{if} & \mm\not\models\varphi\\
\mm\models\varphi\lor\psi & \mathrm{if} & \mm\models\varphi~\mathrm{or}~\mm\models\psi
\end{array}
\]

\end{definition}
We will use the following abbreviations.
$\top=C\sqcup \neg C$, where $C$ is an arbitrary atomic concept and $\bot=\neg \top$;
$\alpha \equiv \beta$ for the formula $\alpha\sqsubseteq \beta\land \beta\sqsubseteq \alpha$;
$\exists r$ for the concept $\exists r.\top$;
$\exists^{= n}r.C$ for the concept $\exists^{\leq n}r.C \sqcap \neg \exists^{\leq n-1}r. C$;
Note that $\top^\mm=M$ and $\bot^\mm=\emptyset$ for any structure $\mm$ with universe $M$.
For a formula $\varphi$, we denote by $|\varphi|$ the length of $\varphi$ as a string.

For $\L$, we define two new types of assertions.
\begin{description}
\item [{Reachability~Assertion}] $\Reacha{B}{S}{A}$ where $A,B\in\NC$
and $S\subseteq\NF$.
Intuitively, it says that $B$ is contained in $A$ and that $A$ is a set of elements reachable from $B$, without leaving $A$, through the roles of $S$.
\item [{Disjointness~Assertion}] $Disj(A_{1},A_{2})=(A_{1}\sqcap A_{2}\equiv\bot)$
for $A_{1},A_{2}\in\NC$\textit{.}
\end{description}
Let $RE$ and $DI$ be sets of reachability respectively disjointness
assertions.
\begin{description}
\item [{Compatibility}] $RE$ and $DI$ \emph{compatible }if for every
$\Reacha{B_{1}}{S_{1}}{A_{1}}$ and $\Reacha{B_{2}}{S_{2}}{A_{2}}$
in $RE$ such that $S_{1}\cap S_{2}\not=\emptyset$, $Disj(A_{1},A_{2})$
is in $DI$.
\end{description}

\begin{definition}
[Syntax of $\L$]  $\Phi=\phi\land\bigwedge RE\land\bigwedge DI$ is an $\L$-formula if

 \begin{enumerate}[(A)]
\item $\phi \in \Loiq$
\item $RE$ is a set of reachability assertions
\item $DI$ is a set of disjointness assertions
\item $RE$ and $DI$ are compatible
\end{enumerate}
The set of {\em containment assertions}
\[CO(RE)=\left\{ B\sqsubseteq A\mid \Reacha{B}{S}{A}\in RE\right\}\,.\]
\end{definition}

\begin{definition}
[Semantics of $\L$]
Let $\Phi=\phi\land\bigwedge RE\land\bigwedge DI\in\L$.
Let $\formula{\Phi} = \phi \land \bigwedge CO(RE) \land  DI$ be the $\Loiq$ formula associated to $\Phi$.

For every $\beta=\Reacha{B}{S}{A}\in RE$ and $\tau$-structure $\mm$,	
let $D_{\beta}^{\mm}$ be the directed subgraph $\left\langle M,\bigcup_{s\in S}s^{\mm}\right\rangle$ induced by $A^{\mm}$; we call $D_{\beta}^{\mm}$ \emph{connected}, if every vertex of $D_{\beta}^{\mm}$ is reachable from a vertex of $B^{\mm}$.

For every $\tau$-structure $\mm$, $\mm\models\Phi$ if $\mm \models \formula{\Phi}$ (in $\Loiq$ semantics) and $D_{\beta}^{\mm}$ is connected for every $\beta = \Reacha{B}{S}{A}\in RE$.
\end{definition}

\paragraph{Comment 1:}
For illustration purposes, we state an equivalent definition of the semantics of $\L$ in the following:
Let $*$ denote the reflexive-transitive closure, $\circ$ denote role composition,
$R^\mm = (\bigcup_{s\in S}s^{\mm}\cap A^{\mm}\times A^{\mm})^*$ and
$R^\mm(B^\mm) = \{ v \mid \exists (u,v)\in R^\mm.\, u\in B^\mm\}$.
We have $\mm\models\Phi$ iff $\mm \models \formula{\Phi}$ and $R^\mm(B^\mm) = A^M$ for every $\beta = \Reacha{B}{S}{A}\in RE$.

\paragraph{Comment 2:}
The motivation for $CO(RE)$ is to guarantee that data structures over the same roles
have disjoint domains (see {\em Compositionality 1} in the examples).

%

\subsection{Examples} \label{se:examples}

\paragraph{List-segments and successor relations.}
Given a concept $L$, a nominal $\mathit{head}$ and a functional role $\mathit{next}$,
$L^{\mm}$ is a singly-linked list-segment from $\mathit{head}^{\mm}$ via $\mathit{next}^{\mm}$,
if the directed subgraph of $\left\langle M,\mathit{next}^{\mm}\right\rangle$ induced by $L^{\mm}$ is a (potentially cyclic) successor relation.
This can be expressed by the $\L$ formula $\Phi^{\mathit{head},\mathit{next},L}_{List}$ obtained as the conjunction of
$\top\sqsubseteq \top$,
$RE_{l}=\{\Reacha{head}{next}{L}\}$ and $DI_{l}=\emptyset$.
$RE_l$ expresses that $\mathit{head}^\mm \in L^\mm$ and all elements of $L^{\mm}$
are reachable from $\mathit{head}^{\mm}$ via $\mathit{next}^{\mm}$; $DI_l$ is empty
since we have described no other data structure which could be disjoint
from this list.
$\Phi^{\mathit{head},\mathit{next},L}_{List}$ does not determine where the $\mathit{next}$ role of the last
element of the list points. Acyclic and cyclic list-segments are defined as follows:
$\Phi^{\mathit{head},\mathit{next},L}_{aList} = \Phi^{\mathit{head},\mathit{next},L}_{List} \land \neg(L\sqsubseteq \exists next)$
and
$\Phi^{\mathit{head},\mathit{next},L}_{cList} = \Phi^{\mathit{head},\mathit{next},L}_{List} \land
(\mathit{head}\sqsubseteq \exists \mathit{next}^-.L)$.

\paragraph{$d$-ary trees.}
Given a concept $T$, a nominal $\mathit{root}$ and functional roles $\mathit{left}$
and $\mathit{right}$, $T^{\mm}$ is a binary tree rooted at $root^{\mm}$
via $\mathit{left}^{\mm}$ and $\mathit{right}^{\mm}$ if the directed subgraph of
$\left\langle M,\mathit{left}^{\mm}\cup \mathit{right}^{\mm}\right\rangle $
induced by $T^{\mm}$ is a directed tree rooted at $\mathit{root}^{\mm}$ in
the graph-theoretic sense. This can be expressed by the $\L$ formula
$\varPhi_{T}$ obtained as the conjunction of
(a) $\mathit{root} \sqsubseteq \neg\exists \mathit{left}^-.T \sqcap \neg\exists \mathit{right}^-.T$,
(b) $T \sqcap \neg \mathit{root} \sqsubseteq \exists^{= 1}\mathit{left}^{-}.T \sqcap \neg\exists \mathit{right}^{-}.T \sqcup \exists^{= 1}\mathit{right}^{-}.T \sqcap \neg\exists \mathit{left}^{-}.T$,
(c) $RE_{t}=\{\Reacha{root}{\{\mathit{left},right\}}{T}\}$ and (d) $DI_{t}=\emptyset$.
(a) expresses that $\mathit{root}^{\mm}$ belongs to $T^{\mm}$ and is not
pointed to from $T^{\mm}$; (b) expresses that every element of $T^{\mm}$
besides the root has exactly one incoming pointer from a $T^{\mm}$
element. (c) expresses that $\mathit{root}^{\mm}$ belongs to $T^{\mm}$ and that all elements of $T^{\mm}$ are reachable from $\mathit{root}^{\mm}$ via $\mathit{left}^{\mm}$ and $\mathit{right}^{\mm}$.
The case of $d$-ary trees, $d>2$, is similar, using
$d$ functional roles $child_1,\ldots,child_d$.
We remark that $1$-ary trees correspond to acyclic list-segments.

\paragraph{Compositionality 1.}
$\L$ is closed under taking memory disjoint union of data structures.
E.g. $\L$-formulae $\Phi_i=\varphi_i\land \{\Reacha{B_i}{S_i}{A_i}\}$, $i=1,2$,
the following formula expresses that the domains of its models consist of
two disjoint parts, corresponding to the $\Phi_i$:
$\Phi = \varphi_1\land\varphi_2\land \bigwedge RE\land \bigwedge	 DI$,
where $RE = \{\Reacha{B_i}{S_i}{A_i}\mid i=1,2\}$ and
$DI = \{Disj(A_{1},A_{2})\}$.
There is no disjointness requirement on the roles in $S_1$ and $S_2$.

\paragraph{Compositionality 2.}
$\L$ allows to define multiple data structures which may overlap in memory,
as long as they do not share the same pointers.
E.g. given $\L$-formulae $\Phi_i=\varphi_i\land \{\Reacha{B_i}{S_i}{A_i}\}$, $i=1,2,3$,
such that $S_1$, $S_2$ and $S_3$ are pairwise disjoint,
the following formula expresses that the three data structures are defined simultaneously
with possibly overlapping domain:
$\Phi = \varphi_1\land\varphi_2\land \phi_3\land \bigwedge RE$,
where $RE = \{\Reacha{B_i}{S_i}{A_i} \mid i=1,2,3\}$.
We note that this compositionality property allows us to define three successor relations.

\paragraph{Compositional Data Structures.}
$\L$ allows us to define compositional data structures such as list of lists, list of trees, tree of lists of lists, etc.
For example, let us define an acyclic list of acyclic lists:
Given a concept $L$, a nominal $head$, and functional roles $next_1$
and $next_{2}$, $L^{\mm}$ is a acyclic list of acyclic lists from $head^{\mm}$
via $next_1^{\mm}$ and $next_{2}^{\mm}$ if there exists $L_1^\mm \subseteq L^{\mm}$
such that $L_{1}^{\mm}$ is an acyclic list from $var_{1}^{\mm}$ via $next_1^{\mm}$,
and $L^{\mm}$
is a disjoint union of acyclic lists via $next_{2}^{\mm}$ whose heads belong to $L_{1}^{\mm}$.
This can be expressed by the acyclic list-segment formulae $\Phi^{\mathit{head},\mathit{next_1},L_1}_{aList}$ and $\Phi^{L_1,\mathit{next_2},L}_{aList}$, which can be composed because of disjoint roles in the reachability assertions (see Compositionality 2).


\subsection{Verification of Pointer-manipulating Programs using $\L$ }
\label{se:verification}

The focus of this paper is the development of a logic and decision procedures which can be used for shape analysis in future work.
However, we believe that it is important to relate the logic we develop in this paper to its intended application.
We illustrate in this section how to use $\L$ for the verification of pointer-manipulating programs.
We will discuss on the example program given in Fig. \ref{fig:prog} how to formulate verification conditions for a Hoare-style correctness proof.
We leave the non-trivial task of developming a full verification framework based on $\L$ for future work.
Section \ref{se:main-section} can be read independently of this section.

The program in Fig. \ref{fig:prog} receives as input a list pointed to by $head$ in which some elements may be marked but $head$ is not marked.
The program removes from the list all marked elements.

\begin{figure}
\begin{lstlisting}
start: b := head;
loop:  while (b != null)
          if (b.next != null && b.next.marked)
             b.next := b.next.next;
          else
             b := b.next;
end:   ;
\end{lstlisting}

\caption{\label{fig:prog}}
\end{figure}

We annotate the labels $\mathsf{start}$, $\mathsf{end}$ and $\mathsf{loop}$ with the pre-condition, the post condition and the loop invariant respectively:
\begin{description}
\item [{[Pre-condition]}] $head$ points to the first element of an acyclic list.
    $head$ is not marked.
\item [{[Post-condition]}]~
    \begin{enumerate}
        \item After the program is executed, $head$ points to the first element of an acyclic list containing only unmarked elements.
        \item Moreover, the elements removed from the input list are exactly the marked elements.
    \end{enumerate}
\end{description}
The following loop invariant suffices to prove the correctness of the program with respect to the pre- and post-conditions:
\begin{description}
\item [{[{Loop invariant}]}]~
    \begin{enumerate}
        \item The pre-condition holds.
        \item $b$ points to an element of the list.
        \item the set of unmarked elements in the list is exactly the same set as at the beginning of the program.
        \item all elements in the list before $b$ are unmarked.
    \end{enumerate}
\end{description}
The correctness of the program is achieved by proving that the annotations are correct.
More precisely, $S_{2}$ is the piece of loopless code inside the while loop, i.e., the sequence of $\mathsf{assume(b  \ {!}{=} \ null)}$ and the code in the if-then-else statement, $S_{1}$ is the assignment $\mathsf{b := head}$ and $S_{3}$ is the statement $\mathsf{assume(b \ {=}{=} \  null)}$.
To prove the correctness of the program we need to prove that:
\begin{description}
\item [{[$\mathsf{VC}_1$]}] If the pre-condition holds, then after executing $S_{1}$ the loop invariant will hold.
\item [{[$\mathsf{VC}_2$]}] If the loop invariant holds, then after executing on iteration of the loop, i.e., executing $S_{2}$ once, the loop invariant will hold again.
\item [{[$\mathsf{VC}_3$]}] If the loop-invariant holds, then after executing $S_{3}$ the post-condition will hold.
\end{description}

\paragraph{Annotations in $\L$: }

Next we write the annotations in $\L$. The memory is represented
as $\rho$-structures and $\tau$-structures with $\rho\subseteq\tau$.
$\rho$ contains the nominals $head$, $b$ and $Null$, the role
$next$ and the concept $\marked$. We think of each element of the
memory as a fixed size block of memory containing the $\mathsf{next}$
pointer field and the $\mathsf{marked}$ field, except for the special
element which interprets $Null$ and represent the value $\mathsf{null}$.
$head$ and $b$ are the elements which $\mathsf{head}$ and $\mathsf{b}$
point to. $next$ is a function on the elements of the memory defined
by the pointer $\mathsf{next}$. $\marked$ contains the set of elements
whose $\mathsf{marked}$ field is set to true.

Our presentation here is a simplification of the memory model in \cite{arXiv:CKSVZ13},
which further supports dynamic allocation and deallocation memory.

$\tau$ extend $\rho$ with the concepts $L$ 
and $L_{b}$.
$L$ will contain the elements of the list.
$L_{b}$ will contain
the elements of the list segment starting at $head$ up to $b$ (not
including $b$).

The annotations $\Phi_{pre}$, $\Phi_{l-inv}$ and $\Phi_{post}$ refer to the memory at the labels $\mathsf{start}$, $\mathsf{loop}$ and $\mathsf{post}$ respectively.
The pre-condition $\Phi_{pre}$ is
\[
\Phi_{pre} = \Phi_{aList}^{head,next,L} \land head \sqsubseteq \neg \marked
\]

$\Phi_{aList}^{head,next,L} = \Reacha{head}{next}L\land\neg(L\sqsubseteq\exists next)$ is the formula defining an acyclic list from Section \ref{se:examples}.
The post-condition is given by $\Phi_{post}=\Phi_{post1}\land\Phi_{post2}$, where
\begin{eqnarray*}
\Phi_{post1} & = & \Phi_{aList}^{head,next,L}\land L\sqsubseteq\neg \marked\\
\Phi_{post2} & = & L\equiv L_{ghost}\sqcap\neg \marked_{ghost}
\end{eqnarray*}
$L_{ghost}$ and $\marked_{ghost}$ represent the values of $L$ and $\marked$ at the start of the program.
\[
\Phi_{l-inv}=\Phi_{l-inv1}\land\Phi_{l-inv2}\land\Phi_{l-inv3}\land\Phi_{l-inv4}
\]
where
\begin{eqnarray*}
\Phi_{l-inv1} & = &  \Phi_{pre}\\
\Phi_{l-inv2} & = & b\sqsubseteq L\\
\Phi_{l-inv3} & = & L \sqcap \neg \marked \equiv L_{ghost} \sqcap \neg \marked_{ghost}\\
\Phi_{l-inv4} & = & (L_{b}\sqsubseteq L) \land (head\sqsubseteq L_{b}\sqcup b)\land(b\sqsubseteq\neg L_{b})\\
 &  & \land(L_{b}\sqsubseteq\forall next.L_{b}\sqcup b)\land(\neg L_{b}\sqsubseteq\forall next.\neg L_{b})\\
 &  & \land(L_{b}\sqsubseteq\neg \marked)
\end{eqnarray*}

$\Phi_{l-inv4}$ expresses that $L_{b}$ is exactly the set of elements
in $L$ from $head$ to $b$, not including $b$.

\paragraph{Verification conditions in $\L$: }

Expressing the verification conditions exactly requires us first to
relate the loopless pieces of code $S\in\{S_{1},S_{2},S_{3}\}$ with
the annotations $\Phi_{pre}$, $\Phi_{post}$ and $\Phi_{l-inv}$.
Each $S_{i}$ is associated with two annotations $\Phi_{start}^{i},\mbox{\ensuremath{\Phi}}_{end}^{i}\in\{\Phi_{pre},\Phi_{post},\Phi_{l-inv}\}$
(e.g., for $S_{1}$ we have $\Phi_{start}^{1}=\Phi_{pre}$ and $\Phi_{end}^{1}=\Phi_{l-inv}$).

We need three formulas $\Psi_{1},\Psi_{2},\Psi_{3}\in\L(\tau)$ such
that $\psi_{i}$ is satisfiable iff $\mathsf{VC}_{i}$ does not hold.
The verification conditions $\mathsf{VC}_{1}$, $\mathsf{VC}_{2}$,
and $\mathsf{VC}_{3}$ refer to the memory both at the start and at
the end of $S_{i}$. In contrast, the truth-values of $\Psi_{1}$,
$\Psi_{2}$ and $\Psi_{3}$ will be evaluated on $\tau$-structures
corresponding to the memory at the start of $S_{i}$ only.

The main observation which is required to write $\Psi_{i}$ is that
the symbols of $\tau$ in $\Phi_{end}^{i}$ referring to the memory
at the end of $S_{i}$ can be written in terms of the same sybmols
when they refer to the memory at the start of $S_{i}$.

Let $\underline{\rho}$ and $\underline{\tau}$ be disjoint copies
of $\rho$ and $\tau$ with symbols $\underline{sym}$ corresponding
to $sym$. Let $\mm$ be a $\rho$-structure. Let $\mm_{S_{i}}$ be
the unique $\underline{\rho}$-structure obtained by executing $S_{i}$
on $\mm$ and renaming all symbols $sym$ to $\underline{sym}$. Similarly,
we write $\nn_{S_{i}}$ for the unique $(\tau\backslash\rho)\cup\underline{\rho}$-structure
obtained by replacing the sub-structure $\mm$ with vocabulary $\rho$
of $\nn$ with $\mm_{S_{i}}$.
Let $\underline{\Phi_{end}^{i}}$ be
obtained from $\Phi_{end}^{i}$ be replacing each $\tau$-symbol $sym$
with the $\underline{\tau}$-symbol $\underline{sym}$.

We will define $\Psi_{i}$ by expressing all symbols of $\underline{\tau}$
referring to the memory at the end of $S_{i}$ by $\tau$-symbols
referring to the memory at the start of $S_{i}$.

Consider $\mathsf{VC}_{i}$. The following are equivalent:
\begin{itemize}
\item [(A)] $\mathsf{VC}_{i}$ does not hold
\item [(B)] There exist a $\tau$-structure $\nn$ such that $\nn\models\Phi_{start}^{i}$
and for every extension $\pp'$ of the $\tau\cup\underline{\rho}$-structure
$\pp=\left\langle \nn,\nn_{S_{i}}\right\rangle $ to a $\tau\cup\underline{\tau}$-structure
we have $\pp'\not\models\underline{\Phi_{end}^{i}}$.
\end{itemize}
Let $\phi_i = \formula{\Phi_{end}^{i}}$. 
 (A) and (B) are equivalent
to (C):
\begin{itemize}
\item [(C)] There exists a $\tau$-structure $\nn$ such that $\nn\models\Phi_{start}^{i}$
and for every extension $\pp'$ of the $\tau\cup\underline{\rho}$-structure
$\pp=\left\langle \nn,\nn_{S_{i}}\right\rangle $ to a $\tau\cup\underline{\tau}$-structure
such that $\pp'\models\bigwedge RE\land\bigwedge DI$, we have $\pp'\models\neg\phi_i$.
\end{itemize}
The next step is to show that the set of possible extensions of $\pp$
to $\pp'$ with $\pp'\models\bigwedge RE\land\bigwedge DI$ is definable.
Let $W_{i}$ be a concept which contains all possible memory cells
accessed during the run of $S_{i}$. E.g. for $S_2$ (see Fig. \ref{fig:list-example}),
\begin{eqnarray*}
W_{2} & \equiv & Z_{I}\sqcup Z_{II}\sqcup Z_{III}\\
Z_{I} & \equiv & b\\
Z_{II} & \equiv & \exists next^{-}.b\\
Z_{III} & \equiv & \exists next^{-}.\exists next^{-}.b
\end{eqnarray*}
$W_{2}$ can be extracted naively from $S_{2}$. The remaining elements
of the memory at the start of $S_{2}$ are two lists segments
\begin{eqnarray*}
Z_{IV} & \equiv & L_{b}\sqcap\neg W_{2}\\
Z_{V} & \equiv & L\sqcap\neg L_{b}\sqcap\neg W_{2}
\end{eqnarray*}
which are guaranteed not to be accessed during the run of $S_{2}$.

\begin{figure}
\begin{center}
\includegraphics[scale=0.7]{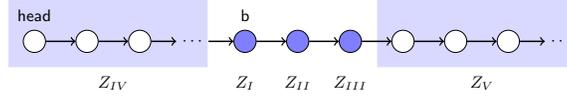}
\end{center}
\caption{\label{fig:list-example} The figure depicts the input of one iteration
of the loopless code $S_{2}$ inside the loop, namely a list starting
at $head$ with $b$ pointing to an element in the list. $S_{2}$
only accesses $b$, $b.next$ and $b.next.next$. Therefore, after
the execution of $S_{2}$, $Z_{IV}$ and $Z_{V}$ appear in any list
segment as entirely or not at all. }
\end{figure}

Since the elements of $Z_{IV}$ are not accessed and their pointers remain unchanged, they agree on which
list segments they belong to in $\mm_{S_{i}}$, and similarly for
$Z_{V}$. In other words, every list segment in $\mm_{S_{i}}$ consists
of a concatenation of a subset of the elements and list segments $\{Z_{I},\ldots,Z_{V}\}$.
I.e. every list segment in $\mm_{S_{i}}$ can be characterized entirely
by whether, and in what order, $Z_{I},\ldots,Z_{V}$ occur in it.

Let $\sigma=\{C_{y}\mid y\subseteq\{Z_{I},\ldots,Z_{V}\}\}$ be a
vocabulary consisting of fresh concepts. For $y\subseteq\{Z_{I},\ldots,Z_{V}\}$,
let $\alpha_{y}=\bigsqcup_{Z_{s}\in y}Z_{s}\equiv C_{y}$. The formula
\[
\alpha=\bigwedge_{y\subseteq\{Z_{I},\ldots,Z_{V}\}}\alpha_{y}
\]
 defines the $C_{y}$ to be all possible candidates for list segments
in $\mm_{S_{i}}$.

A concept $C_{y}$ is a list segment if there is a linear ordering
witnessing the concatenation of the elements and list segments in
$y$ in $\mm_{S_{i}}$. Let $\beta_{y}^{o}$ be the disjunction over
all linear orderings $\leq$ of $y$ with minimal element $min_{\leq}$
of $\beta_{\leq}^{o}$, where
\begin{eqnarray*}
\beta_{\leq}^{o} & = & min_{\leq}\equiv o\land\bigwedge_{Z_{s}\leq Z_{t}\in y}\bot\not\equiv(\exists\underline{next}^{-}.Z_{s})\sqcap Z_{t}
\end{eqnarray*}
$\beta_{\leq}^{o}$ expresses that the elements and list segments
in $y$ are concatenated according to $\leq$ starting from $o$,
and $\beta_{y}^{o}$ expresses that such an ordering exists. E.g.,
$\alpha_{y}\land\beta_{y}^{\underline{head}}$ holds iff $C_{y}$
consists exactly of $\bigsqcup_{Z_{s}\in y}Z_{s}$ and is a list segment
starting at $\underline{head}$. Let $Y$ be the set of pairs $(y_{1},y_{2})$
of disjoint subsets of $\{Z_{I},\ldots,Z_{V}\}$. (C) is equivalent
to:
\begin{itemize}
\item [(D)] There exist a $\tau$-structure $\nn$ such that $\nn\models\Phi_{start}^{2}$
and we have
\[
\left\langle \nn,\nn_{S_{2}}\right\rangle \models \gamma
\]
where $\gamma$ is
\[
\alpha\land\bigwedge_{(y_{1},y_{2})\in Y}\left(\left(\beta_{y_{1}}^{\underline{head}}\land\beta_{y_{2}}^{\underline{b}}\right)\to\neg\phi_i[\underline{L_{b}}\backslash C_{y_{1}},\underline{L}\backslash C_{y_{1}}\sqcup C_{y_{2}}]\right)
\]
The notation $\phi_i[A\backslash B]$ denotes the syntactical substitution
of $A$ with $B$ in $\phi_i$.
\end{itemize}
$\gamma$ expresses that for every two list segments starting in $\underline{head}$
and $\underline{b}$ respectively, $\phi_i$ does not hold.

Now we can turn to the symbols of $\underline{\rho}$. We can apply
backwards propagation to $head$, $b$, $\marked$ and $next$, i.e.
we can compute the weakest precondition predicate transformer for
each of these with respect to $S_{i}$. For $next$ we apply the transformer
directly on concepts $\exists next.C$ and $\exists next^{-}.C$ for
any $C$ which use $next$.

In \cite{arXiv:CKSVZ13} the following is proven\footnote{The description logic used in \cite{arXiv:CKSVZ13} is slightly more powerful, allowing also role inclusion. However, this is easy to overcome, see footnote \ref{foot-sub}. }:
\begin{theorem}
\label{th:back} For every disjoint vocabularies $\xi$ and $\tau$,
every $\Loiq(\tau\cup\underline{\tau}\cup\xi)$-formula $\phi_i$,
and loopless code $S$, it is possible to compute
a $\Loiq(\rho\cup(\underline{\tau}\backslash\underline{\rho})\cup\xi)$-formula
$\theta_{S,\phi_i}$  such that $\pp\models\theta_{S,\phi}$
iff $\pp'=\left\langle \pp,\pp_{S}\right\rangle \models\phi_i$.
\end{theorem}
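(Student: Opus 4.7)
The plan is to proceed by induction on the structure of the loopless code $S$, which decomposes into a sequence of atomic statements --- variable assignments $\mathsf{x := y}$ or $\mathsf{x := y.f}$, field updates $\mathsf{x.f := y}$, and $\mathsf{assume(C)}$ --- combined by sequential composition. For each atomic statement $a$, I define a syntactic backward transformation $T_a$ that eliminates post-state symbols $\underline{\rho}$ by expressing them in the pre-state $\rho$-vocabulary. Sequential composition $S_1;S_2$ is handled by introducing a fresh intermediate copy of $\rho$, applying $T_{S_2}$ to rewrite $\underline{\rho}$ into this intermediate vocabulary, then $T_{S_1}$ to rewrite the intermediate vocabulary into $\rho$. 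Symbols in $\underline{\tau}\setminus\underline{\rho}$ (ghost) and in $\xi$ are untouched by $S$ and pass through unchanged.

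$T_a$ is defined by induction on concept/formula structure. For $\mathsf{assume(C)}$, no memory changes, so $T_a(\phi) = C \to \phi$. Variable assignments are handled at the nominal level: $T_a(\underline{x}) = y$ or $T_a(\underline{x}) = \exists f^{-}.y$, using that $f\in\NF$ is a partial function. The central case is the field update $\mathsf{x.f := y}$, whose semantic effect is a one-point modification of $f$ at the cell $x^\pp$:
\[
T_a(\exists \underline{f}.C) \;=\; (\neg x \sqcap \exists f.\, T_a(C)) \;\sqcup\; (x \sqcap A_{y,T_a(C)}),
\]
where $A_{y,D}$ is a fresh concept forced by the $\Loiq$-axioms
\[
A_{y,D}\sqsubseteq x \;\land\; \bigl((y\sqsubseteq D)\to(x\sqsubseteq A_{y,D})\bigr) \;\land\; \bigl((y\sqsubseteq\neg D)\to(A_{y,D}\sqsubseteq\bot)\bigr)
\]
to equal $x$ if $y^\pp\in D^\pp$ and $\bot$ otherwise; Boolean closure of $\Loiq$ at the formula level makes this definable. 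The inverse case $\exists \underline{f}^{-}.C$ is handled dually: remove $x$ from the old inverse image if $f^\pp(x^\pp)\in C^\pp$, and add $x$ if $y^\pp\in C^\pp$. Counting variants $\exists^{\leq n}\underline{f}.C$ and $\exists^{\leq n}\underline{f}^{-}.C$ follow the same case split with bounded $\pm 1$ adjustments at the modified cell.

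The main obstacle is preserving $\Loiq$-expressibility under these rewrites, particularly for counting quantifiers combined with inverse roles under field updates. This succeeds because each atomic update is a one-point patch of a functional role: the semantic change is always captured by a Boolean case analysis on whether an element equals the nominal $x$ (and whether its $f$-image equals $y$ for the inverse case), with fresh auxiliary concepts of the form $A_{y,D}$ lifting propositions $y\sqsubseteq D$ into concept position. Each atomic rewriting step enlarges the formula only polynomially and $S$ contains a bounded number of atomic steps, so $\theta_{S,\phi_i}$ has size polynomial in $|S|$ and $|\phi_i|$, making the construction effective.
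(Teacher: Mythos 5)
The paper does not actually prove Theorem~\ref{th:back}: it imports it from the cited work \cite{arXiv:CKSVZ13} and only illustrates the construction on the running example (the substitutions defining $\phi_{true}$, $\phi_{false}$ and $\theta_{S_2,\gamma}$, together with footnote~\ref{foot-sub}). Your proposal reconstructs a proof along exactly the lines of that illustration: a weakest-precondition predicate transformer computed by structural induction on the loopless code, whose only nontrivial case is the one-point patch of a functional role at a nominal, handled by a Boolean case split on membership in $x$ (compare your clause for $\exists\underline{f}.C$ with the paper's rewriting of $\exists next.D$ into $(b\sqcap\exists next.\exists next.D)\sqcup(\neg b\sqcap\exists next.D)$). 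So the route is the intended one, and the overall argument is sound. What your sketch buys over the paper's treatment is an explicit account of why the two delicate constructor combinations survive the rewrite: counting quantifiers over inverses of an updated functional role (the predecessor set of any element changes by at most one deletion, definable via $\exists f^{-}.x$, and one insertion, definable via the nominal $y$), and the lifting of the formula-level fact $y\sqsubseteq D$ into concept position via the auxiliary concepts $A_{y,D}$ --- the paper's example sidesteps the latter because its right-hand sides are of the form $\mathsf{b.next.next}$ and hence already concept-expressible as $\exists next.\exists next.D$.

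Two caveats you should make explicit. First, the fresh concepts $A_{y,D}$ and the intermediate vocabulary copies used for sequential composition take $\theta_{S,\phi}$ outside the vocabulary $\rho\cup(\underline{\tau}\backslash\underline{\rho})\cup\xi$ stated in the theorem; you must either fold them into $\xi$ and weaken the equivalence to ``$\pp$ has an expansion satisfying $\theta_{S,\phi}$'' (harmless for the satisfiability reduction in which the theorem is used), or replace the gadget by a formula-level case distinction on $y\sqsubseteq D$ versus $y\sqsubseteq\neg D$, which preserves the vocabulary but costs an exponential blowup --- your polynomial-size claim holds only for the first option. Second, loopless code includes branching, which your grammar of atomic statements plus sequencing does not generate; you need the additional clause $(C\to T_{S_1}(\phi))\land(\neg C\to T_{S_2}(\phi))$ for if-then-else, exactly as the paper does with $(\phi_{cond}\land\phi_{true})\lor(\neg\phi_{cond}\land\phi_{false})$. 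Neither point threatens the argument, but both are needed for the statement as written.
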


We apply the backwards propagation to $\gamma$.

For $S = S_{2}$,
let $\phi_{false}$ be obtained from $\gamma$
by syntactically substituting $b$ to $\exists next^{-}.b$.
Let $\phi_{true}$ be obtained from $\gamma$ by
syntactically substituting\footnote{\label{foot-sub} In \cite{arXiv:CKSVZ13} $next$ would have been subtituted syntactically for an expression such as $next \setminus (b \times \top)  \cup (\exists next^-.b \times \exists next^-.\exists next^-.b)$. Since we do not allow role inclusions and $\times$ in out logic, we make the substitutions on the concepts which use $next$ instead.  }
$\exists next.D$ to
\[\stackrel{(\bullet)}{\overbrace{(b\sqcap\exists next.\exists next.D)}}\sqcup\stackrel{(\bullet \bullet)}{\overbrace{(\neg b\sqcap\exists next.D)}}\]
and $\exists next^-.D$ to
\[
 \stackrel{(\bullet)}{\overbrace{\left(\exists next^-.\exists next^-.(b\sqcap D)\right)}} \sqcup \stackrel{(\bullet \bullet)}{\overbrace{\left( \exists next^-.(D \sqcap \neg b)\right)}}
\]
for every concept $D$.
$(\bullet)$ expresses the new value of $next$ on $b$. $(\bullet \bullet)$ expresses the unchanged values of $next$.
Then
\begin{eqnarray*}
\theta_{S_{2},\gamma} & = & (\phi_{cond}\land\phi_{true})\lor(\neg\phi_{cond}\land\phi_{false})\\
\phi_{cond} & = & \exists next^{-}.b\sqsubseteq\neg null\sqcap \marked
\end{eqnarray*}

As a consequence we get that (D) is equivalent to (E):
\begin{itemize}
\item [(E)] There exist a $\tau$-structure $\nn$ such that $\nn\models\Phi_{start}^{2}\land\theta_{S_{2},\gamma}$.
\end{itemize}
Whether (E) holds reduces to the finite satisfiablity problem of $\L$.

In \cite{arXiv:CKSVZ13} a functional program analysis based on Theorem
\ref{th:back} was discussed. The memory model of \cite{arXiv:CKSVZ13}
is considerably more developed in order to allow allocation and deallocation
of memory. However the step between (D) and (E) was not considered
there, leading to a program analysis which is based on an existing shape analysis.
Since the goal of this section is not to develop a program analysis
but rather give the reader intuition via an example, we did not strive
to make the formulas here most efficient or small. We will consider
the development of a full program analysis based on this logic in
future work.

\paragraph{Content analysis}

The program in Fig. \ref{fig:prog} fits naturally to the type of
content analysis of \cite{arXiv:CKSVZ13}. Consider the information
system of a hotel. A partial UML of the system is depicted in Fig.
\ref{fig:UML-hotel}. The information system of the hotel contains
data about rooms, guests, bookings, payments, personnel, etc. A simple
implementation of the system from Fig. \ref{fig:UML-hotel} may contain
three disjoint lists for the rooms, the guests and the active bookings.
\cite{arXiv:CKSVZ13} shows how to express UML-like content invariants
in the context of programs with dynamic data structures.
\begin{figure}
\begin{center}\includegraphics{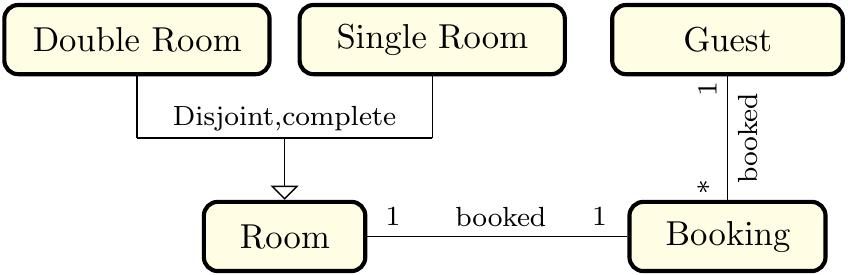}\end{center}

\caption{\label{fig:UML-hotel} }
\end{figure}

When a guest who has a booking has not checked-in until the next morning,
they are charged for one night and the rest of their booking is canceled.
The program in Fig. \ref{fig:prog2} is executed once a day before
the reception counter is opened. The program traverses the list $Bk$
of active bookings to search for non-arrivals. When such a booking
is found, it is removed from the list and the function $\mathsf{no-arrival}$,
which charges the guest for one night, is run.

\begin{figure}
\begin{lstlisting}
start: b := hBk;
loop:  while (b != null) {
         if (b.next != null &&
                       !b.next.chkd_in)
         {
            non-arrival(b);
            b.next := b.next.next;
         }
         else
            b := b.next;
        }
end:    ;
\end{lstlisting}

\caption{\label{fig:prog2}}
\end{figure}
In this case, the pre-condition is
\[
\Phi_{pre}=\Phi_{aList}^{hBk,next,Bk}\land hBk\sqsubseteq CheckedIn
\]

and the post-condition is given by $\Phi_{post}=\Phi_{post1}\land\Phi_{post2}$
where

\begin{eqnarray*}
\Phi_{post1} & = & \Phi_{aList}^{hBk,next,Bk}\land Bk\sqsubseteq CheckedIn\\
\Phi_{post2} & = & Bk\equiv Bk_{ghost}\sqcap CheckedIn_{ghost}
\end{eqnarray*}
The correctness proof for the program in Fig. \ref{fig:prog2} is
the same as the correctness proof of the program in Fig. \ref{fig:prog}.

\section{\label{se:main-section} Decision procedures for $\L$}

Let $\Phi=\varphi\land\bigwedge RE\land\bigwedge DI$ be an $\L$-formula.
Let $\tau=\NC\cup\NR\cup\NI$ be the vocabulary of $\varphi$.
We denote $RE=\{\Reacha{B_{1}}{S_{1}}{A_{1}},\ldots,\Reacha{B_{h}}{S_{h}}{A_{h}}\}$.

In this section we prove the existence a decision procedure of finite satisfiability and implication for $\L$:

\begin{thm-th:exp}
Let $\Phi_{i}=\varphi_{i}\land\bigwedge RE_{i}\land\bigwedge DI_{i}\in\L$, for $i=1,2$.
There are $\LoiqNOb$ formulas $\mu$ and $\kappa$ over an extended vocabulary such that\\
(1) $\Phi_{1}$ is finitely satisfiable iff $\mu$ is finitely satisfiable.\\
(2) $\Phi_{1}$ implies $\Phi_{2}$ iff $\kappa$ is not finitely satisfiable.
\end{thm-th:exp}

\paragraph{Outline of proof. }

$\formula{\Phi}$ already belongs to $\Loiq$.
The reachability requirements are missing in order to capture $\Phi$.
The models of $\formula{\Phi}$ can be partitioned into \emph{standard }and \emph{non-standard models}, depending on whether they satisfy $\bigwedge RE$.
In general, we cannot augment $\formula{\Phi}$ in $\Loiq$ to eliminate the non-standard models, since reachability is not expressible
in $\Loiq$.
However, we can augment it so that it is guaranteed that whenever a non-standard model exists, so does a standard model.
To do so, we define \emph{semi-connectedness}, which is a weaker requirement than satisfying $\bigwedge RE$.
A model is semi-connected if every element of its universe which should be reachable according to some $\Reacha{B_{i}}{S_{i}}{A_{i}}$ and is not, is reachable from a cycle in $A_{i}$.
We show that semi-connectedness is expressible in $\Loiq$.

Under certain conditions, it is possible to repeatedly apply an operation $\rhd$, which turns non-standard but semi-connected models into standard models, by eliminating the said cycles.
The existence of a non-standard semi-connected model then implies the existence of a standard model.
A sufficient condition under which semi-connected models can be turned to standard models using $\rhd$ is that the elements in $A_{i}$ admit so-called \emph{useful labelings}.
Useful labelings mimic an order relation on the types of the elements in $A_{i}$ and guarantee that applying the operation $\rhd$ makes progress towards a standard model.
We show that having useful labelings is expressible in $\Loiq$.

As a consequence we get a decision procedure for satisfiability of
$\Phi$, which amounts to adding to $\formula{\Phi}$ the requirements that models are semi-connected and have useful labelings.
The resulting $\LoiqNOb$-formula is satisfiable iff $\Phi$ is.
A decision procedure for implication is obtained as consequence.
Decision procedures which are tight in terms of complexity are given in Section \ref{se:poly}.
In Section \ref{se:reduce-to-formula} we give simpler but complexity-wise suboptimal decision procedures.
The decision procedures in Section \ref{se:poly} follow the same plan, and differ only in the construction and sizes of the formulae expressing the existence of useful labelings.

\subsection{Types and the operation $\rhd$}

We write $C\in\varphi$ if there exists a concept $D$ and an inclusion
$C\sqsubseteq D$ or $D\sqsubseteq C$ which occurs in $\varphi$.
$C$ and $D$ need not be atomic.
\begin{definition}
[$\rhd$] Let $\mm$ be a $\tau$-structure.
Let $a_{0},a_{1}\in M$ and $r\in\NF$ and $\mathfrak{t} = (a_{0},a_{1},r)$.
Let $\op{\mm}{\mathfrak{t}}$ be the structure such that $\mm$ and $\op{\mm}{\mathfrak{t}}$ have the same universe $M$ and the same interpretations of every atomic concept, nominal and atomic role except for $r$, and $r^{\op{\mm}{\mathfrak{t}}} = r^{\mm} \setminus \{(a_{i},b) \mid (a_{i},b) \in r \text{ and } i \in \{0,1\}\} \cup$ $\{(a_{1-i},b) \mid (a_{i},b) \in r \text{ and } i \in \{0,1\}\}$.
\end{definition}

For the main property of $\rhd$ we need the notion
of types:
\begin{definition}
[Types]
We define $\TYPES_{\varphi} = \textbf{2}^{ \{C \mid C \in \varphi\} }$ as the powerset over the set of concepts appearing in $\varphi$.
Let $\mm$ be a $\tau$-structure $\mm$ and $u \in M$.
We denote by $\ovtpn_{\mm}^{\varphi}(u) \in \TYPES_{\varphi}$ the set of concepts $C \in \varphi$ such that $u \in C^{\mm}$.
We call $\ovtpn_{\mm}^{\varphi}(u)$ the \emph{type} of $u$.
We sometimes omit the subscript $\mm$ when it is clear from the context.
\end{definition}
We note that the size of $\TYPES_{\varphi}$ is at most $2^{|\varphi|}$.

\begin{lemma}
\label{lem:types}
\label{lem:1-type-to-2-type-a-}
\label{lem:1-type-to-2-type-b-}
Let $\mm_{1}$ and $\mm_{2}$ be two $\tau$-structures with the same universe $M$.
If for all $u\in M$ we have $\ovtpn_{\mm_{1}}^{\varphi}(u)=\ovtpn_{\mm_{2}}^{\varphi}(u)$,
then $\mm_{1}$ and $\mm_{2}$ agree on $\varphi$.
\end{lemma}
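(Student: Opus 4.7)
The plan is to derive the lemma by first showing that the hypothesis forces $\mm_1$ and $\mm_2$ to coincide on every concept that appears in $\varphi$, and then propagating that agreement through the Boolean structure of $\varphi$. The observation to make at the outset is that the definition of $\ovtpn_{\mm}^{\varphi}(u)$ records, for each concept $C$ with $C\in\varphi$, exactly the bit of information ``$u\in C^{\mm}$''. So the hypothesis is tailored to make the extensions of these concepts coincide on the shared universe.

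Concretely, fix any $C$ with $C\in\varphi$. For $i\in\{1,2\}$ we have $u\in C^{\mm_i}$ iff $C\in\ovtpn_{\mm_i}^{\varphi}(u)$ by definition of $\ovtpn$. Since by hypothesis $\ovtpn_{\mm_1}^{\varphi}(u)=\ovtpn_{\mm_2}^{\varphi}(u)$ for every $u\in M$, this yields $C^{\mm_1}=C^{\mm_2}$. Now every inclusion syntactically occurring in $\varphi$ is of the form $C\sqsubseteq D$ with both $C$ and $D$ lying in $\{E\mid E\in\varphi\}$, so both $C^{\mm_1}=C^{\mm_2}$ and $D^{\mm_1}=D^{\mm_2}$ hold. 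The semantic clause $\mm\models C\sqsubseteq D$ iff $C^{\mm}\subseteq D^{\mm}$ then gives $\mm_1\models C\sqsubseteq D$ iff $\mm_2\models C\sqsubseteq D$, i.e.\ the two structures agree on each atomic subformula of $\varphi$.

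To finish, I would perform a straightforward induction on the Boolean structure of $\varphi$: the base case is the previous paragraph, and the inductive step for $\land$, $\lor$ and $\neg$ follows immediately from the corresponding semantic clauses, since truth of $\psi_1\land\psi_2$, $\psi_1\lor\psi_2$ and $\neg\psi_1$ in $\mm_i$ depends only on the truth of $\psi_1$ and $\psi_2$ in $\mm_i$. This delivers $\mm_1\models\varphi$ iff $\mm_2\models\varphi$, which is what ``agree on $\varphi$'' means.

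There is no genuine obstacle here; the only conceptual point worth flagging is that $\TYPES_{\varphi}$ tracks membership only in concepts that appear as one of the two sides of an inclusion of $\varphi$, and not in their subconcepts. This suffices precisely because the semantics of an inclusion $C\sqsubseteq D$ treats $C$ and $D$ as opaque sets $C^{\mm}$ and $D^{\mm}$, so there is no need to recurse into their internal structure.
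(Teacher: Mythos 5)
Your proof is correct and follows essentially the same route as the paper's: reduce agreement on $\varphi$ to agreement on its constituent inclusions via the Boolean structure, then observe that the type hypothesis forces $C^{\mm_1}=C^{\mm_2}$ for every concept $C$ occurring as a side of an inclusion in $\varphi$. The closing remark about not needing to recurse into subconcepts is a nice explicit flag of a point the paper leaves implicit, but the argument is the same.
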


\begin{proof}
$\varphi$ is a Boolean combination of inclusion assertions.
Therefore, it is enough to show $\mm_{1}\models C\sqsubseteq D$ iff $\mm_{2}\models C\sqsubseteq D$ for all of the inclusion assertions $C\sqsubseteq D$ which occur
in $\varphi$.
Let $C\sqsubseteq D$ be such an inclusion assertion.
For $u\in M$, $u\in C^{\mm_{1}}$ iff $u\in C^{\mm_{2}}$, and $u\in D^{\mm_{1}}$
iff $u\in D^{\mm_{2}}$.
Hence, $C^{\mm_{1}}=C^{\mm_{2}}$ and $D^{\mm_{1}}=D^{\mm_{2}}$,
implying $\mm_{1}\models C\sqsubseteq D$ iff $\mm_{2}\models C\sqsubseteq D$.
\end{proof}

The crucial property of $\rhd$ is that $\mm$ and $\op{\mm}{\mathfrak{t}}$ agree on $\varphi$ if $a_{0}$ and $a_{1}$ have the same type: 

\begin{lemma}
\label{lem:op-on-concepts} \label{lem:mm-op--}
Let $\mm$ be a $\tau$-structure, let $a_{0}, a_{1} \in M$ such that $\ovtpn_{\mm}^{\varphi}(a_{0}) = \ovtpn_{\mm}^{\varphi}(a_{1})$,
let $r \in \NF$, and let $\mathfrak{t}=(a_{0},a_{1},r)$.
\begin{enumerate}[(1)]
 \item $C^{\mm}=C^{\op{\mm}{\mathfrak{t}}}$ for all $C\in\varphi$.
\end{enumerate}
Consequently:
\begin{itemize}
 \item[(2)]  For every $u\in M$, $\ovtpn_{\mm}^{\varphi}(u) = \ovtpn_{\op{\mm}{\mathfrak{t}}}^{\varphi}(u)$.
 \item[(3)] $\mm\models \varphi$ iff $\op{\mm}{\mathfrak{t}}\models \varphi$.
 \end{itemize}
\end{lemma}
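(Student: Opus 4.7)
My plan is to establish (1) by structural induction on $C$, then to derive (2) by unfolding the definition of $\ovtpn^{\varphi}$, and finally to obtain (3) by invoking Lemma~\ref{lem:types}. I would carry out the induction on the subconcept closure of $\{C\mid C\in\varphi\}$, so that the hypothesis $\ovtpn_{\mm}^{\varphi}(a_{0})=\ovtpn_{\mm}^{\varphi}(a_{1})$ can be applied to every intermediate concept encountered, not only to the top-level ones.

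The easy cases of (1) are routine: atomic concepts, nominals, and every atomic role other than $r$ are interpreted identically in $\mm$ and $\op{\mm}{\mathfrak{t}}$, so the base cases and the Boolean cases ($\neg$, $\sqcap$, $\sqcup$) follow immediately from the IH; for $\exists s.D$, $\exists^{\leq n}s.D$ and their inverse variants with $s\neq r$, the role itself is untouched, and the IH on $D$ suffices. The substantive cases involve $r$ or $r^{-}$, where I would split on whether $u\in\{a_{0},a_{1}\}$. For the outgoing case $\exists r.D$ (and its counting variant): if $u\notin\{a_{0},a_{1}\}$ the set of $r$-successors of $u$ is unchanged and the IH on $D$ gives the equality; if $u=a_{0}$ (the case $u=a_{1}$ is symmetric), the $r$-successors of $a_{0}$ in $\op{\mm}{\mathfrak{t}}$ are by construction exactly the former $r$-successors of $a_{1}$ in $\mm$, so $a_{0}\in C^{\op{\mm}{\mathfrak{t}}}\Leftrightarrow a_{1}\in C^{\mm}$, and the type-equality hypothesis applied to the concept $C$ itself (which is in the type set by the choice of induction domain) closes the case.

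The main obstacle I anticipate is the inverse-role case $\exists r^{-}.D$ (and its counting variant), where the predecessor set of $u$ can change nontrivially. I would analyze it by introducing the transposition $\phi$ on $M$ that swaps $a_{0}$ with $a_{1}$ and is the identity elsewhere. A direct case split on whether a potential predecessor lies in $\{a_{0},a_{1}\}$ shows that the set of $r$-predecessors of any $u\in M$ in $\op{\mm}{\mathfrak{t}}$ equals the image under $\phi$ of the set of $r$-predecessors of $u$ in $\mm$. By the type-equality applied to the subconcept $D$, we have $a_{0}\in D^{\mm}\Leftrightarrow a_{1}\in D^{\mm}$, so $D^{\mm}$ is $\phi$-invariant; combined with the IH $D^{\op{\mm}{\mathfrak{t}}}=D^{\mm}$, this gives that both the existence of a $D$-witness and its cardinality among the $r$-predecessors of $u$ are preserved, yielding (1) in this case as well. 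Notice that here, in contrast to the outgoing case, the argument reduces to the type-equality on the subconcept $D$ rather than on the full concept $\exists r^{-}.D$. Once (1) is in place, (2) follows since $C\in\ovtpn_{\mm}^{\varphi}(u)\Leftrightarrow u\in C^{\mm}\Leftrightarrow u\in C^{\op{\mm}{\mathfrak{t}}}\Leftrightarrow C\in\ovtpn_{\op{\mm}{\mathfrak{t}}}^{\varphi}(u)$, and (3) follows by applying Lemma~\ref{lem:types} to $\mm$ and $\op{\mm}{\mathfrak{t}}$, which share the same universe and, by (2), assign every element the same type.
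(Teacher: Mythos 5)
Your proof is correct and follows essentially the same route as the paper's: structural induction on concepts, with the only nontrivial cases being $\exists r.D$ / $\exists^{\leq n}r.D$ (handled via the swapped successor sets plus type-equality on the full concept) and the inverse cases (handled via type-equality on the subconcept $D$); your transposition $\phi$ is just a repackaging of the paper's cardinality argument $|S_1(u)|=|S_2(u)|$ for the predecessor sets. Your explicit remark that the induction must run over the subconcept closure so that type-equality is available for intermediate concepts is, if anything, a point the paper glosses over.
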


Statement (1) of Lemma \ref{lem:op-on-concepts} is proven by induction on the construction of concepts in $\varphi$.
(2) follows directly from (1). 
(3) follows using Lemma \ref{lem:types}.
See Appendix \ref{app:proofs-lem:op-on-concepts} for a detailed proof.

\subsection{Semi-connectedness and useful labelings}

Here we define semi-connectedness and useful labelings exactly and prove that they capture reachability (Lemma~\ref{lem:connected-implies-useful-order--}).

\begin{definition}
[Semi-connected Structure]
\label{def:Po-conn}
For every reachability assertion $\beta_{h'} = \Reacha{B_{h'}}{S_{h'}}{A_{h'}}$, we write $D_{h'}^{\mm}$ for \emph{the directed graph $D_{\beta_{h'}}^{\mm}$}.
Let $\mm$ be a $\tau$-structure.
$\mm$ is $\Phi$-semi-connected, if
(I) $\mm \models \formula{\Phi}$ and
(II) for every $\Reacha{B_{h'}}{S_{h'}}{A_{h'}}\in RE$ and $u\in A_{h'}^{\mm}$, either $u$ is reachable in $D_{h'}^{\mm}$ from $B_{h'}^{\mm}$ or $u$ is reachable from a cycle.
\end{definition}

Observe that if $\mm$ is $\Phi$-semi-connected, then $\mm\models\bigwedge RE\land\bigwedge DI$
iff $\mm$ satisfies the following strengthening of {\it (II)}:
{\it for every $\Reacha{B_{h'}}{S_{h'}}{A_{h'}}\in RE$ and $u\in A_{h'}^{\mm}$, $u$ is reachable from $B_{h'}^{\mm}$}.
The $h'$-useful labelings we define next mimic linear orderings on the types of the elements in $A_{h'}^{\mm}$ that can be obtained from a Depth-First Search (DFS) run on $D_{h'}^{\mm}$ starting from elements in $B_{h'}^{\mm}$.
\begin{definition}[Useful Labeling]
\label{def:useful-labeling}
Let $\mm$ be a $\tau$-structure.
Let $1\leq h'\leq h$.
A function $f_{h'}: A_{h'}^{\mm} \rightarrow [1,|\TYPES_{\varphi}|]$ is a \emph{$h'$-useful labeling for $\mm$}, if (1) $f_{h'}(u) = f_{h'}(v)$ implies $\ovtpn_{\mm}^{\varphi}(u) = \ovtpn_{\mm}^{\varphi}(v)$ for all $u,v \in A_{h'}^{\mm}$ and if (2) for every element $u\in A_{h'}^{\mm}$, either $u \in B_{h'}^{\mm}$,
or there exist elements $v,w\in A_{h'}^{\mm}$ such that $f_{h'}(u) = f_{h'}(v)$, $f_{h'}(w) < f_{h'}(v)$ and the graph $D_{h'}^{\mm}$ has an edge $(w,v)$.
\end{definition}

\begin{lemma}
\label{lem:use-ord}
Let $\mm$ be a $\tau$-structure.
If $\mm \models \Phi$, then there are $h'$-useful labelings $f_{h'}$ for $\mm$, for every $1\leq h'\leq h$.
\end{lemma}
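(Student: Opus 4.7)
The plan is to construct $f_{h'}$ explicitly from a BFS on $D_{h'}^{\mm}$ rooted at $B_{h'}^{\mm}$. Because $\mm\models\Phi$ entails the reachability assertion $\Reacha{B_{h'}}{S_{h'}}{A_{h'}}$, every element of $A_{h'}^{\mm}$ is reachable in $D_{h'}^{\mm}$ from $B_{h'}^{\mm}$. First I would run BFS from $B_{h'}^{\mm}$ in $D_{h'}^{\mm}$ to obtain, for each $u\in A_{h'}^{\mm}$, a distance $d(u)\in\mathbb{N}$, and when $u\notin B_{h'}^{\mm}$ a BFS-predecessor $p(u)\in A_{h'}^{\mm}$ with $(p(u),u)\in D_{h'}^{\mm}$ and $d(p(u))=d(u)-1$.

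Next I would rank the distinct types occurring in $A_{h'}^{\mm}$ by their first BFS-appearance: for each type $t$ present in $A_{h'}^{\mm}$, set $\mathit{first}(t)=\min\{d(u) : \ovtpn_{\mm}^{\varphi}(u)=t\}$, sort the types in ascending order of $\mathit{first}$, and assign them ranks $1,2,\ldots$. Since at most $|\TYPES_{\varphi}|$ distinct types can occur, the ranks fit into $[1,|\TYPES_{\varphi}|]$, and setting $f_{h'}(u)$ equal to the rank of $\ovtpn_{\mm}^{\varphi}(u)$ yields Condition~(1) of Definition~\ref{def:useful-labeling} immediately. For Condition~(2), given $u\in A_{h'}^{\mm}\setminus B_{h'}^{\mm}$ of type $t$, I would pick the representative $v^*\in A_{h'}^{\mm}$ of type $t$ with $d(v^*)=\mathit{first}(t)$ and take $w=p(v^*)$; then $\mathit{first}(\ovtpn_{\mm}^{\varphi}(w))\leq d(w)<\mathit{first}(t)$ gives $f_{h'}(w)<f_{h'}(v^*)=f_{h'}(u)$, and $(w,v^*)\in D_{h'}^{\mm}$ is the required witness.

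The main obstacle is the boundary case in which the chosen representative $v^*$ happens to lie inside $B_{h'}^{\mm}$ (equivalently, $\mathit{first}(t)=0$), in which case $v^*$ has no BFS-predecessor and the above argument cannot be applied directly. To handle this I would rely on the type alphabet being rich enough to separate membership in $B_{h'}^{\mm}$: because $B_{h'}$ and $A_{h'}$ occur in the reachability assertion and are forced into $\formula{\Phi}$ through $CO(RE)$, one may assume (after taking $\varphi$ to be $\formula{\Phi}$, or augmenting $\varphi$ in a trivial way) that $B_{h'}$ and $A_{h'}$ appear in $\varphi$, so that $B_{h'}^{\mm}$-elements and non-$B_{h'}^{\mm}$-elements always have syntactically different types in $\TYPES_{\varphi}$. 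Consequently the representative $v^*$ of the type of any $u\notin B_{h'}^{\mm}$ is itself outside $B_{h'}^{\mm}$, and the BFS-based construction above delivers a valid $h'$-useful labeling.
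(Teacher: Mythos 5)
Your proof is correct and follows essentially the same route as the paper's own proof, which runs a DFS instead of a BFS and likewise assigns labels by order of first appearance of types. It is worth noting that the boundary case you isolate --- the first-visited representative of a type lying in $B_{h'}^{\mm}$ --- is one the paper's proof silently skips (its case (2) invokes case (1) on $v$ without checking $v\notin B_{h'}^{\mm}$), so your explicit repair via $CO(RE)$ forcing $B_{h'}$ into the type alphabet is a genuine, if minor, improvement.
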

\begin{proof}
We assume $\mm$ satisfies $\Phi = \varphi \land \bigwedge RE \land \bigwedge DI$.
We fix some $1\leq h'\leq h$.
We have that $\Reacha{B_{h'}}{S_{h'}}{A_{h'}}$ holds for $\mm$.
We explain how to build a $h'$-useful labeling for $\mm$ by executing a Depth-First Search (DFS) from the elements in $B_{h'}^{\mm}$.
If an element $u$ is visited during the DFS, $u$ is assigned a number according to its type $\ovtpn_{\mm}^{\varphi}(u)$.
If the type $\ovtpn_{\mm}^{\varphi}(u)$ has not appeared during the DFS yet, $u$ is assigned the smallest number in $[1,|\TYPES_{\varphi}|]$ that has not been used so far;
if the type has already appeared, $u$ is assigned the number associated with this type.
Let $f_{h'}$ be the labeling resulting from this process.
We show that $f_{h'}$ is a $h'$-useful labeling for $\mm$.

By construction $f_{h'}$ is a function from $A_{h'}^{\mm}$ to $[1,|\TYPES_{\varphi}|]$.
Moreover, for all $u,v \in A_{h'}^{\mm}$ is holds that $f_{h'}(u) = f_{h'}(v)$ iff $\ovtpn_{\mm}^{\varphi}(u) = \ovtpn_{\mm}^{\varphi}(v)$ (*).
It remains to show that for every element $u\in A_{h'}^{\mm}$, either $u \in B_{h'}^{\mm}$,
or there exist elements $v,w\in A_{h'}^{\mm}$ such that $f_{h'}(u) = f_{h'}(v)$, $f_{h'}(w) < f_{h'}(v)$ and the graph $D_{h'}^{\mm}$ has an edge $(w,v)$:
Let $u \in A_{h'}^{\mm} \setminus B_{h'}^{\mm}$.
We proceed by a case distinction:
(1) The type $\ovtpn_{\mm}^{\varphi}(u)$ of $u$ has not been seen during the DFS before $u$ is visited.
Because of $u \not\in B_{h'}^{\mm}$ there is a predecessor $w \in A_{h'}^{\mm}$ of $u$ in $D_{h'}^{\mm}$ through which $u$ has been reached during the DFS.
Because $w$ has been reached before $u$ and because $u$ is assigned the smallest number in $[1,|\TYPES_{\varphi}|]$ that has not been used so far, we have $f_{h'}(w) < f_{h'}(u)$.
(2) The type $\ovtpn_{\mm}^{\varphi}(u)$ of $u$ has already been seen during the DFS before $u$ is visited.
Let $v \in A_{h'}^{\mm}$ be the first node with type $\ovtpn_{\mm}^{\varphi}(v) = \ovtpn_{\mm}^{\varphi}(u)$ to be visited during the DFS.
By case (1) there is a predecessor $w$ of $v$ in $D_{h'}^{\mm}$ with
$f_{h'}(w) < f_{h'}(v)$.
By (*) we have $f_{h'}(u) = f_{h'}(v)$.
Thus, the claim follows.
\end{proof}


This gives direction $\Rightarrow$ of the following lemma:
\begin{lemma}
\label{lem:connected-implies-useful-order--} \label{lem:conn-equiv-semi-conn-plus-useful-ord--}
$\Phi=\varphi\land\bigwedge RE\land\bigwedge DI$ is satisfiable iff
there is a $\Phi$-semi-connected structure with $h'$-useful labelings for every $1\leq h'\leq h$.
\end{lemma}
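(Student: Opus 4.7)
The $(\Rightarrow)$ direction is immediate from Lemma~\ref{lem:use-ord} combined with the observation that any $\mm \models \bigwedge RE$ is automatically $\Phi$-semi-connected, since reachability of $u$ from $B_{h'}^\mm$ is a special case of the semi-connectedness disjunction.

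For the $(\Leftarrow)$ direction, let $\mm$ be $\Phi$-semi-connected with $h'$-useful labelings $f_{h'}$ for every $h'$. My plan is to turn $\mm$ into a model of $\Phi$ by iterated application of $\rhd$. Introduce the progress measure
\[\Psi(\mm) \;:=\; \sum_{h'=1}^{h}\bigl|A_{h'}^\mm \setminus R_{h'}^\mm\bigr|,\]
where $R_{h'}^\mm$ is the set of elements of $A_{h'}^\mm$ reachable from $B_{h'}^\mm$ in $D_{h'}^\mm$. If $\Psi(\mm)=0$ then every reachability assertion holds and, combined with clause (I) of semi-connectedness, $\mm\models\Phi$. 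If $\Psi(\mm)>0$, pick $(h',u)$ with $u \in A_{h'}^\mm \setminus R_{h'}^\mm$ minimising $f_{h'}(u)$; since $u \notin B_{h'}^\mm$, the useful-labeling clause supplies $v\in A_{h'}^\mm$ with $\ovtpn_\mm^\varphi(v)=\ovtpn_\mm^\varphi(u)$ and $w\in A_{h'}^\mm$ with $f_{h'}(w)<f_{h'}(v)$ together with an edge $(w,v)\in D_{h'}^\mm$ via some role $r\in S_{h'}$. By minimality of $f_{h'}(u)$ we have $w\in R_{h'}^\mm$, hence $v\in R_{h'}^\mm$. A suitable $\rhd$-step built out of these witnesses (swapping the $r$-out-edges of a type-equivalent pair chosen so as to re-route a predecessor edge of $u$ onto the reachable side) yields $\mm'$; Lemma~\ref{lem:op-on-concepts} guarantees $\mm'\models\formula{\Phi}$ and preserves all $\varphi$-types, so the labels $f_{h'}$ remain type-consistent in $\mm'$.

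The main obstacle is verifying that such a $\rhd$-step exists, strictly decreases $\Psi$, and preserves both $\Phi$-semi-connectedness and the existence of useful labelings. Because $\rhd$ only modifies the outgoing $r$-edges of two type-equivalent nodes, whereas an element's reachability depends on its incoming edges, attaching $u$ to the reachable component requires exploiting the cycle in $D_{h'}^\mm$ supplied by semi-connectedness: one uses a cycle-and-path analysis to exhibit a predecessor-level swap between the unreachable side (a predecessor of $u$ on that cycle) and the reachable side (a sibling of the same type), so that after $\rhd$ the cycle is either absorbed into the reachable component or dismantled, moving at least one node into $R_{h'}$ without removing any. Semi-connectedness of $\mm'$ is preserved because every node remains reachable either from $B_{h'}^{\mm'}$ or from some cycle in $D_{h'}^{\mm'}$, obtained from the old one by rerouting through the type-equivalent partner. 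The useful-labeling witness $(w_x,v_x)$ of a node $x$ can only be damaged when the swapped role was used and $w_x$ lies in the swapped pair, a situation locally repaired by the type equality of the two swapped nodes which lets us reassign $w_x$ or $v_x$ to the partner. Since $\Psi$ is a non-negative integer that strictly decreases at each step, the iteration terminates and the resulting structure is a model of $\Phi$.
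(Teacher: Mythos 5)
Your forward direction matches the paper. The backward direction has a genuine gap, and it sits exactly where you flag "the main obstacle": neither the existence of a suitable $\rhd$-step nor the strict decrease of your measure $\Psi$ can be established as stated. First, the witnesses you extract give a \emph{reachable} type-sibling $v$ for the minimal-label unreachable node $u$, but $u$ need not lie on a cycle, and swapping the out-edges of $u$ and $v$ leaves $u$'s incoming edges (and hence its unreachability) untouched; worse, it can disconnect nodes that were reachable only through $v$'s old out-edge, so $\Psi$ can \emph{increase}. To act on the cycle from which $u$ is reachable you would instead need a reachable type-sibling for some node $a_1$ \emph{on that cycle}, and the useful-labeling clause does not supply one: the witness pair $(w,v)$ for $a_1$ only satisfies $f_{h'}(w)<f_{h'}(a_1)$, and since $f_{h'}(a_1)$ may be large, $w$ (hence $v$) may itself be unreachable. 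Second, even when the right swap exists, it need not put any new element into $R_{h'}^\mm$: in the paper's step the cycle containing $a_1$ becomes reachable \emph{from $w$}, but $w$ itself may still be unreachable from $B_{h'}^\mm$, so $|A_{h'}^\mm\setminus R_{h'}^\mm|$ can stay constant for many consecutive steps. Your claim that each step moves ``at least one node into $R_{h'}$ without removing any'' is therefore false for the natural (and for the paper's) choice of swap.

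The paper's proof avoids both problems with machinery your proposal omits: it defines a \emph{base} $X$ (a set with $\reach{D_{h'}^{\mm}}{X}=A_{h'}^\mm$) and the measure $\val_f(X)=\sum_{x\in X\setminus B_{h'}^\mm}f(x)$, minimized over bases. Lemma~\ref{lem:basis-circle} guarantees that every non-$B$ element of a minimum-value base lies on a cycle, which is exactly the node $a_1$ you were missing; the usefulness of $f_{h'}$ applied to $a_1$ (not to $u$) supplies the partner $a_0$ and the predecessor $w$; and the swap replaces the base element $a_1$ by $w$ with $f_{h'}(w)<f_{h'}(a_1)$, so the value strictly decreases \emph{even when reachability from $B_{h'}^\mm$ does not improve}. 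Termination at value $0$ then yields connectedness via Lemma~\ref{lem:value-zero}. If you want to keep your outline, you must replace $\Psi$ by this base-value measure (or prove an analogous potential-function argument) and derive the swap from a minimum-value base rather than from the minimal unreachable node.
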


Next, we introduce definitions that will be needed for the proof of direction $\Leftarrow$.

Let $G=(V,E)$ be a directed graph.
$\reach{G}{X} = Y$ denotes the set of  elements $Y \subseteq V$ that are \emph{reachable} from $X \subseteq V$ in $G$.

\begin{definition}
[Base and Values]
Let $f$ be a $h'$-useful labeling for $\mm$.
We call a set $X \subseteq A_{h'}^\mm$ a \emph{base} for $D_{h'}^{\mm}$, if
$\reach{D_{h'}^{\mm}}{X} = A_{h'}^\mm$.
We call a member $x$ of a base $X$ a \emph{base element}.
We define the \emph{value} $\val_f(X) = \sum_{x \in X \setminus B_{h'}^\mm} f(x)$ \emph{of a base} $X$ to be the sum over the label values of the base elements of $X$ that are not in $B_{h'}^\mm$.
We define the \emph{value} $\val_f(D_{h'}^{\mm}) = \min \{\val_f(X) \mid X \text{ is a base for } D_{h'}^{\mm} \}$ \emph{of the graph} $D_{h'}^{\mm}$ to be the minimum of the values of its bases.
We omit the subscript $f$ in $\val(X)$ and $\val(D_{h'}^{\mm})$ when $f$ is clear from the context.
\end{definition}

Intuitively, values $\val_f(D_{h'}^{\mm})$ capture how close the graph $D_{h'}^{\mm}$ is to being connected:

\begin{lemma}
\label{lem:value-zero}
Let $f$ be a $h'$-useful labeling for $\mm$.
$\val_f(D_{h'}^{\mm}) = 0$ iff $D_{h'}^{\mm}$ is connected.
\end{lemma}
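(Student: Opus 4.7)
The proof plan is a short double-implication argument that leverages two simple facts: (i) $B_{h'}^{\mm} \subseteq A_{h'}^{\mm}$ because of the containment assertion $B_{h'} \sqsubseteq A_{h'}$ in $CO(RE)$ which is part of $\formula{\Phi}$, and (ii) a $h'$-useful labeling $f$ takes values in $[1, |\TYPES_{\varphi}|]$, so $f(x) \geq 1$ for every $x \in A_{h'}^{\mm}$. In particular, $\val_f(X) \geq 0$ for every base $X$, and $\val_f(X) = 0$ is possible only when $X \setminus B_{h'}^{\mm} = \emptyset$.

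For direction $(\Leftarrow)$, suppose $D_{h'}^{\mm}$ is connected, that is, every vertex of $A_{h'}^{\mm}$ is reachable in $D_{h'}^{\mm}$ from a vertex of $B_{h'}^{\mm}$. Then $B_{h'}^{\mm}$ itself qualifies as a base: $\reach{D_{h'}^{\mm}}{B_{h'}^{\mm}} = A_{h'}^{\mm}$. Its value is $\val_f(B_{h'}^{\mm}) = \sum_{x \in B_{h'}^{\mm} \setminus B_{h'}^{\mm}} f(x) = 0$. Combined with $\val_f \geq 0$ for every base, this gives $\val_f(D_{h'}^{\mm}) = 0$.

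For direction $(\Rightarrow)$, suppose $\val_f(D_{h'}^{\mm}) = 0$, and pick a base $X$ attaining this minimum. From $\sum_{x \in X \setminus B_{h'}^{\mm}} f(x) = 0$ and $f(x) \geq 1$, we conclude $X \setminus B_{h'}^{\mm} = \emptyset$, i.e.\ $X \subseteq B_{h'}^{\mm}$. Since $X$ is a base, $A_{h'}^{\mm} = \reach{D_{h'}^{\mm}}{X} \subseteq \reach{D_{h'}^{\mm}}{B_{h'}^{\mm}}$, so every vertex of $D_{h'}^{\mm}$ is reachable from $B_{h'}^{\mm}$, which is exactly the connectedness condition.

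There is no real obstacle here; the only point to be careful about is making the positivity of $f$ explicit (the range $[1, |\TYPES_{\varphi}|]$ excludes $0$), because the equivalence $\val_f = 0 \iff X \subseteq B_{h'}^{\mm}$ breaks if labels could be zero. Everything else is bookkeeping with the definitions of base, value, and connectedness.
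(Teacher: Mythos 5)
Your proof is correct and follows essentially the same route as the paper's: the $(\Leftarrow)$ direction exhibits $B_{h'}^{\mm}$ as a base of value $0$, and the $(\Rightarrow)$ direction uses the positivity of $f$ to force a value-$0$ base to lie inside $B_{h'}^{\mm}$. Your explicit remark that $B_{h'}^{\mm}\subseteq A_{h'}^{\mm}$ (via $CO(RE)$) is needed for $B_{h'}^{\mm}$ to qualify as a base is a small but welcome addition the paper leaves implicit.
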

\begin{proof}
Assume $D_{h'}^{\mm}$ is connected.
Then $B_{h'}^\mm$ is a base for $D_{h'}^{\mm}$.
Thus $\val(B_{h'}^\mm) = 0$, which implies $\val(D_{h'}^{\mm}) = 0$.

Assume $\val(D_{h'}^{\mm}) = 0$.
Then there is a base $X$ for $D_{h'}^{\mm}$ with $\val(X) = 0$.
Because $f$ maps all nodes to positive values, we must have $X \subseteq B_{h'}^\mm$.
Thus, every node in $D_{h'}^{\mm}$ is reachable from $B_{h'}^\mm$.
\end{proof}

The following lemma states a property of bases in semi-connected structures:

\begin{lemma}
\label{lem:basis-circle}
Let $\mm$ be a structure that is $\Phi$-semi-connected.
Let $f$ be a $h'$-useful labeling for $\mm$.
Let $X$ be a base for $D_{h'}^{\mm}$ with $\val_f(X) = \val_f(D_{h'}^{\mm})$.
Then every base element $x \in X$ either belongs to $B_{h'}^\mm$ or to a cycle of $D_{h'}^{\mm}$.
\end{lemma}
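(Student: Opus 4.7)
My plan is to prove the lemma by contradiction: assume there exists a base element $x \in X$ with $x \notin B_{h'}^{\mm}$ that is not on any cycle of $D_{h'}^{\mm}$, and exhibit a base of strictly smaller value than $X$, contradicting the minimality of $\val_f(X)$. I would split according to the $\Phi$-semi-connectedness of $\mm$. In the easy case, $x$ is reachable in $D_{h'}^{\mm}$ from some $b \in B_{h'}^{\mm}$; then $Y = (X \setminus \{x\}) \cup \{b\}$ is a base because $b$ reaches everything $x$ did, and since $b \in B_{h'}^{\mm}$ contributes nothing to $\val_f$ while $f(x) \geq 1$, we obtain $\val_f(Y) < \val_f(X)$.

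The harder case is when $x$ is not reachable from $B_{h'}^{\mm}$ and is therefore reachable from some cycle $C$ of $D_{h'}^{\mm}$. Here the plan is to define the ancestor set $P_x = \{u \in A_{h'}^{\mm} \mid u \leadsto x \text{ in } D_{h'}^{\mm}\}$ and exploit two easy observations: $P_x$ is closed under taking predecessors in $D_{h'}^{\mm}$ (any edge entering $P_x$ originates in $P_x$), and $P_x \cap B_{h'}^{\mm} = \emptyset$ (otherwise $x$ would be reachable from $B_{h'}^{\mm}$). Every element of the cycle $C$ reaches $x$ through $C$, so $C \subseteq P_x$; and since $X$ is a base, some $x' \in X$ reaches $C$, which as an ancestor of $C$ lies in $P_x$, so $X \cap P_x \neq \emptyset$.

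The main obstacle is showing that $X \cap P_x$ contains an element different from $x$ itself. My plan is to argue by contradiction: if $X \cap P_x = \{x\}$, then $x$ alone must reach $C$, yet every $c \in C \subseteq P_x$ also reaches $x$, so concatenating paths $x \leadsto c$ and $c \leadsto x$ produces a closed walk of positive length through $x$ in the directed graph $D_{h'}^{\mm}$ --- which in any directed graph forces $x$ to lie on a cycle, contradicting the standing hypothesis. Consequently there is some $u \in X \cap P_x$ with $u \neq x$; since $u \leadsto x$, the node $x$ is reachable from $X \setminus \{x\}$, so $\reach{D_{h'}^{\mm}}{X \setminus \{x\}} \supseteq \reach{D_{h'}^{\mm}}{X} = A_{h'}^{\mm}$, making $X \setminus \{x\}$ itself a base of value $\val_f(X) - f(x) < \val_f(X)$, again contradicting minimality.
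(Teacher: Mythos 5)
Your proof is correct and follows essentially the same route as the paper's: the same case split via semi-connectedness, with the base modified to $(X\setminus\{x\})\cup\{b\}$ in the first case and to $X\setminus\{x\}$ in the second, after ruling out that $x$ is the only base element reaching the cycle (which would force $x$ onto a cycle). The ancestor set $P_x$ is just extra scaffolding around the paper's more direct choice of a vertex $y$ on the cycle and a base element $z\neq x$ reaching it.
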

\begin{proof}
We fix some base element $x \in X$.
Let us assume $x$ does not belong to $B_{h'}^\mm$ or to a cycle of $D_{h'}^{\mm}$.
By definition we have $\val(X) = \val(D_{h'}^{\mm}) = \min \{\val(X) \mid X \text{ is a base for } D_{h'}^{\mm} \}$.
By the semi-connectedness of $\mm$, $x$ is either reachable from $B_{h'}^\mm$ (1) or from a cycle in $D_{h'}^{\mm}$ (2).
Case (1): $x$ is reachable from some $z \in B_{h'}^\mm$.
However, $X' = X \setminus \{x\} \cup \{z\}$ is a base for $D_{h'}^{\mm}$ with $\val(X') < \val(X)$.
Contradiction.
Case (2): $x$ is reachable from some cycle $C$ in $D_{h'}^{\mm}$.
We fix an element $y$ on $C$.
Because $X$ is a base, there is a base element $z \in X$ such that $y$ can be reached from $z$.
We have $z \neq x$, because otherwise $x$ would belong to a cycle of $D_{h'}^{\mm}$.
However, $X' = X \setminus \{x\}$ is a base for $D_{h'}^{\mm}$ with $\val(X') < \val(X)$.
Contradiction.
\end{proof}

The next lemma, Lemma \ref{lem:rhd-connectedness--}, shows that $\rhd$ can be applied to $D_{h'}^{\mm}$ such that $\val(D_{h'}^{\mm})$ decreases for some $1 \le h' \le h$.
The compatibility of $RE$ and $DI$ ensures that $\rhd$ does not modify the graphs $D_\ell^{\mm}$ with $\ell \neq h'$.

\begin{lemma}
\label{lem:rhd-connectedness--}
Let $1\leq h' \leq h$ and let $\mm$ be a structure such that $\mm \models \formula{\Phi}$, $\mm$ is $\Phi$-semi-connected and has $\ell$-useful labelings $f_\ell$ for all $1 \leq \ell \leq h$.
If $\val_{f_{h'}}(D_{h'}^{\mm}) > 0$,
then there is a tuple\textup{ $\mathfrak{t}=(a_{0},a_{1},r)$
}such that
\begin{enumerate}
\item For all $\ell \not= h'$, $D_\ell^{\mm} = D_\ell^{\op{\mm}{\mathfrak{t}}}$.
\item For all $u\in M$, $\ovtpn_{\mm}^{\varphi}(u)=\ovtpn_{\op{\mm}{\mathfrak{t}}}^{\varphi}(u)$.
\item $\op{\mm}{\mathfrak{t}}\models \formula{\Phi}$.
\item $\val_{f_{h'}}(D_{h'}^{\mm}) > \val_{f_{h'}}(D_{h'}^{\op{\mm}{\mathfrak{t}}})$.
\item $\val_{f_{\ell}}(D_\ell^{\mm}) = \val_{f_{\ell}}(D_\ell^{\op{\mm}{\mathfrak{t}}})$
for all $\ell \not= h'$.
\item $\op{\mm}{\mathfrak{t}}$ is $\Phi$-semi-connected.
\item $f_\ell$ is a $\ell$-useful labeling for $\op{\mm}{\mathfrak{t}}$
for all $1\leq \ell \leq h$.
\end{enumerate}
\end{lemma}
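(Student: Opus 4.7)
The plan is to manufacture the triple $\mathfrak{t}=(a_{0},a_{1},r)$ directly from a minimum-value base of $D_{h'}^{\mm}$ together with one application of the $h'$-useful labeling, and then to verify the seven clauses. Fix a base $X$ for $D_{h'}^{\mm}$ attaining $\val_{f_{h'}}(X)=\val_{f_{h'}}(D_{h'}^{\mm})>0$; then some $x\in X\setminus B_{h'}^{\mm}$ exists, and by Lemma~\ref{lem:basis-circle} $x$ lies on a cycle of $D_{h'}^{\mm}$, yielding $r\in S_{h'}\subseteq\NF$ and $y\in A_{h'}^{\mm}$ with $(x,y)\in r^{\mm}$ on the cycle. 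The $h'$-usefulness applied to $x$ produces $v,w\in A_{h'}^{\mm}$ with $f_{h'}(v)=f_{h'}(x)$, $f_{h'}(w)<f_{h'}(v)$ and $(w,v)\in D_{h'}^{\mm}$. First I would observe $v\ne x$: otherwise $w$ is a $D_{h'}^{\mm}$-predecessor of $x$ of strictly smaller $f_{h'}$-value, and then $(X\setminus\{x\})\cup\{w\}$ is already a base of $D_{h'}^{\mm}$ of value strictly less than $\val_{f_{h'}}(X)$, contradicting the choice of $X$. Similarly $w\notin\{x,v\}$ since $f_{h'}(w)<f_{h'}(v)=f_{h'}(x)$. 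I set $\mathfrak{t}=(x,v,r)$.

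The easy clauses follow quickly. Since $f_{h'}(x)=f_{h'}(v)$ forces $\ovtpn^{\varphi}(x)=\ovtpn^{\varphi}(v)$ by the first clause of useful labeling, Lemma~\ref{lem:op-on-concepts} yields (2) and (3) immediately. For (1), $\rhd$ only touches $r$-out-edges of $x$ and $v$; when $\ell\ne h'$ and $r\in S_{\ell}$, compatibility of $RE$ and $DI$ forces $A_{h'}^{\mm}\cap A_{\ell}^{\mm}=\emptyset$ so that $x,v\notin A_{\ell}^{\mm}$ and no edge of $D_{\ell}$ is touched, and when $r\notin S_{\ell}$ nothing in $D_{\ell}$ changes either; (5) then follows from (1). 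For (7) the claim for $\ell\ne h'$ is free from (1) and (2), while for $\ell=h'$ I take the original witness $(v',w')$ of each $u\in A_{h'}^{\mm}\setminus B_{h'}^{\mm}$ and note that either $(w',v')$ is preserved by $\rhd$, or else $(w',v')\in r^{\mm}$ with $w'\in\{x,v\}$, in which case the swap installs $(v,v')$ (if $w'=x$) or $(x,v')$ (if $w'=v$), whose source has $f_{h'}$-value $f_{h'}(x)=f_{h'}(v)=f_{h'}(w')<f_{h'}(v')$ and so is a new valid witness. For (6), the operation $\rhd$ permutes the $r$-in-neighbor set of every vertex by the involution swapping $x$ and $v$, so in-degrees in $D_{h'}$ are preserved; the in-degree-zero vertices of $D_{h'}^{\op{\mm}{\mathfrak{t}}}$ therefore coincide with those of $D_{h'}^{\mm}$, which by semi-connectedness of $\mm$ all lie in $B_{h'}^{\mm}$. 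Since every vertex of a finite digraph is reachable from either an in-degree-zero vertex or a cycle, $\op{\mm}{\mathfrak{t}}$ is semi-connected.

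The main obstacle is clause (4). I would take $X'=(X\setminus\{x\})\cup\{w\}$ as the candidate cheaper base for $D_{h'}^{\op{\mm}{\mathfrak{t}}}$; the bound $\val_{f_{h'}}(X')\le\val_{f_{h'}}(X)-f_{h'}(x)+f_{h'}(w)<\val_{f_{h'}}(X)$ is immediate from $f_{h'}(w)<f_{h'}(x)$. What must be proved is that $X'$ really is a base, i.e.\ that every $u\in A_{h'}^{\mm}$ is reachable from $X'$ in $D_{h'}^{\op{\mm}{\mathfrak{t}}}$. The central detour is that in $D_{h'}^{\op{\mm}{\mathfrak{t}}}$ the edge $w\to v$ is preserved (since $w\notin\{x,v\}$), the new $r$-edge $(v,y)$ replaces the deleted $(x,y)$, and the rest of the original cycle $y\to\cdots\to x$ then brings $w$ back to $x$; hence $w$ reaches in $D_{h'}^{\op{\mm}{\mathfrak{t}}}$ every vertex of the original cycle through $x$. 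Given a simple path from some $z\in X$ to a target $u$ in $D_{h'}^{\mm}$, one either uses the path unchanged (when it avoids both removed $r$-edges $(x,y)$ and $(v,v')$, where $v'$ is the former $r$-successor of $v$, if it exists), or reroutes through $w$ using the detour above at the crossing of $(x,y)$, or symmetrically uses the installed edge $(x,v')$ in place of $(v,v')$. The only delicate sub-case is when $v$ itself lies on the original cycle through $x$, which just forces chaining two such reroutes; the case analysis is routine but is where the bulk of the work lies. From this $\val_{f_{h'}}(D_{h'}^{\op{\mm}{\mathfrak{t}}})\le\val_{f_{h'}}(X')<\val_{f_{h'}}(X)=\val_{f_{h'}}(D_{h'}^{\mm})$, proving (4).
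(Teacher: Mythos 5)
Your construction is the same as the paper's (your $x,v,w$ are its $a_1,a_0,w$, and the candidate cheaper base $X'=(X\setminus\{x\})\cup\{w\}$ is identical), and most of the verification is sound; your in-degree argument for clause (6) is in fact cleaner than the paper's, and your treatment of clause (7) is more complete. However, there is a genuine gap in clause (4), precisely at the sub-case you defer: when $v$ lies on the cycle through $x$, ``chaining two such reroutes'' does not work. Concretely, suppose the cycle is $x\to_r y\to\cdots\to v\to_r v''\to\cdots\to x$, i.e.\ the cycle edge leaving $v$ is also an $r$-edge. After applying $\rhd$, the edges $(x,y)$ and $(v,v'')$ are replaced by $(v,y)$ and $(x,v'')$. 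From $w$ you now reach $v$, then $y$, then the segment back to $v$ --- a closed loop. But $v''$ is (in the worst case) entered only via the new edge $(x,v'')$, and $x$ is entered only via the tail of the segment $v''\to\cdots\to x$; these two needs are circular, so $x$ and that whole segment may be unreachable from $X'$ in $D_{h'}^{\op{\mm}{\mathfrak{t}}}$. Since $x$ was removed from the base, $X'$ need not be a base at all in this sub-case, and your proof of (4) collapses there.

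The correct resolution --- the one the paper uses --- is that this sub-case is vacuous, by the same minimality argument you already used to show $v\neq x$: if $v$ lay anywhere on the path from $y$ back to $x$ along the cycle, then already in the \emph{original} graph $D_{h'}^{\mm}$ we would have $w\to v\to\cdots\to x$, so $(X\setminus\{x\})\cup\{w\}$ would be a base for $D_{h'}^{\mm}$ of value strictly less than $\val_{f_{h'}}(X)=\val_{f_{h'}}(D_{h'}^{\mm})$, a contradiction. Hence $v$ is off the cycle, the segment from $y$ back to $x$ survives $\rhd$ intact, and your main-line detour $w\to v\to y\to\cdots\to x$ (followed, where needed, by the installed edge $(x,v')$ in place of a deleted $r$-edge out of $v$) goes through. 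With that one observation added, your proof of (4) is complete; without it, the argument as written is not.
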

\begin{proof}[Proof Sketch]
In the following, we will give an intuition on the proof of Lemma~\ref{lem:rhd-connectedness--}. We delay the full proof to
Section \ref{app:useful-proof}.
We fix some base $X$ for $D_{h'}^{\mm}$ with $\val(X) = \val(D_{h'}^{\mm}) > 0$.
We choose a base element $a_1 \in X \setminus B_{h'}^\mm$.
Because $f_{h'}$ is a $h'$-useful labeling for $\mm$ there are $a_0,w\in A_{h'}^{\mm}$ such that $f_{h'}(a_0) = f_{h'}(a_1)$, $f_{h'}(w) < f_{h'}(a_0)$ and the graph $D_{h'}^{\mm}$ has an edge $(w,a_0)$.
$a_0$ and $a_1$ cannot belong to the same cycle in $D_{h'}^{\mm}$ by the minimality of $X$.
By Lemma~\ref{lem:basis-circle}, $a_1$ belongs to a cycle in $D_{h'}^{\mm}$.
$b_1$ denotes the successor of $a_1$ by some edge $r$ in this cycle.
\piccaption[]{\label{fig:1}}\parpic[r]{\includegraphics[scale=0.85]{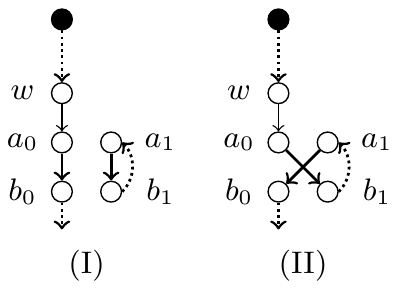}}
Figure \ref{fig:1} shows the result (II) of applying $\rhd$ on (I) (we depict here the case, where $a_0$ has an $r$-successor $b_0$).
The black vertex belongs to the base $X$, dotted arrows denote paths inside $A_{h'}^{\mm}$ whose edges belong to $S_{h'}$, and solid arrows belong to $S_{h'}$.
Applying $\rhd$ increases the reachability of the structure: all
vertices in (II) are now reachable from the black vertex.
However, in the special case where the black vertex and $a_1$ coincide, a new cycle in created.
In both cases we have that $X' = X \setminus \{a_1\} \cup \{w\}$ is a base for $D_{h'}^{\op{\mm}{\mathfrak{t}}}$ with $\val(X') < \val(X)$ and that  $D_{h'}^{\op{\mm}{\mathfrak{t}}}$ remains semi-connected.
\end{proof}

Finally, we show that the repeated application of $\rhd$ on a semi-connected structure with useful labelings leads eventually to a structure satisfying $\Phi$:

\begin{proof}[Proof of Lemma~\ref{lem:connected-implies-useful-order--}]
By Lemma~\ref{lem:use-ord} if $\Phi$ is satisfied by a structure $\mm$, then $\mm$ has $h'$-useful labelings for every $h'$ and is semi-connected.

For the other direction, there is a $\Phi$-semi-connected structure $\mm$ with $h'$-useful labelings for every $1\leq h'\leq h$.
There is a sequence $\mm = \mm_{1},\ldots,\mm_{p} = \mm'$ of structures such
that each $\mm_{j+1}$ is obtained from $\mm_{j}$ by one application
of $\rhd$ and $\val(D_{h'}^\mm)=0$ for all $h'$.
There is guaranteed to be such a sequence because
\begin{enumerate}
\item the premise of Lemma \ref{lem:rhd-connectedness--} holds for $\mm$,
\item for all $j$, if the premise of Lemma \ref{lem:rhd-connectedness--}
holds for $\mm_{j}$ and $\phi$, then the premise of Lemma \ref{lem:rhd-connectedness--}
holds for $\mm_{j+1}$, and
\item The tuple $(\val(D_1^\mm) ,\cdots,\val(D_h^\mm))$ is decreasing with regard to the component-wise ordering of $h$-tuples over $\mathbb{N}$, so eventually $(0,\ldots,0)$ must be reached.
\end{enumerate}
By Lemma~\ref{lem:value-zero}, $D_{h'}^{\mm'}$ is connected for all $h'$.
\end{proof}

\subsection{Reducing connected satisfiability to (plain finite) satisfiability
\label{se:reduce-to-formula}}

Here we show how to express semi-connectedness and existence of useful labelings in $\Loiq$.
For semi-connectedness, this is easy:
\begin{lemma}
\label{lem:semi-def}
There exists a formula $\semi{\Phi} \in \Loiq$ such that $\mm \models \semi{\Phi}$ iff $\mm$ is $\Phi$-semi-connected.
\end{lemma}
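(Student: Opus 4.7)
The plan is to exhibit $\semi{\Phi}$ explicitly as a conjunction of $\formula{\Phi}$ with a simple first-order expressible condition that characterizes semi-connectedness locally. Recall that $\formula{\Phi}$ is already in $\Loiq$, so it can be used directly; what needs work is condition (II) from Definition~\ref{def:Po-conn}, namely that every $u \in A_{h'}^{\mm}$ is reachable in $D_{h'}^{\mm}$ either from $B_{h'}^{\mm}$ or from a cycle.

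The key observation is that, on finite structures, condition (II) for a fixed $\Reacha{B_{h'}}{S_{h'}}{A_{h'}}$ is equivalent to the purely local statement:
\emph{every element of $A_{h'}^{\mm} \setminus B_{h'}^{\mm}$ has an $S_{h'}$-predecessor inside $A_{h'}^{\mm}$.}
The forward direction is immediate: if $u \in A_{h'}^{\mm} \setminus B_{h'}^{\mm}$ is reached from $B_{h'}^{\mm}$ (respectively from a cycle) in $D_{h'}^{\mm}$, then the last edge of that path (respectively an edge of the cycle, together with the path) provides an $S_{h'}$-predecessor of $u$ in $A_{h'}^{\mm}$. For the converse, starting from any $u \in A_{h'}^{\mm} \setminus B_{h'}^{\mm}$, iteratively pick $S_{h'}$-predecessors inside $A_{h'}^{\mm}$; since $M$ is finite, this backward walk either eventually reaches $B_{h'}^{\mm}$ (so $u$ is reachable from $B_{h'}^{\mm}$) or repeats a vertex, producing a cycle in $D_{h'}^{\mm}$ from which $u$ is reachable.

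Given this equivalence, I define
\[
\semi{\Phi} \;=\; \formula{\Phi} \;\wedge\; \bigwedge_{\Reacha{B_{h'}}{S_{h'}}{A_{h'}} \in RE} \Bigl(\, A_{h'} \sqcap \neg B_{h'} \;\sqsubseteq\; \bigsqcup_{s \in S_{h'}} \exists s^{-}. A_{h'} \,\Bigr).
\]
Every concept inclusion on the right-hand side uses only Boolean concept constructors, inverse roles, and existential restrictions, so it lies in $\LoiqNOb$; taking the conjunction with $\formula{\Phi} \in \Loiq$ keeps $\semi{\Phi}$ inside $\Loiq$. Note that compatibility of $RE$ and $DI$, together with the containment assertions $CO(RE) \subseteq \formula{\Phi}$, guarantees that $B_{h'}^{\mm} \subseteq A_{h'}^{\mm}$ in every model of $\formula{\Phi}$, so the set difference $A_{h'} \sqcap \neg B_{h'}$ behaves as expected.

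The only non-routine point is the backward-walk argument in the $\Leftarrow$ direction above, and this is where finiteness of $\mm$ is essential; without it the argument fails, because an infinite descending chain of predecessors need neither terminate in $B_{h'}^{\mm}$ nor close into a cycle. Since all structures in this paper are finite by standing assumption, this obstacle does not cause any trouble here.
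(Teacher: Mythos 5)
Your proposal is correct and matches the paper's own proof essentially verbatim: the paper defines the same formula $\formula{\Phi} \land \bigwedge_{h'}\bigl(A_{h'}\sqcap\neg B_{h'} \sqsubseteq \bigsqcup_{s\in S_{h'}} \exists s^{-}.A_{h'}\bigr)$ and justifies it by the same local-predecessor characterization, with the converse argued contrapositively (a non-semi-connected model must contain a vertex in $A_{h'}\sqcap\neg B_{h'}$ with no predecessor) rather than via your equivalent backward-walk phrasing. Your explicit remark on where finiteness is used is a nice touch but does not change the substance.
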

\begin{proof}
Let $\semi{\Phi} = \formula{\Phi} \land \bigwedge_{1\leq h'\leq h} \delta_\reachCyclic^{h'}$
and $\delta_\reachCyclic^{h'} = A_{h'}\sqcap\neg B_{h'} \sqsubseteq \bigsqcup_{s\in S_{h'}} \exists s^{-}.A_{h'}$.
Assume $\mm$ is $\Phi$-semi-connected.
We fix some $1 \le h' \le h$.
Every $u\in A_{h'}^{\mm}$ is reachable from $B_{h'}^{\mm}$ or is reachable from a cycle, and therefore $u$ has a predecessor in $A_{h'}^{\mm}$ with respect to $\bigcup_{s\in S_{h'}}s^{\mm}$, unless $u\in B_{h'}$, so $\mm \models \delta_\reachCyclic^{h'}.$

Conversely, we consider a structure $\mm$ with $\mm \models \formula{\Phi}$, but which is not $\Phi$-semi-connected.
There exists a vertex $u$ and a $1 \le h' \le h$ such that $u$ is not reachable from $B_{h'}^{\mm}$ nor from a cycle.
There must exist a vertex $v$ in $A_{h'}^{\mm}$ which is a predecessor of $u$ (possibly $u$ itself) and which does not have a predecessor in $A_{h'}^{\mm}$, otherwise $u$ lies on a cycle (using the finiteness of the universe).
Since $u$ is reachable from $v$, we must have that, like $u$, $v$ not reachable from $B_{h'}^{\mm}$.
Therefore, $v$ is not in $B_{h'}^{\mm}$ but belongs to $A_{h'}^{\mm}$
and does not have a predecessor, so $\mm \not \models \delta_\reachCyclic^{RE,DI}$.
\end{proof}

Next we define a set of structures $\ORDW(\varphi)$ that represent models of $\varphi$ and at the same time also contain useful labelings.
After this definition we will show that $\ORDW(\varphi)$ can be defined inside the logic $\Loiq$.

\begin{definition}
\label{def:ORD}
Let $\varphi$ be a $\Loiq$ formula over vocabulary $\tau$.
Let $k = |\TYPES_{\varphi}|$.
We define an extended vocabulary $\useful(\tau)$ that extends $\tau$ with a new atomic concept $M$, new nominals $o_1,\ldots,o_k$,
a new atomic role $\order$ and new functional atomic roles $f_1,\ldots, f_h$.

Let $\nn$ be a $\useful(\tau)$-structure with universe $N$.
We denote the substructure of $\nn$ with universe $M^\nn$ by $\mm$.
We denote the set $N \setminus M^\nn$ by $\nominalsExp^\nn$.
The structure $\nn$ belongs to $\ORDW(\varphi)$ if the following conditions hold:
\begin{enumerate}
\item \label{prop:Mtilde} $\mm$ satisfies $\varphi$.
\item \label{prop:partition} $N$ is partitioned into $M^\nn$ and $\nominals^\nn = \{o_1^\nn,\ldots,o_k^\nn\}$.
\item \label{prop:linear} We have that $(o_i^\nn,o_j^\nn) \in \order^\nn$ iff $i < j$.
\item \label{prop:ftypes} $f_{h'}^\nn$ is a function from $A_\ell^\mm$ to $\nominals^\nn$, for every $1 \le {h'} \le h$.
\item \label{prop:ford} $f_{h'}^\nn$ is a $h'$-useful labeling for $\mm$, using $ \nominalsExp^\nn$ for the natural numbers $[1,k]$ and $\order$ for the order on the natural numbers in Definition~\ref{def:useful-labeling}, for every $1 \le h' \le h$.
\end{enumerate}
\end{definition}

\begin{lemma}
\label{lem:extension-useful}
$\ORDW(\varphi)$ is non-empty iff there is a model $\mm$ of $\varphi$ with $h'$-useful labelings for $\mm$ for every $1\leq h'\leq h$.
\end{lemma}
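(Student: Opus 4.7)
The plan is to establish both directions by giving explicit translations that put elements of $\ORDW(\varphi)$ into bijective correspondence with pairs consisting of a model $\mm$ of $\varphi$ together with a tuple of $h'$-useful labelings for $\mm$. The key observation is that conditions~(\ref{prop:partition}) and~(\ref{prop:linear}) of Definition~\ref{def:ORD} force $(\nominalsExp^\nn, \order^\nn)$ to be order-isomorphic to $([1,k],<)$ via the assignment $o_i^\nn \mapsto i$, so the useful-labeling property of Definition~\ref{def:useful-labeling} transports cleanly along this bijection.

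For the forward direction, given $\nn \in \ORDW(\varphi)$, I take $\mm$ to be the $\tau$-substructure of $\nn$ with universe $M^\nn$, which satisfies $\varphi$ by condition~(\ref{prop:Mtilde}). Pairwise distinctness of $o_1^\nn, \ldots, o_k^\nn$ follows from~(\ref{prop:linear}), since otherwise $\order^\nn$ would simultaneously need to contain and omit a self-loop on some $o_i^\nn$; hence the map $\iota : o_i^\nn \mapsto i$ is an order-isomorphism from $(\nominalsExp^\nn, \order^\nn)$ onto $([1,k], <)$. For each $1 \le h' \le h$, define $\hat{f}_{h'} = \iota \circ f_{h'}^\nn : A_{h'}^\mm \to [1,k]$; totality follows from~(\ref{prop:ftypes}), and~(\ref{prop:ford}) combined with $\iota$ being an order-isomorphism immediately yields that $\hat{f}_{h'}$ meets both clauses of Definition~\ref{def:useful-labeling}.

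For the backward direction, given $\mm \models \varphi$ together with $h'$-useful labelings $f_{h'} : A_{h'}^\mm \to [1,k]$ for every $h'$, I build $\nn$ with universe $N = M \sqcup \{o_1, \ldots, o_k\}$ where the $o_i$ are $k$ fresh elements: I interpret every symbol of $\tau$ exactly as in $\mm$ (so that $\mm$ is the $\tau$-substructure on $M^\nn = M$), and set $o_i^\nn = o_i$, $\order^\nn = \{(o_i, o_j) \mid 1 \le i < j \le k\}$, and for each $h'$ define $f_{h'}^\nn(u) = o_{f_{h'}(u)}$ for $u \in A_{h'}^\mm$, leaving $f_{h'}^\nn$ undefined elsewhere (permitted since each $f_{h'} \in \NF$ is partial). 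Conditions~(\ref{prop:Mtilde})--(\ref{prop:ford}) are then immediate from the construction, with~(\ref{prop:ford}) transferring from the useful-labeling property of $f_{h'}$ via the bijection $i \leftrightarrow o_i^\nn$ and its compatibility with $<$ and $\order^\nn$. The proof is essentially a direct unfolding of definitions; I do not anticipate any real obstacle, since Definition~\ref{def:ORD} was engineered precisely to capture useful labelings syntactically inside a $\useful(\tau)$-structure.
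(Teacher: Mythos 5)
Your proof is correct and follows essentially the same route as the paper's: extract the substructure on $M^\nn$ and transport the labelings along the order-isomorphism $o_i^\nn \mapsto i$ in one direction, and in the other direction adjoin $k$ fresh order elements and reinterpret the labelings as roles into them. The only difference is cosmetic extra care (your explicit argument that condition~(3) forces the $o_i^\nn$ to be pairwise distinct, which the paper leaves implicit).
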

\begin{proof}
Let $\mm$ be a model of $\varphi$ with $h'$-useful labelings $f'_{h'}$ for every $1 \leq h'\leq h$.
Let $M'$ be the universe of $\mm$.
We define a model $\nn$ with universe $N := M' \cup [1,k]$ by
\begin{itemize}
\item $M^\nn = M'$,
\item $o_i^\nn := i$,
\item $\order^\nn = \{(i,j) \mid 1 \le i < j \le k\}$,
\item $f_i^\nn := f'_i$, and
\item $C^\nn = C^\mm$ for all $C \in \varphi$.
\end{itemize}
Clearly, $\nn$ satisfies properties~\ref{prop:Mtilde},\ref{prop:partition}, \ref{prop:linear}, \ref{prop:ftypes} and \ref{prop:ford} of Definition~\ref{def:ORD}.

Let $\nn \in \ORDW(\varphi)$.
Let $\mm$ be the substructure of $\nn$ with universe $M^\nn$.
By property~\ref{prop:Mtilde}, $\mm$ satisfies $\varphi$.
By property~\ref{prop:partition}, $\nominals^\nn = \{o_1^\nn,\ldots,o_k^\nn\}$.
By property~\ref{prop:linear}, $(o_i^\nn,o_j^\nn) \in \order^\nn$ iff $i < j$.
Thus $(\nominals^\nn,\order^\nn)$ is isomorphic to $([1,k],\leq)$.
By property~\ref{prop:ftypes}, $f_{h'}^\nn$ is a function from $A_{h'}^\mm$ to $\nominals^\nn$, for every $1 \le h' \le h$.
By property~\ref{prop:ford}, $f_{h'}^\nn$ is a $h'$-useful labeling for $\mm$, using $ \nominalsExp^\nn$
for the natural numbers $[1,k]$ and $\order^\nn$ for the order on the natural numbers
in Definition~\ref{def:useful-labeling}, for every $1 \le h' \le h$.
Because $(\nominals^\nn,\order^\nn)$ is isomorphic to $([1,k],\leq)$, the last property implies that $f_{h'}^\nn$ is a useful labeling, for every $1 \le h' \le h$.
\end{proof}

\begin{lemma}
\label{lem:definingORD}
For every formula $\varphi \in \Loiq$ there exists a formula $\useful(\varphi)$ such that $\useful(\varphi)$ defines $\ORDW(\varphi)$.
\end{lemma}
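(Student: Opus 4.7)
The plan is to construct $\useful(\varphi)$ as the conjunction of five groups of $\Loiq(\useful(\tau))$-assertions, one group per clause of Definition~\ref{def:ORD}. Clauses~\ref{prop:partition}, \ref{prop:linear} and~\ref{prop:ftypes} are essentially direct: I would assert $\top \sqsubseteq M \sqcup o_1 \sqcup \cdots \sqcup o_k$ together with $M \sqcap o_i \equiv \bot$ and $o_i \sqcap o_j \equiv \bot$ for $i\neq j$; then $o_i \sqsubseteq \exists\order.o_j$ whenever $i<j$ and $o_i \sqsubseteq \neg\exists\order.o_j$ whenever $i\geq j$; and finally $A_{h'} \equiv \exists f_{h'}.\top$ together with $\exists f_{h'}^-.\top \sqsubseteq o_1\sqcup\cdots\sqcup o_k$ for each $h'$, which, using that $f_{h'}\in\NF$ is functional, says exactly that $f_{h'}^\nn$ is a function $A_{h'}^\mm\to\nominalsExp^\nn$.

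For clause~\ref{prop:Mtilde}, that the substructure $\mm$ on $M^\nn$ satisfies $\varphi$, I would first keep the original vocabulary inside $M$ by adding $C \sqsubseteq M$ for every atomic $C\in\tau\cap\NC$ occurring in $\varphi$ and $\top \sqsubseteq \neg\exists r.\neg M \sqcap \neg\exists r^-.\neg M$ for every atomic $r\in\tau\cap\NR$ occurring in $\varphi$. A straightforward induction on the structure of concepts then shows that $C^\nn \cap M^\nn = C^\mm$ for every subconcept $C$ of $\varphi$, so $\mm \models \varphi$ is captured by the formula $\varphi^M$ obtained from $\varphi$ by rewriting every top-level inclusion $C\sqsubseteq D$ (resp.\ equality $C\equiv D$) as $C\sqcap M \sqsubseteq D\sqcap M$ (resp.\ $C\sqcap M \equiv D\sqcap M$).

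The main work lies in clause~\ref{prop:ford}. For each $1\leq h'\leq h$ and each nominal $o_i$, I enforce part~(1) of Definition~\ref{def:useful-labeling} by adding, for every subconcept $C$ occurring in $\varphi$, the disjunction
\[
(\exists f_{h'}^-.(A_{h'}\sqcap C)\sqcap o_i \equiv \bot) \ \lor\ (\exists f_{h'}^-.(A_{h'}\sqcap \neg C)\sqcap o_i \equiv \bot),
\]
which forces all elements of $A_{h'}$ mapped by $f_{h'}$ to $o_i$ to agree on membership in $C^\mm$; since this is required for every such $C\in\varphi$, any two elements sharing the label $o_i$ have the same $\varphi$-type. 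For part~(2), I add the implication
\[
\exists f_{h'}^-.(A_{h'}\sqcap\neg B_{h'})\sqcap o_i \not\equiv \bot \ \to\ A_{h'}\sqcap \exists f_{h'}.o_i \sqcap \bigsqcup_{s\in S_{h'}}\exists s^-.(A_{h'}\sqcap \exists f_{h'}.\exists\order.o_i) \not\equiv \bot,
\]
which says that whenever some $u\in A_{h'}\setminus B_{h'}$ carries label $o_i$, there exists an element $v\in A_{h'}$ also labelled $o_i$ with a predecessor $w\in A_{h'}$ via some $s\in S_{h'}$ whose label is strictly below $o_i$ in $\order$, matching clause~(2) of Definition~\ref{def:useful-labeling} exactly.

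The principal obstacle is size rather than correctness: since $k = |\TYPES_{\varphi}|$ is $2^{O(|\varphi|)}$, both the block of nominals $o_1,\ldots,o_k$ and the conjunctions above are of exponential length in $|\varphi|$, which is precisely why this construction only yields the suboptimal decision procedure of Section~\ref{se:reduce-to-formula} and is refined in Section~\ref{se:poly}. Correctness of the resulting $\useful(\varphi)$ then follows by unfolding each of the five clauses and using that $(\nominalsExp^\nn,\order^\nn)$ is order-isomorphic to $([1,k],<)$, as established in the proof of Lemma~\ref{lem:extension-useful}.
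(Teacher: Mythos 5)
Your construction is correct and follows essentially the same route as the paper's proof: the same clause-by-clause translation of Definition~\ref{def:ORD}, with your counterparts of $\theta^{5a}$ and $\theta^{5b}$ differing only in being indexed per nominal and phrased from the labelled element's rather than its predecessor's viewpoint. The one substantive deviation is clause~\ref{prop:Mtilde}: the paper relativizes every subconcept via $g(C)=C\sqcap M$, whereas you force all $\tau$-symbols of $\varphi$ to live inside $M$ and relativize only the top-level inclusions; strictly speaking this defines a proper subclass of $\ORDW(\varphi)$ (a structure in $\ORDW(\varphi)$ may interpret a $\tau$-role with edges leaving $M^\nn$, which your axioms forbid), but that subclass is nonempty iff $\ORDW(\varphi)$ is, so Lemma~\ref{lem:extension-useful} and the downstream satisfiability argument are unaffected.
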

\begin{proof}
We set $\useful(\varphi) = \theta^{1} \land \theta^{2} \land \theta^{3} \land \theta^{4} \land \theta^{5a} \land \theta^{5b}$.
$\theta^{X}$ defines the property $X$ in Definition \ref{def:ORD}.
\begin{itemize}
\item
For every atomic concept $A$, let $g(A) = A \sqcap M$.
For every concept $C$, $g(C)$ is obtained by replacing its sub-concepts with their $g$ image and intersecting with $M$ (e.g., $g(C_{1}\sqcup C_{2}) = (g(C_{1})\sqcup g(C_{2})) \sqcap M$).
Let $g(\varphi)$ be obtained from $\varphi$ by replacing every inclusion $C \sqsubseteq D$ in $\varphi$ by $g(C) \sqsubseteq g(D)$.
Let $\mm$ be the substructure of $\nn$ with universe $M^\nn$ by $\mm$.
We have $\mm \models \varphi$ iff $\nn \models g(\varphi)$ (this holds because we have $C^\mm = g(C)^\nn$ for all $C \in \varphi$).

%
\item
Let $\theta^{2} = \neg M \equiv (o_1 \sqcup \cdots \sqcup o_k)$.
$\theta^{2}$ says that the universe of $N \setminus M^\nn = \nominals^\nn  = \{o_1^\nn,\ldots,o_k^\nn\}$.
\item
Let $\theta^{3}$ be the conjunction of
$o_i \sqsubseteq \left(\neg
\bigsqcup_{1 \leq  j \leq i} \exists \order. o_j\right)$
and
$o_i \sqsubseteq
\mybigsqcapDisp_{i <  j \leq k} \exists \order.o_j$  for $1\leq i \leq k$.
$\theta^{3}$ says that $o_i^\nn \in \bigcap_{i < j \leq k} \{ u \mid (u,o_j^\nn) \in \order^\nn\}$ and $o_i^\nn \not\in \bigcup_{1 \le j \leq i} \{ u \mid (u,o_j^\nn) \in \order^\nn\}$, i.e., $(o_i^\nn,o_j^\nn) \in \order^\nn$ iff $i < j$.
\item
Let $\theta^{4}$ be the conjunction of the formulas $\left(\exists f_{h'} \equiv A_{h'} \sqcap M \right)$ and $\left(\exists f_{h'}^{-}\sqsubseteq\neg M\right)$ for $1\leq h'\leq h$.
$\theta^{4}$ says that $f_{h'}^\nn$ is a function from $A^\nn \cap M^\nn = A_{h'}^\mm$ to $N \setminus M^\nn = \nominals^\nn$.
\item
Let $\theta^{5a}$ be the conjunction of
$\left(\exists f_{h'}^{-}.C\right) \sqcap \left(\exists f_{h'}^{-}. \neg C\right) \equiv \bot$ for all $1\leq h'\leq h$ and $C \in \varphi$.
$\theta^{5a}$ says that if $u,v\in A_{h'}^\mm$ point to
the same nominal (i.e., $f_{h'}^\nn(u) = f_{h'}^\nn(v)$), then they must agree on every concept $C \in \varphi$ (i.e., $u \in C^\nn$ iff $v \in C^\nn$), thus $\ovtpn_{\mm}^{\varphi}(u) = \ovtpn_{\mm}^{\varphi}(v)$.
\item
Let $\theta^{5b}$ be the conjunction, for all $1 \leq \ell \leq k$ and
\[
\begin{array}{l}
1\leq h'\leq h$, of $(\exists f_{h'}. o_\ell  \sqcap \neg B_{h'} \not\equiv \bot)
\rightarrow \\
\big(\bot \not\equiv \big(\bigsqcup_{s\in S_{h'}} \exists s. \exists f_{h'}. o_{\ell}\big) \sqcap \exists f_{h'}. \exists \order. o_{\ell} \big)
\end{array}
\]
$\theta^{5b}$ says that if $u \in A_{h'}^{\mm} \setminus B_{h'}^{\mm}$ is pointing to some nominal $o_\ell^\nn$ with $f_{h'}^\nn$, then there is a $v \in A_{h'}^{\mm}$ that has a successor pointing to the same nominal $o_\ell^\nn$ with $f_{h'}^\nn$ and that is pointing to a smaller nominal with $f_{h'}^\nn$ (i.e., pointing to some nominal in $(\exists \order. o_{\ell})^\nn$).
\end{itemize}
\end{proof}

\begin{theorem}
\label{th:exp}
Let $\Phi_{i}=\varphi_{i}\land\bigwedge RE_{i}\land\bigwedge DI_{i}\in\L$, for $i=1,2$.
There are $\LoiqNOb$ formulas $\mu$ and $\kappa$ over an extended vocabulary such that\\
(1) $\Phi_{1}$ is satisfiable iff $\mu$ is satisfiable.\\
(2) $\Phi_{1}$ implies $\Phi_{2}$ iff $\kappa$ is not satisfiable.
\end{theorem}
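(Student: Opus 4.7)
The plan is to assemble the machinery of this section. For part (1), I set (essentially) $\mu = \useful(\semi{\Phi_1})$. By Lemma~\ref{lem:semi-def}, $\semi{\Phi_1}$ is a $\Loiq$ formula whose models are the $\Phi_1$-semi-connected structures, and by Lemma~\ref{lem:definingORD}, $\useful(\semi{\Phi_1})$ is a $\Loiq$ formula defining $\ORDW(\semi{\Phi_1})$. Combining Lemmas~\ref{lem:extension-useful} and~\ref{lem:conn-equiv-semi-conn-plus-useful-ord--}, $\mu$ is satisfiable iff some $\Phi_1$-semi-connected structure admits useful labelings, iff $\Phi_1$ is satisfiable.

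For part (2), the idea is to build $\kappa$ so that its satisfiability is equivalent to the existence of a model of $\Phi_1 \land \neg \Phi_2$. A structure $\mm$ fails $\Phi_2 = \varphi_2 \land \bigwedge RE_2 \land \bigwedge DI_2$ in one of three ways: (i) $\mm \not\models \formula{\Phi_2}$; (ii) some $Disj(A,A') \in DI_2$ fails, i.e.\ $A \sqcap A' \not\equiv \bot$; or (iii) some $\beta = \Reacha{B}{S}{A} \in RE_2$ fails. Only (iii) is not directly in $\Loiq$, so I encode it by introducing a fresh atomic concept $C^\beta$ per such $\beta$ and adding the $\Loiq$ assertions $B \sqsubseteq C^\beta$, $C^\beta \sqsubseteq A$, $S$-closedness of $C^\beta$ within $A$ (i.e.\ $C^\beta \sqcap \exists s.(A \sqcap \neg C^\beta) \equiv \bot$ for every $s \in S$), and $A \not\sqsubseteq C^\beta$. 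The existence of such a $C^\beta$ witnesses the failure of $\beta$ because the true $S$-closure of $B$ inside $A$ is such a $C^\beta$ whenever $\beta$ fails, and conversely any such $C^\beta$ already misses an element of $A$. Letting $\nu_{violate}$ be the $\Loiq$ disjunction of (i)--(iii), I take $\kappa$ to be (essentially) $\useful(\semi{\Phi_1} \land \nu_{violate})$.

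The main obstacle is the ``only if'' direction of the semi-connected reduction in the presence of $\nu_{violate}$: starting from a semi-connected model of $\formula{\Phi_1} \land \nu_{violate}$ with useful labelings, the iterated $\rhd$ of Lemma~\ref{lem:conn-equiv-semi-conn-plus-useful-ord--} yields a model of $\Phi_1$, and I need it to still satisfy $\nu_{violate}$. To ensure this I enlarge the concept set defining types to include every concept appearing in $\formula{\Phi_2}$ and every fresh witness $C^\beta$, and use useful labelings with respect to this enlarged set. Then Lemma~\ref{lem:op-on-concepts}, applied to this enlarged $\varphi$, guarantees that every $\rhd$-step preserves membership in each concept of $\formula{\Phi_2}$ and in each $C^\beta$; in particular, since the outgoing $r$-edges of $a_0$ and $a_1$ (which agree on $C^\beta$ because their extended types coincide) are merely swapped, $S$-closedness of $C^\beta$ within $A$ is preserved, and likewise violations (i) and (ii) survive. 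Finally, $\mu$ and $\kappa$ as constructed are $\Loiq$ formulas; a standard reduction (DNF at the formula level, fresh nominals $o$ to witness negated inclusions via $o \sqsubseteq C \sqcap \neg D$, and fresh marker concepts to select a DNF disjunct) converts them to equisatisfiable $\LoiqNOb$ formulas, completing the proof.
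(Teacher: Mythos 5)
Your proposal is correct and follows essentially the same route as the paper: part (1) is identical ($\mu=\useful(\semi{\Phi_1})$), and for part (2) the paper likewise introduces, for each $\Reacha{B_{h'}}{S_{h'}}{A_{h'}}\in RE_2$, a fresh forward-closed concept $X_{h'}$ with $B_{h'}\sqsubseteq X_{h'}$ and $A_{h'}\sqcap\neg X_{h'}\not\equiv\bot$ witnessing failure of that reachability assertion, disjoins these with $\neg\varphi_{2}\lor\neg\bigwedge DI_{2}$, conjoins with $\Phi_1$, and feeds the resulting $\L$-formula back through the part-(1) machinery (which computes types over the $\Loiq$-part of that combined formula, so your explicit type-set enlargement is exactly the point the paper handles implicitly). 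The only difference is a detail: your witness's closedness is relativized to $A$ (via $C^{\beta}\sqsubseteq A$ and $C^{\beta}\sqcap\exists s.(A\sqcap\neg C^{\beta})\equiv\bot$), which if anything tracks the within-$A$ semantics of reachability more faithfully than the paper's global condition $\exists s.\neg X_{h'}\sqsubseteq\neg X_{h'}$.
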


\begin{proof}
(1) follows from Lemmas \ref{lem:connected-implies-useful-order--},
\ref{lem:semi-def}, \ref{lem:extension-useful}, and \ref{lem:definingORD} by setting $\mu = \useful(\semi{\Phi})$.

(2): For every $h'$, let $X_{h'}$ be a fresh atomic concept and let
$\alpha_{h'}$ be the conjunction of $(B_{h'}\sqsubseteq X_{h'}) \land (A_{h'}\sqcap\neg X_{h'}\not\equiv\bot)$ and
$\bigwedge_{s\in S_{h'}}(\exists s.\neg X_{h'}\sqsubseteq\neg X_{h'})$.
For every $\mm$, $\mm\models\neg\Reacha{B_{h'}}{S_{h'}}{A_{h'}}$
iff there is $X_{h'}^{\mm}$ such that $\left\langle \mm,X_{h'}^{\mm}\right\rangle \models\alpha_{h'}$.
Hence, $\mm\models\neg\Phi_{2}$ iff there are $X_{h'}^{\mm}\subseteq M$, $1\leq h'\leq h$, such that $\left\langle \mm,X_{h'}^{\mm}:1\leq h'\leq h\right\rangle \models\neg\varphi_{2}\lor\neg\bigwedge DI_{2}\lor\bigvee_{1\leq h'\leq h}\alpha_{h'}$.
Hence, $\kappa_{\varphi} = \Phi_{1}\land \big(\neg\varphi_{2} \lor \neg \bigwedge DI_{2}\lor\bigvee_{1\leq h'\leq h}\alpha_{h'}\big)$
is satisfiable iff $\Phi_{1}\to\Phi_{2}$ is not a tautology and we
get (2).
In both (1) and (2) we use that satisfiability in $\Loiq$ is reducible to that in $\LoiqNOb$, see Appendix \ref{app:Loiq_LoiqNOb}.
\end{proof}

\subsection{NEXPTIME decision procedures \label{se:poly}}

The algorithm in Theorem~\ref{th:exp} produces, for a formula $\varphi$, a formula whose size is exponential in the size of $\varphi$.
Most of the constructions along the proof introduce only a polynomial growth, except for the nominals in Definition~\ref{def:ORD} and the formulae that use them.
We discuss here how to effectively compute an $\LoiqNOb$-formula of polynomial size in $\varphi$, which introduces the required linear ordering of exponential length without use of the nominals.
Since satisfiability in $\LoiqNOb$ is NEXPTIME-complete \cite{thesis:Tobias01}, so is satisfiability and implication in $\L$.
We sketch the idea first.

In Section~\ref{se:reduce-to-formula} a structure $\nn \in \ORDW(\varphi)$ with universe $N$ represents a model $\mm$ of $\varphi$ with universe $M$ and at the same time also contains useful labelings for $\mm$.
Here, we define structures $\nn$ which extend $\mm$ in a different though similar way.
Let $k = |\{C \mid C \in \varphi \}|$.
We introduce new concepts $P_1,\ldots,P_k$ and use them to require that $O := N \backslash M$ is of size $2^k$ and  that $\suc$ is interpreted as a successor relation on $O$.
We think of the reflexive-transitive closure of $\suc^\nn$ as $\order^\nn$ from Definition~ \ref{def:ORD}, but we will not compute $\order^\nn$ explicitly.
For every binary word $b_1 \ldots b_k$, there will be exactly one element of $O$ in $\bigcap_{i:b_i=1}P_i^\nn \cap \bigcap_{i:b_i=0} \neg P_i^\nn$.
I.e., $P_i^{\mm}$ represents elements whose corresponding binary word has $b_i=1$.
$\suc^\nn$ will be induced by the usual successor relation on binary words of length $k$:
an element $u \in O$ is the successor of $v \in O$ in $\suc^\nn$ iff 
there is $\ell$ such that (1) $u$ and $v$ agree on $P_i^\nn$, $i > \ell$, 
(2) $u \in P_\ell^\nn$ and $v \notin P_\ell^\nn$ and 
(3) $v \in P_i^\nn$ and $u \notin P_i^\nn$, $i<\ell$.  
Similarly as in Definition~ \ref{def:ORD}, the functions $f_{h'}^\nn$ need to be useful labelings, using $O$ for the numbers $[1,2^k]$ and $(\suc^\nn)^{*}$ for the linear order on natural numbers in Definition~\ref{def:useful-labeling}.
Importantly, we do not define the transitive closure $\left(\suc^\nn\right)^{*}$ explicitly.
Instead, we exploit the fact that $b_{y}\ldots b_{1}$ is less than $d_{y}\ldots d_{1}$ iff there exists an index $i$ such that $b_{y}\ldots b_{i+1}=d_{y}\ldots d_{i+1}$, $b_{i}=0$ and $d_{i}=1$.

\begin{definition}
\label{def:ORD-exp}
Let $\varphi$ be a $\Loiq$ formula over vocabulary $\tau$.
Let $k = |\{C \mid C \in \varphi \}|$.
We define an extended vocabulary $\usefulExp(\tau)$ that extends $\tau$ with new atomic concepts $M, P_1,\ldots,P_k$, a new nominal $o_\st$, and new atomic functional roles $\suc, f_1,\ldots, f_h$.

Let $\nn$ be a $\usefulExp(\tau)$-structure with universe $N$.
We denote the substructure of $\nn$ with universe $M^\nn$ by $\mm$.
We denote the set $N \setminus M^\nn$ by $\nominalsExp^\nn$.
We denote by $\eval: \nominalsExp^\nn \rightarrow [1,2^k]$ the function that maps an element $u \in \nominalsExp^\nn$ to $\eval(u) = 1 + \sum_{i:u \in P_i^\nn} 2^{i-1}$.
We denote by $\left(\suc^\nn\right)^{*}$ the reflexive-transitive closure of $\suc^\nn$.
The structure $\nn$ belongs to $\ORDWExp(\varphi)$ if the following conditions hold:
\begin{enumerate}
\item \label{prop:Mtilde-exp} $\mm$ satisfies $\varphi$.
\item \label{prop:partition-exp} We have $\nominalsExp^\nn = \{o_\st^\nn\} \cup \bigcup_{1 \le i \le k} P_i^\nn$.
\item \label{prop:linear-exp}
    $\eval$ is a bijective function, $\eval(o_\st^\nn) = 1$, and
    $\suc(u) = v$ iff $\eval(u) + 1 = \eval(v)$ for all $u,v \in \nominalsExp^\nn$.
\item \label{prop:ftypes-exp} $f_{h'}^\nn$ is a function from $A_\ell^\mm$ to $\nominals^\nn$, for every $1 \le {h'} \le h$.
\item \label{prop:ford-exp} $f_{h'}^\nn$ is a $h'$-useful labeling for $\mm$, using $ \nominalsExp^\nn$ for the natural numbers $[1,2^k]$ and $\left(\suc^\nn\right)^{*}$ for the order on the natural numbers in Definition~\ref{def:useful-labeling}, for every $1 \le h' \le h$.
\end{enumerate}
\end{definition}

\begin{lemma}
\label{lem:extension-useful-exp}
$\ORDWExp(\varphi)$ is non-empty iff there is a model $\mm$ of $\varphi$ with $h'$-useful labelings for $\mm$ for every $1\leq h'\leq h$.
\end{lemma}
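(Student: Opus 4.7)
\bigskip

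\noindent\emph{Proof plan.}
My plan is to mirror the proof of Lemma~\ref{lem:extension-useful}, replacing the explicit nominals $o_1,\dots,o_k$ and the explicit order $\order^\nn$ of Definition~\ref{def:ORD} by the succinct encoding of Definition~\ref{def:ORD-exp}. The point is that properties (\ref{prop:partition-exp}) and (\ref{prop:linear-exp}) together imply that $\eval$ is a bijection from $\nominalsExp^\nn$ onto $[1,2^k]$ that maps $\suc^\nn$ to the standard successor relation and hence maps $(\suc^\nn)^{*}$ to the usual linear order $\leq$. Consequently, $(\nominalsExp^\nn,(\suc^\nn)^{*})$ plays exactly the same role as $(\nominals^\nn,\order^\nn)$ in the previous lemma, and since $|\TYPES_\varphi|=2^k$, the range of a useful labeling matches $\nominalsExp^\nn$ cell-by-cell through $\eval$.

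For the direction $\Leftarrow$, suppose $\mm$ is a model of $\varphi$ with $h'$-useful labelings $f'_{h'}\colon A_{h'}^{\mm}\to[1,|\TYPES_\varphi|]=[1,2^k]$ for every $1\leq h'\leq h$. I would take $N=M\uplus O$ where $O=\{o_1,\dots,o_{2^k}\}$ is a fresh set of $2^k$ elements, set $M^\nn=M$ and inherit every interpretation of a $\tau$-symbol from $\mm$. I would interpret the new concepts $P_i$ so that for each $j\in[1,2^k]$, $o_j\in P_i^\nn$ iff the $i$-th bit of $j-1$ is $1$; this makes $\eval(o_j)=j$. I set $o_\st^\nn=o_1$, define $\suc^\nn$ on $O$ so that $\suc^\nn(o_j)=o_{j+1}$ for $1\le j<2^k$ (undefined elsewhere), and set $f_{h'}^\nn(u)=o_{f'_{h'}(u)}$ for $u\in A_{h'}^{\mm}$, $f_{h'}^\nn$ undefined outside. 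Properties (\ref{prop:Mtilde-exp})--(\ref{prop:ftypes-exp}) are immediate by construction, and property (\ref{prop:ford-exp}) follows from the fact that $\eval$ is an order-isomorphism between $(\nominalsExp^\nn,(\suc^\nn)^{*})$ and $([1,2^k],\leq)$, so the two conditions of Definition~\ref{def:useful-labeling} for $f'_{h'}$ transfer verbatim to $f_{h'}^\nn$.

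For the direction $\Rightarrow$, let $\nn\in\ORDWExp(\varphi)$ and let $\mm$ be the substructure induced by $M^\nn$. By (\ref{prop:Mtilde-exp}), $\mm\models\varphi$. By (\ref{prop:partition-exp}) and (\ref{prop:linear-exp}), $\eval$ is a bijection $\nominalsExp^\nn\to[1,2^k]$ sending $\suc^\nn$ to the successor relation on $[1,2^k]$; in particular its reflexive-transitive closure $(\suc^\nn)^{*}$ restricted to $\nominalsExp^\nn$ is total and corresponds to $\leq$ on $[1,2^k]$. Using (\ref{prop:ftypes-exp}), set $f'_{h'}(u)=\eval(f_{h'}^\nn(u))$ for $u\in A_{h'}^{\mm}$; this is a function $A_{h'}^{\mm}\to[1,2^k]=[1,|\TYPES_\varphi|]$. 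Condition (\ref{prop:ford-exp}) together with the order-isomorphism yields that $f'_{h'}$ satisfies both clauses of Definition~\ref{def:useful-labeling}: the equal-value clause because $\eval$ is a bijection, and the predecessor clause because an edge in $D_{h'}^{\mm}$ witnessing the labeling of $u$ via $f_{h'}^\nn$ and $(\suc^\nn)^{*}$ translates, through $\eval$, into the same witness via $f'_{h'}$ and $<$.

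The routine-but-careful step is the verification that $\eval$ is indeed an order-isomorphism between $(\nominalsExp^\nn,(\suc^\nn)^{*})$ and $([1,2^k],\leq)$, since this is what bridges the succinct binary encoding of Definition~\ref{def:ORD-exp} with the flat linear order assumed by Definition~\ref{def:useful-labeling}; given this, the remainder is a bookkeeping translation of Lemma~\ref{lem:extension-useful}. No new ideas beyond the binary encoding of addresses are required, which is why I expect no genuine obstacle; the real difficulty, reserved for the next lemma, is to express the predicate $\ORDWExp(\varphi)$ as an $\LoiqNOb$-formula of polynomial size \emph{without} defining $(\suc^\nn)^{*}$ explicitly.
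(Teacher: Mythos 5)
Your proof is correct and is exactly the argument the paper intends: the paper omits the proof of Lemma~\ref{lem:extension-useful-exp}, leaving it as the direct analogue of Lemma~\ref{lem:extension-useful}, and your construction (interpreting the $P_i$ by the bits of $j-1$ so that $\eval$ becomes an order-isomorphism between $(\nominalsExp^\nn,(\suc^\nn)^{*})$ and $([1,2^k],\leq)$, then transferring the two clauses of Definition~\ref{def:useful-labeling} in both directions) is precisely that analogue. No gaps.
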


\begin{lemma}
\label{lem:definingORD-exp}
For every formula $\varphi \in \Loiq$ there exists a formula $\usefulExp(\varphi)$, of size polynomial in $k$, such that $\usefulExp(\varphi)$ defines $\ORDWExp(\varphi)$.
\end{lemma}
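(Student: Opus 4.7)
The plan is to parallel the construction of $\useful(\varphi)$ in Lemma~\ref{lem:definingORD} and set $\usefulExp(\varphi) = \theta^{1} \wedge \theta^{2} \wedge \theta^{3} \wedge \theta^{4} \wedge \theta^{5a} \wedge \theta^{5b}$, with each $\theta^{j}$ enforcing property $j$ of Definition~\ref{def:ORD-exp}. The relativization $\theta^{1}$, the partition statement $\theta^{2}$ asserting $\neg M \equiv o_{\st} \sqcup \mybigsqcupDisp_{i} P_{i}$ together with $o_{\st} \sqsubseteq \neg M$ and $P_{i} \sqsubseteq \neg M$, the function/typing axiom $\theta^{4}$, and the same-type axiom $\theta^{5a}$ can be transcribed almost verbatim from the proof of Lemma~\ref{lem:definingORD}, with the $O$-region now cut out implicitly by $\theta^{2}$ rather than by explicit nominals $o_{1},\ldots,o_{k}$; none of these parts blows up.

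For $\theta^{3}$ I would enforce that $(\nominalsExp^{\nn},\suc^{\nn})$ is isomorphic to $[1,2^{k}]$ with binary successor by axiomatizing the carry rule directly: $o_{\st} \sqsubseteq \mybigsqcapDisp_{i} \neg P_{i}$; $\suc$ and $\suc^{-}$ are functional on $\nominalsExp^{\nn}$; and, for each flip position $\ell \in [1,k]$, every element in $P_{1} \sqcap \cdots \sqcap P_{\ell-1} \sqcap \neg P_{\ell}$ has its $\suc$-successor in $\neg P_{1} \sqcap \cdots \sqcap \neg P_{\ell-1} \sqcap P_{\ell}$ while agreeing with it on every $P_{j}$ with $j > \ell$. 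Together with the statement that only $P_{1} \sqcap \cdots \sqcap P_{k}$ has no $\suc$-successor and only $o_{\st}$ has no $\suc^{-}$-predecessor inside $\nominalsExp^{\nn}$, this pins down the order structure uniquely with $O(k^{2})$ symbols.

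The main obstacle is $\theta^{5b}$, the polynomial-size encoding of the useful labeling condition against the order $(\suc^{\nn})^{*}$, which itself is not expressible in $\Loiq$. I plan to avoid the transitive closure via the bit-wise characterization stated just before Definition~\ref{def:ORD-exp}: $f_{h'}(w) < f_{h'}(v)$ iff there is some distinguishing bit $i$ with $w \notin \exists f_{h'}.P_{i}$, $v \in \exists f_{h'}.P_{i}$, and $w \in \exists f_{h'}.P_{j}$ iff $v \in \exists f_{h'}.P_{j}$ for every $j > i$. The subtlety is that the agreement on bits $j > i$ links the $f_{h'}$-images of $w$ and $v$, and naively pushing it into a concept on $w$ alone would require case-splitting over the $2^{k-i}$ possible high-bit patterns of $v$.

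To sidestep the blow-up, I would introduce, for every pair $(h',i)$, a fresh atomic role $R_{h',i}$ on $M^{\nn} \times M^{\nn}$ and impose only upper-bound axioms, so that any pair placed in $R_{h',i}$ automatically witnesses \emph{less than via distinguishing bit $i$}: $\exists R_{h',i} \sqsubseteq A_{h'} \sqcap \neg \exists f_{h'}.P_{i}$, $\exists R_{h',i}^{-} \sqsubseteq A_{h'} \sqcap \exists f_{h'}.P_{i}$, and, for each $j > i$, the two inclusions $\exists f_{h'}.P_{j} \sqsubseteq \neg \exists R_{h',i}.\neg \exists f_{h'}.P_{j}$ and $\neg \exists f_{h'}.P_{j} \sqsubseteq \neg \exists R_{h',i}.\exists f_{h'}.P_{j}$. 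Then $\theta^{5b}$ becomes the compact inclusion
\[
A_{h'} \sqcap \neg B_{h'} \sqsubseteq \exists f_{h'}.\exists f_{h'}^{-}.\bigl(A_{h'} \sqcap \mybigsqcupDisp_{1 \le i \le k} \exists R_{h',i}^{-}.A_{h'}\bigr),
\]
exploiting functionality of $f_{h'}$ so that $\exists f_{h'}.\exists f_{h'}^{-}$ picks out exactly the elements sharing $f_{h'}(u)$. The most delicate remaining point---and the one I expect to be the main technical hurdle---is to also force the witnessing pair to lie in some $s^{\mm}$ with $s \in S_{h'}$, since $\Loiq$ lacks role inclusion; I would handle this by refining $R_{h',i}$ into one role per $s \in S_{h'}$ and exploiting the functionality of $s \in \NF$ together with number-restriction arguments to simulate the missing inclusion. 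Correctness then follows the template of Lemma~\ref{lem:definingORD}: the upper bounds force every model of $\usefulExp(\varphi)$ into $\ORDWExp(\varphi)$, while any $\nn \in \ORDWExp(\varphi)$ can be extended by populating the $R_{h',i}$ with all valid witnesses. Since only $O(hk)$ auxiliary roles are used, each constrained by $O(k)$ inclusions of constant size, the total length of $\usefulExp(\varphi)$ stays polynomial in $k$.
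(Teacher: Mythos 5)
Your overall decomposition ($\theta^{1}$--$\theta^{5a}$, the carry-rule axiomatization of $\suc$ in $\theta^{3}$) matches the paper, and you correctly isolate the one genuinely new difficulty: expressing $f_{h'}(w)<f_{h'}(v)$ without the transitive closure and without an exponential case split over high-bit patterns. However, your $\theta^{5b}$ has a real gap that you acknowledge but do not close. Definition~\ref{def:useful-labeling} requires the witnessing pair $(w,v)$ to be an \emph{edge of $D_{h'}^{\mm}$}, i.e.\ a pair in $s^{\mm}$ for some $s\in S_{h'}$ with both endpoints in $A_{h'}^{\mm}$; this is not a cosmetic condition, since the proof of Lemma~\ref{lem:rhd-connectedness--} uses the edge $(w,a_0)$ of $D_{h'}^{\mm}$ to reroute reachability after applying $\rhd$. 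Your roles $R_{h',i}$ are fresh and only carry upper-bound axioms about $f_{h'}$-bits, so a model of your $\usefulExp(\varphi)$ may realize $\theta^{5b}$ with $R_{h',i}$-edges that are not $S_{h'}$-edges at all; such a model yields a labeling that is not $h'$-useful, the soundness direction of the lemma fails, and with it Theorem~\ref{th:poly}. Your proposed repair (``refining $R_{h',i}$ per $s$ and exploiting functionality and number restrictions'') is exactly the unproved step: $\Loiq$ has no role inclusion, and axioms of the form $\exists R.C\sqsubseteq\exists s.C$ only force the $s$-successor to have the same relevant type as the $R$-successor; making this work requires asserting agreement on \emph{all} bits $P_1,\ldots,P_k$ so that $\theta^{3}$ forces the two targets of $f_{h'}$ to coincide --- at which point the auxiliary roles are doing no work.

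The paper sidesteps the auxiliary roles entirely, and this is where your proposal diverges from a working argument. For each $h'$, each $s\in S_{h'}$ and each bit position $i$, it defines a concept $E_{h',s,i}$ on the \emph{source} element $w$, namely $\exists f_{h'}.\neg P_i\sqcap\exists s.\exists f_{h'}.P_i$ conjoined with, for every $j>i$, the agreement disjunction $\exists f_{h'}.P_j\sqcap\exists s.\exists f_{h'}.P_j\sqcup\exists f_{h'}.\neg P_j\sqcap\exists s.\exists f_{h'}.\neg P_j$. Because $s\in\NF$ is functional, ``the $s$-successor of $w$'' is determined, so the bit-agreement between $f_{h'}(w)$ and $f_{h'}(s(w))$ is a single concept of size $O(k)$ on $w$ --- the $2^{k-i}$ case split you were worried about never arises. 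Then $\theta^{5b}$ is $\exists f_{h'}^{-}.\neg B_{h'}\sqsubseteq\exists f_{h'}^{-}.(\bigsqcup_{s\in S_{h'},i}\exists s^{-}.E_{h',s,i})$, and the witnessing edge is an $s$-edge by construction. You should replace your $R_{h',i}$ machinery with this source-side encoding (or carry out the all-bits agreement axioms in full, and verify that they force the $s$-successor and the $R$-successor to map to the same element of $\nominalsExp^{\nn}$).
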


\begin{proof}
We set $\useful(\varphi) = \theta^{1} \land \theta^{2} \land \theta^{3} \land \theta^{4} \land \theta^{5a} \land \theta^{5b}$.
$\theta^{X}$ defines the property $X$ in Definition \ref{def:ORD-exp}.
The formulae $\theta^{1}$, $\theta^{4}$ and $\theta^{5a}$ are the same as in the proof of Lemma~\ref{lem:definingORD}.
\begin{itemize}
\item
Let $\theta^{2} = \neg M \equiv o_\st \sqcup \displaystyle{\bigsqcup_{1 \le i \le k}} P_i$.
$\theta^{2}$ says that $\nominalsExp^\nn = N \setminus M^\nn = \{o_\st^\nn\}  \cup \bigcup_{1 \le i \le k} P_i^\nn$.
\item
Let $\theta^{3} = \zeta_{consec} \land \zeta_{first} \land \zeta_{last}$.
$\zeta_{consec}$ expresses that the successor relation mimics the binary words:
two words $b_{k}\ldots b_{1}$ and $d_{k}\ldots d_{1}$ are consecutive in $\suc$ iff there exists an index $i$ such that $b_{i}\ldots b_{1} = 01^{i-1}$, $d_{i}\ldots d_{1} = 10^{i-1}$, and $b_{k}\ldots b_{i+1}=d_{k}\ldots d_{i+1}$.
We introduce concepts $C_i$, for every $1\leq i\leq k$ (the concepts $C_i$ can be either added as fresh concepts or used as abbreviations; the resulting formula $\zeta_{consec}$ will be of polynomial size in both cases):
\[
\begin{array}{lll}
C_{i} & = & \neg P_{i}\sqcap\exists \suc.P_{i}\\
C_{<i} & = & {\displaystyle \mybigsqcapDisp_{j<i}\left(P_{j}\sqcap\exists \suc.\neg P_{j}\right)}\\
C_{>i} & = & {\displaystyle \mybigsqcapDisp_{i<j\leq y}\left(P_{j}\sqcap\exists \suc.P_{j}\sqcup\neg P_{j}\sqcap\exists \suc.\neg P_{j}\right)}\\
\zeta_{consec} & = & {\displaystyle (\neg M \sqcap \neg \left( P_{1} \sqcap \cdots \sqcap P_{k} \right) \sqsubseteq}\\
& & \quad \quad \quad \displaystyle{\bigsqcup_{1\leq i\leq k}} C_{<i}\sqcap C_{i}\sqcap C_{>i})
\end{array}
\]
The formulae
\[
    \begin{array}{lll}
    \zeta_{first} & = & o_\st \equiv \left(\neg P_{1} \sqcap \cdots \sqcap \neg P_{k} \right) \land\\
    & & \quad \quad \quad \exists \suc^{-}\equiv\neg M \sqcap o_\st\\
    \zeta_{last} & = & \exists \suc\equiv\neg M \sqcap \neg \left( P_{1} \sqcap \cdots \sqcap P_{k} \right)
    \end{array}
\]
specify that all elements in $\nominalsExp^\nn$ except for $(P_{1} \sqcap \cdots \sqcap P_{k})^\nn$ have a successor, that all elements except for $o_\st^\nn$ have a predecessor and that $\left(\neg P_{1}\sqcap\cdots\sqcap\neg P_{y}\right)^\nn$ contains exactly the single element $o_\st^\nn$.
The above stated facts imply that for every vector $(b_1,\ldots,b_k) \in \{0,1\}^k$ there is exactly one element of $\nominalsExp^\nn$ in
\[
(\mybigsqcapDisp_{i:b_{i}=1} P_{i} \sqcap \mybigsqcapDisp_{i:b_{i}=0} \neg P_{i})^\nn.
\]
\item
We do not define the transitive closure $\left(\suc^\nn\right)^{*}$ explicitly.
Instead, we exploit the fact that $b_{y}\ldots b_{1}$ is less than $d_{y}\ldots d_{1}$ iff there exists an index $i$ such
that $b_{y}\ldots b_{i+1}=d_{y}\ldots d_{i+1}$, $b_{i}=0$ and $d_{i}=1$.
We introduce concepts $E_{h',s,i}^\nn$, for every $1 \le h' \le h, s\in S_{h'}$ and $ \in [1,k]$, which will contain all of the elements $u \in M$ such that the types of $u$ and $s^\nn(u)$ agree on membership in $P_{i+1}^\nn,\ldots,P_{y}^\nn$, $u\notin P_{i}^\nn$ and $s^\nn(u)\in P_{i}^\nn$ (the concepts $E_{h',s,i}^\nn$ can be either added as fresh concepts or used as abbreviations; the resulting formula $\theta^{5b}$ will be of polynomial size in both cases).
\begin{eqnarray*}
E_{h',s,i} & \equiv & \exists f_{h'}.\neg P_{i} \sqcap\exists s.\exists f_{h'}.P_{i} \sqcap \\
& &\mybigsqcapDisp_{i+1\leq j\leq y} \Big( \exists f_{h'}.P_{j}\sqcap\exists s.\exists f_{h'}.P_{j} \sqcup\\
& & \quad \quad \exists f_{h'}.\neg P_{j}\sqcap\exists s.\exists f_{h'}.\neg P_{j}\Big)
\end{eqnarray*}
The formula $\theta^{5b}$ states that for every element $u\in A_{h'}^\mm$ with $u \not\in B_{h'}^{\mm}$ there is an element $v$ with the same value (i.e., $v$ points to the same element as $u$ via $f_{h'}$) such that $v$ is again in $A_{h'}^{\mm}$ and $v$ has a previous element which is smaller than $v$.
$\theta^{5b}$ is the conjunction of
\[
\begin{array}{ll}
\exists f_{h'}^{-}. \neg B_{h'} \sqsubseteq \exists f_{h'}^{-}. \left(\displaystyle{\bigsqcup_{s\in S_{h'}, i \in [1,k]}} \exists s^{-}.E_{h',s,i}\right)
\end{array}
\]
for every $1\leq h'\leq h$.
\end{itemize}
\end{proof}

\begin{theorem}
\label{th:poly} Let $\Phi_{i}=\varphi_{i}\land\bigwedge RE_{i}\land\bigwedge DI_{i}\in\L$
for $i=1,2$. 
There are polynomial-time computable $\LoiqNOb$ formulas
$\eta$ and $\rho$ over an extended vocabulary such that
\begin{enumerate}
\item $\Phi_{1}$ is satisfiable iff $\eta$ is satisfiable.
\item $\Phi_{1}$ implies $\Phi_{2}$ iff $\rho$ is not satisfiable.
\item Satisfiability and implication in $\L$ is NEXPTIME-complete.
\end{enumerate}

\end{theorem}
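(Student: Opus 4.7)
The plan is to mirror the proof of Theorem~\ref{th:exp}, but replacing the exponentially-many-nominals construction from Definition~\ref{def:ORD} by the polynomial-size binary-counter construction in Definition~\ref{def:ORD-exp} and Lemma~\ref{lem:definingORD-exp}. First I would set $\eta = \usefulExp(\semi{\Phi_1})$. By Lemma~\ref{lem:semi-def} the formula $\semi{\Phi_1}$ is a polynomial-size $\Loiq$-formula whose models are exactly the $\Phi_1$-semi-connected structures, and by Lemma~\ref{lem:definingORD-exp} the operator $\usefulExp$ adds only polynomially many symbols and constraints on top of it while characterising the existence of $h'$-useful labelings. Chaining Lemma~\ref{lem:conn-equiv-semi-conn-plus-useful-ord--}, Lemma~\ref{lem:extension-useful-exp} and Lemma~\ref{lem:definingORD-exp} then yields the equivalence $\Phi_1$ is finitely satisfiable iff $\eta$ is finitely satisfiable, establishing item (1).

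For item (2), I would reuse verbatim the construction of $\kappa_\varphi$ from the proof of Theorem~\ref{th:exp}: introduce fresh atomic concepts $X_{h'}$ together with the guard formulae $\alpha_{h'}$ which, by a finiteness argument, witness $\neg\Reacha{B_{h'}}{S_{h'}}{A_{h'}}$ whenever an appropriate $X_{h'}^\mm$ exists. Setting $\rho = \usefulExp\!\bigl(\semi{\Phi_1} \land (\neg\varphi_2 \lor \neg\bigwedge DI_2 \lor \bigvee_{h'}\alpha_{h'})\bigr)$ and invoking (1) on this augmented formula, one obtains that $\rho$ is finitely satisfiable iff there is a finite model of $\Phi_1$ that falsifies $\Phi_2$, i.e., iff $\Phi_1 \not\models \Phi_2$. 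Both $\eta$ and $\rho$ are polynomial-time computable because every component in Lemma~\ref{lem:definingORD-exp} has polynomial size in $|\varphi|$, in particular $\zeta_{consec}$ and $\theta^{5b}$ (the concepts $C_i, C_{<i}, C_{>i}, E_{h',s,i}$ need only be introduced once each, either as abbreviations or as fresh concepts).

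Finally, for NEXPTIME-completeness I would argue as follows. The upper bound follows because $\LoiqNOb$ satisfiability is NEXPTIME-complete by \cite{thesis:Tobias01}, and items (1) and (2) reduce $\L$-satisfiability and $\L$-implication to $\LoiqNOb$-satisfiability in polynomial time. The matching lower bound is immediate: any $\LoiqNOb$-formula $\varphi$ is trivially an $\L$-formula (take $RE = DI = \emptyset$, in which case the semi-connectedness and useful-labeling constraints are vacuous), so NEXPTIME-hardness of $\LoiqNOb$ transfers directly to $\L$, and via the standard trick (satisfiability of $\varphi$ reduces to non-implication of $\bot$ by $\neg\varphi$, or simply by taking $\Phi_2$ unsatisfiable) also to implication.

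The main obstacle is purely bookkeeping rather than conceptual: verifying that the reformulated constraints of Definition~\ref{def:ORD-exp}, which encode the linear order of length $2^k$ implicitly via the binary representations $P_1,\ldots,P_k$ and the relation $\suc$, really do yield the same semantic content as Definition~\ref{def:ORD}; this has already been absorbed into Lemma~\ref{lem:extension-useful-exp} and Lemma~\ref{lem:definingORD-exp}, so at the level of Theorem~\ref{th:poly} no genuinely new argument is required beyond instantiating $\mu$ and $\kappa$ of Theorem~\ref{th:exp} with their polynomial counterparts $\eta$ and $\rho$, and invoking the complexity results for $\LoiqNOb$.
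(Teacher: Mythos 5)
Your proposal matches the paper's proof essentially verbatim: the paper likewise sets $\eta = \usefulExp(\semi{\Phi_1})$ and obtains item (1) by chaining Lemmas \ref{lem:conn-equiv-semi-conn-plus-useful-ord--}, \ref{lem:semi-def}, \ref{lem:extension-useful-exp} and \ref{lem:definingORD-exp}, derives item (2) via the same $X_{h'}$/$\alpha_{h'}$ construction carried over from Theorem \ref{th:exp}, and gets item (3) from the NEXPTIME-completeness of $\LoiqNOb$ together with the containment of $\LoiqNOb$ in $\L$. The only difference is that you spell out item (2) explicitly, where the paper merely writes ``similarly to Theorem \ref{th:exp}''.
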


(1) follows from Lemmas \ref{lem:conn-equiv-semi-conn-plus-useful-ord--},
\ref{lem:semi-def}, \ref{lem:extension-useful-exp} and \ref{lem:definingORD-exp} by setting $\eta = \usefulExp(\semi{\Phi})$.

(2) follows from (1) similarly to Theorem \ref{th:exp}.

(3): We use here the reduction from $\Loiq$ to $\LoiqNOb$ in  Appendix~\ref{app:Loiq_LoiqNOb}.
Satisfiability in $\LoiqNOb$ is NEXPTIME-complete \cite{thesis:Tobias01}.
Since $\L$ contains $\LoiqNOb$, and at the same time, satisfiability
and implication of $\L$-formulae are polynomial-time reducible to
$\LoiqNOb$ satisfiability, (3) holds.

\subsection{\label{app:useful-proof} \label{app:proofs} Applying $\rhd$ leads to Standard Models}

In the proof of Lemma \ref{lem:connected-implies-useful-order--} we have shown how to turn non-standard models
into standard models by repeated aplications of $\rhd$ and based on Lemma \ref{lem:rhd-connectedness--}, which we prove here.

\begin{proof}[Proof of Lemma~\ref{lem:rhd-connectedness--}]
We assume $\val(D_{h'}^{\mm}) > 0$.
Let $X$ be a base for $D_{h'}^{\mm}$ such that $\val(X) = \val(D_{h'}^{\mm})$.
Because of $\val(D_{h'}^{\mm}) > 0$, there is a base element $a_{1} \in X$ with $a_{1} \not\in B_{h'}^\mm$.
By the $h'$-usefulness of $f_{h'}$, there are
$a_0,w\in A_{h'}^{\mm}$ such that $f_{h'}(a_{0}) = f_{h'}(a_1)$, $f_{h'}(w) < f_{h'}(a_0)$ and the graph $D_{h'}^{\mm}$ has an edge $(w,a_0)$.

By Lemma~\ref{lem:basis-circle} $a_{1}$ belongs to some cycle $C$ in $D_{h'}^{\mm}$.
We denote the successor of $a_1$ in $C$ by $b_{1}$.
Let $r \in \NF$ be the functional role with $(a_1,b_1) \in r^\mm$ and $r \in S_{h'}$ (recall that the set $S_{h'}$ belongs to the reachability-assertion $\Reacha{B_{h'}}{S_{h'}}{A_{h'}}$).
We denote the remaining path from $b_1$ to $a_1$ in $C$ by $\pi_{b_1,a_1}$.
We note that $\pi_{b_1,a_1}$ does not contain $a_0$;
otherwise $X' = X \setminus \{a_1\} \cup \{w\}$ would be a basis for $D_{h'}^{\mm}$ with $\val(D_{h'}^{\mm}) = \val(X) > \val(X')$, contradiction.
We denote by $b_0$ the $r^\mm$-successor of $a_0$, if it exists (i.e., if there is an edge $(a_0,b_0) \in r^\mm$).

We set $\mathfrak{t}=(a_{0},a_{1},r)$.
For all $1\leq\ell\leq h$ we define the shorthand $D_{\ell}^{\rhd}=D_{\ell}^{\op{\mm}{\mathfrak{t}}},\,\, A_{\ell}^{\rhd}=A_{\ell}^{\op{\mm}{\mathfrak{t}}},\,\, B_{\ell}^{\rhd}=B_{\ell}^{\op{\mm}{\mathfrak{t}}}$.

\begin{enumerate}
\item By the compatibility of $RE$ and $DI$, $D_\ell^{\mm}=D_\ell^{\rhd}$ for all $\ell \not= h'$.
\item By Lemma \ref{lem:op-on-concepts}, $\ovtpn_\mm(u)=\ovtpn_{\op{\mm}{\mathfrak{t}}}^{\varphi}(u)$ for all $u\in M$.
    Since the outgoing edges of every vertex $u \notin \{a_0,a_1\}$ do not change by applying $\rhd$,
    $\ovtpn_\mm(u)=\ovtpn_{\op{\mm}{\mathfrak{t}}}^{\varphi}(u)$.
    For $a_0,a_1$, the only change is in $s$, but $a_0$ and $a_1$ have outgoing edges corresponding to $r$ both in $\mm$ and in $\op{\mm}{\mathfrak{t}}$.
    So, $\ovtpn_\mm(u)=\ovtpn_{\op{\mm}{\mathfrak{t}}}^{\varphi}(u)$ for $u\in\{a_0,a_1\}$.

\item By Lemma \ref{lem:op-on-concepts}, $\mm$ and $\op{\mm}{\mathfrak{t}}$ agree on $\varphi$.
\item We show that $X' = X \setminus \{a_1\} \cup \{w\}$ is a base for $D_{h'}^{\rhd}$.
    We have $\val(X) > \val(X')$ by $f_{h'}(a_{0}) = f_{h'}(a_{1})$ and $f_{h'}(w) < f_{h'}(a_{0})$.
    This is sufficient to establish $\val(D_{h'}^{\mm}) = \val(X) > \val(X') \ge  \val(D_{h'}^{\rhd})$.

    We consider some node $v \in D_{h'}^{\mm}$.
    Because $X$ is a basis, $v$ is reachable from some $u \in X$ by some path $\pi$.
    We introduce $Z = \{a_{0},b_{0},a_{1},b_1\}$ as a shorthand.
    We proceed by a case distinction.

    Case 1:
        $\pi$ does not contain any node from $Z$ (in particular $u \neq a_1$).
        Then, $\pi$ also witnesses that $v$ is reachable from $u \in X'$ by $\pi$ in $D_{h'}^{\rhd}$.

    Case 2:
        $\pi$ contains a node from $Z$.
        Then there is a decomposition of $\pi$ into two paths $\pi_1$ and $\pi_2$, i.e.,
        $\pi = \pi_1\pi_2$, such that $\pi_2$ starts with a node $z \in Z$ but otherwise does not visit $Z$.
        We construct a path $\pi_0$ from $w$ to $z$ using a suitable combination of the edges $(w,a_0),(a_0,b_1),(a_1,b_0)$ and the path $\pi_{b_1,a_1}$.
        Then the composition $\pi' = \pi_0\pi_2$ witnesses that $v$ is reachable from $w \in X'$ by $\pi'$ in $D_{h'}^{\rhd}$.

    For use in 6), we point out that the special case of $v = w$ establishes that $w$ is
    either reachable from a node in $X \setminus \{a_1\}$ (case 1)
    or belongs to a cycle in $D_{h'}^{\rhd}$ (case 2).

\item Follows directly from 1 and 2.

\item Follows directly from 1 for every $1\leq \ell \leq h$ with $h'\not=\ell$.
In 4), we have established that $X' = X \setminus \{a_1\} \cup \{w\}$ is a basis for $D_{h'}^{\rhd}$ and that $w$ is either reachable from a node in $X \setminus \{a_1\}$ or belongs to a cycle.
It remains to show that every $u \in X \setminus \{a_1\}$ either belongs to $B_{h'}^\rhd$ or to a cycle of $D_{h'}^{\rhd}$.
We fix some $u \in X \setminus \{a_1\}$.
By Lemma~\ref{lem:basis-circle} $u$ either belongs to $B_{h'}^\mm$ or to a cycle of $D_{h'}^{\mm}$.
If $u \in B_{h'}^\mm$, then $u \in B_{h'}^\rhd$ by 2).
Otherwise, $u$ belongs to a cycle $C$ in $D_{h'}^{\mm}$.
$C$ cannot contain $a_1$; otherwise $X \setminus \{a_1\}$ would be a base with $\val(X \setminus \{a_1\}) < \val(X) = \val(D_{h'}^{\mm})$.
We obtain the cycle $C'$ in $D_{h'}^{\rhd}$ by replacing every edge $(a_0,b_0)$ in $C$ with the path $(a_0,b_1),\pi_{b_1,a_1},(a_1,b_0)$.

\item Follows directly from 1 and 2 for every $1\leq \ell \leq h$ with $h'\not=\ell$.
Because $f_{h'}$ is a $h'$-useful labeling for $D_{h'}^{\mm}$ we have that (1) $f_{h'}(u) = f_{h'}(v)$ implies $\ovtpn_{\mm}^{\varphi}(u) = \ovtpn_{\mm}^{\varphi}(v)$ for all $u,v \in A_{h'}^{\mm}$ and (2) for every element $u\in A_{h'}^{\mm}$, either $u \in B_{h'}^{\mm}$, or there exist elements $v,w\in A_{h'}^{\mm}$ such that $f_{h'}(u) = f_{h'}(v)$, $f_{h'}(w) < f_{h'}(v)$ and the graph $D_{h'}^{\mm}$ has an edge $(w,v)$.
We have that $\ovtpn_\mm(u) = \ovtpn_{\op{\mm}{\mathfrak{t}}}^{\varphi}(u)$ for all $u \in M$ by 2).
Thus, $A_{h'}^{\mm} = A_{h'}^{\rhd}$.
Further, $f_{h'}$ has the same values on $A_{h'}^{\mm}$ and $A_{h'}^{\rhd}$.
Because the operation $\rhd$ changed only the edges $(a_0,b_0)$ and $(a_1,b_1)$ these facts almost show that
$f_{h'}$ is a $h'$-useful labeling for $D_{h'}^{\rhd}$.
It remains to argue that for every element $u$ with $f_{h'}(u) = f_{h'}(a_0)$ ($= f_{h'}(a_1))$  there exist elements $v,w \in A_{h'}^{\rhd}$ such that $f_{h'}(u) = f_{h'}(v)$, $f_{h'}(w) < f_{h'}(v)$ and the graph $D_{h'}^{\rhd}$ has an edge $(w,v)$.
This fact is witnessed by the edge $(w,a_1)$.
\end{enumerate}
\end{proof}

\bibliographystyle{plain}

\appendix 

\section{\label{app:LoiqFO} \texorpdfstring{$\Loiq$}{\LoiqTEXT} and first order logic}

$\Loiq$ has a fairly standard reduction to the two-variable fragment
of first order logic with counting $\ctwo$ (see e.g. \cite{DBLP:journals/ai/Borgida96})
\begin{definition}
Let $tr:\Loiq\to\ctwo$ be given as follows:
\end{definition}
\[
\begin{array}{lll}
tr_{z}(C) & = & C(z)\qquad\qquad~C~\mbox{ is an atomic concept}\\
tr_{z,\bar{z}}(r) & = & r(z,\bar{z})\qquad\qquad r~\mbox{ is an atomic role}\\
tr_{z}(C\sqcap D) & = & tr_{z}(C)\land tr_{z}(D)\\
tr_{z}(C\sqcap D) & = & tr_{z}(C)\lor tr_{z}(D)\\
tr_{z}(\neg C) & = & \neg tr_{z}(C)\\
tr_{z,\bar{z}}(r^{-}) & = & tr_{\bar{z},z}(r)\\
tr_{z}(\exists r.C) & = & \exists y.tr_{z,\bar{z}}(r)\land tr_{\bar{z}}(C)\\
tr_{z}(\exists^{\leq n}r.C) & = & \exists^{\leq n}y.tr_{z,\bar{z}}(r)\land tr_{\bar{z}}(C)\\
tr_{z}(\exists^{\geq n}r.C) & = & \exists^{\geq n}y.tr_{z,\bar{z}}(r)\land tr_{\bar{z}}(C)\\
\\
tr(C\sqsubseteq D) & = & \forall x.tr_{x}(C)\rightarrow tr_{x}(D)\\
tr(\varphi\land\psi) & = & tr(\varphi)\land tr(\psi)\\
tr(\neg\varphi) & = & \neg tr(\varphi)
\end{array}
\]

\begin{lemma}
For every $\varphi\in\Loiq$, $\varphi$ and $tr(\varphi)$ agree
on the truth value of all $\tau$-structures.
\end{lemma}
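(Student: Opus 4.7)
The plan is to prove the lemma by a simultaneous structural induction on the syntax of $\Loiq$-concepts, $\Loiq$-roles, and $\Loiq$-formulas. The statement of the lemma concerns only formulas, but to carry the induction through one needs the standard strengthening to concepts and roles: for every $\tau$-structure $\mm$ with universe $M$, (a) for every concept $C$, every $z \in \{x,y\}$, and every $e \in M$, we have $e \in C^{\mm}$ iff $\mm \models tr_z(C)[z\mapsto e]$; (b) for every role $r$, every ordered pair $(z,\bar z) \in \{(x,y),(y,x)\}$, and every $e,e' \in M$, we have $(e,e') \in r^{\mm}$ iff $\mm \models tr_{z,\bar z}(r)[z\mapsto e,\,\bar z\mapsto e']$; and (c) for every formula $\psi$, we have $\mm \models \psi$ iff $\mm \models tr(\psi)$. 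The lemma is then the instance of (c) for $\psi = \varphi$.

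The base cases follow directly from the definition of $\cdot^{\mm}$: atomic concepts (and, treating nominals as singleton concepts by their interpretation as elements, the nominal case) match $C(z)$; atomic roles match $r(z,\bar z)$. The Boolean cases ($\sqcap$, $\sqcup$, $\neg$ for concepts, and $\land$, $\lor$, $\neg$ for formulas) are immediate because $tr$ commutes with these connectives and both semantics are pointwise. The inverse-role case is handled by the equation $tr_{z,\bar z}(r^-) = tr_{\bar z,z}(r)$, which after one more unfolding yields $r(\bar z, z)$, matching $(e,e') \in (r^-)^{\mm}$ iff $(e',e) \in r^{\mm}$. The concept-inclusion case is by definition: $\mm \models C \sqsubseteq D$ iff $C^{\mm} \subseteq D^{\mm}$ iff for every $e$, $e \in C^{\mm}$ implies $e \in D^{\mm}$, which by the inductive hypothesis matches $\forall x.\,tr_x(C) \to tr_x(D)$.

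The existential and counting cases are the only ones requiring a brief argument, and they work uniformly: for $\exists^{\leq n} r.C$, the translation $\exists^{\leq n}\bar z.\,tr_{z,\bar z}(r) \land tr_{\bar z}(C)$ is satisfied at $e$ iff there are at most $n$ elements $e' \in M$ witnessing both conjuncts, which by the inductive hypothesis is exactly the condition defining $(\exists^{\leq n} r.C)^{\mm}$; the cases $\exists r.C$ and $\exists^{\geq n} r.C$ are identical. The only subtle point in the whole induction is the variable alternation built into the $z,\bar z$ notation: swapping the free and bound variable at each nesting level ensures that the translation uses only $x$ and $y$ regardless of the depth of quantifier nesting, which is what keeps the image inside $\ctwo$. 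This is the one design choice worth verifying, but it is already encoded in the definition of $tr$, so no real obstacle arises; the proof is otherwise mechanical.
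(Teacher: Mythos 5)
Your proof is correct: the paper states this lemma without proof (appealing to the ``fairly standard reduction'' and citing Borgida), and your simultaneous structural induction with the strengthened hypothesis for concepts and roles is exactly the standard argument being invoked. The only points worth flagging are cosmetic: the paper's translation table has an evident typo (the $\sqcup$ clause is written with $\sqcap$ on the left) and omits an explicit clause for nominals, both of which you handle in the natural way.
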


\section{\label{app:Loiq_LoiqNOb} From \texorpdfstring{$\Loiq$}{\LoiqTEXT}
to \texorpdfstring{$\LoiqNOb$}{\LoiqNObTEXT}}

Here we show the reduction from satisfiability in $\Loiq$ to satisfiability
in $\LoiqNOb$.
\begin{lemma}
Let $\tau$ be a vocabulary and $\varphi\in\Loiq(\tau)$. There exist
a vocabulary $\sigma\supseteq\tau$ and $\psi\in\LoiqNOb(\sigma)$
such that $\varphi$ is satisfiable iff $\psi$ is satisfiable, and
the size of $\psi$ is linear in the size of $\varphi$. More precisely:
\begin{enumerate}
\item If $\mm$ is a $\tau$-structure satisfying $\varphi$, then there
exists an extension $\nn$ of $\mm$ such that $\nn\models\psi$.
$\nn$ has the same universe as $\mm$ and agrees with $\mm$ on the
interpretation of the symbols in $\tau$.
\item If $\nn$ is a $\sigma$-structure satisfying $\psi$, then the substructure
of $\nn$ which corresponds to $\tau$ satisfies $\varphi$.
\end{enumerate}
\end{lemma}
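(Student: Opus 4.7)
The plan is a Tseitin-style reification of the Boolean structure of $\varphi$ inside $\LoiqNOb$. Concept-level $\sqcap$, $\sqcup$, $\neg$ remain available in $\LoiqNOb$; only formula-level disjunction and negation are forbidden. I will store the truth value of each subformula as membership of a distinguished nominal in a fresh atomic concept, and then use a functional role to lift this local bit to a concept that is either the whole universe or empty. Once these ``Boolean concepts'' exist, the Boolean connectives of $\varphi$ can be mimicked by the corresponding concept-level operators, and each atomic inclusion can be handled with two carefully chosen $\LoiqNOb$-inclusions.

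Concretely, extend $\tau$ to $\sigma$ by adding a fresh nominal $o$, a fresh functional role $f \in \NF$, an atomic concept $P_\chi$ for every subformula $\chi$ of $\varphi$, and a nominal $n_\chi$ for every atomic inclusion subformula $\chi = C\sqsubseteq D$ (the case $C \equiv D$ being reduced to a conjunction). Writing $T_\chi$ for the concept $\exists f.P_\chi$, let $\psi$ be the conjunction of: (i) $\top \sqsubseteq \exists f.o$, which together with functionality of $f$ forces $f^\nn$ to be the constant function to $o^\nn$, so that $T_\chi^\nn$ equals $M$ when $o^\nn \in P_\chi^\nn$ and $\emptyset$ otherwise; (ii) for every non-atomic subformula the natural equivalence, e.g.\ $T_{\chi_1\land\chi_2}\equiv T_{\chi_1}\sqcap T_{\chi_2}$, $T_{\chi_1\lor\chi_2}\equiv T_{\chi_1}\sqcup T_{\chi_2}$, $T_{\neg\chi'}\equiv \neg T_{\chi'}$; (iii) for every atomic $\chi = C \sqsubseteq D$, the two inclusions $T_\chi \sqcap C \sqsubseteq D$ and $n_\chi \sqsubseteq T_\chi \sqcup (C \sqcap \neg D)$; and (iv) the top-level assertion $\top \sqsubseteq T_\varphi$. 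Every conjunct is a single inclusion built only from concept-level operators, so $\psi$ is in $\LoiqNOb$ over $\sigma$; each subformula contributes $O(|\chi|)$ symbols, giving $|\psi| = O(|\varphi|)$.

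For correctness, given $\mm \models \varphi$ I extend to $\nn$ on the same universe by fixing $o^\nn$ arbitrarily, making $f^\nn$ constant to $o^\nn$, placing $o^\nn$ into $P_\chi^\nn$ exactly when $\mm \models \chi$, and choosing each $n_\chi^\nn$ either as $o^\nn$ (if $\chi$ is true) or as some element of $C^\mm \setminus D^\mm$ (if $\chi$ is false); direct verification yields $\nn \models \psi$. Conversely, from $\nn \models \psi$ an induction on $\chi$ proves $T_\chi^\nn = M$ iff the $\tau$-reduct satisfies $\chi$: the Boolean cases follow from the equivalences of (ii), and in the atomic case the two inclusions in (iii) together simulate the formula-level disjunction ``$C\sqsubseteq D$ holds, or there is a witness to its failure'', with $n_\chi^\nn$ playing the role of the witness precisely when $T_\chi^\nn = \emptyset$. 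Combined with $\top \sqsubseteq T_\varphi$ this gives $\mm \models \varphi$.

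The main obstacle is the atomic case: one must express the global statement $C \sqsubseteq D$ as equivalent to a single Boolean flag using only conjunctions of concept inclusions. The pair of inclusions in step (iii), together with the lifted-concept trick using the constant functional role $f$, is the technical heart of the reduction; once this design is in place, the remaining cases are formal bookkeeping, and both directions of the ``more precisely'' statement follow from the same construction.
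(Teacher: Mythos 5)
Your construction is correct, but it is genuinely different from the one in the paper. The paper's proof works by structural induction on $\varphi$ in negation normal form: a negated inclusion $\neg(C\sqsubseteq D)$ is replaced by a nominal witness $(o\sqsubseteq C)\land(D\sqsubseteq\neg o)$, and each disjunction $\varphi_1\lor\varphi_2$ gets its own ``selector'' gadget $\psi_{prep}$ (a fresh role $r$ and four fresh nominals forcing $\exists r.o_i$ to be either $\top$ or $\bot$), after which every atomic inclusion inside $\varphi_i$ is relativized to $C\sqcap\exists r.o_i\sqsubseteq D\sqcap\exists r.o_i$ so that exactly one disjunct is ``switched on.'' You instead give a flat Tseitin-style encoding: a single constant functional role lifts one bit per subformula to a concept $T_\chi\in\{M,\emptyset\}$, the formula-level Boolean structure is simulated once and for all by concept-level $\sqcap,\sqcup,\neg$ on the $T_\chi$, and each atomic inclusion gets a two-way encoding ($T_\chi\sqcap C\sqsubseteq D$ plus a witness nominal for the failure case) that makes $T_\chi$ \emph{equivalent} to the truth of $\chi$ rather than merely sufficient for it. The two proofs share the essential trick of turning a global Boolean flag into a concept that is either the whole universe or empty, but your version is non-recursive, needs no NNF preprocessing, uses one shared role instead of one per disjunction, and works even on one-element models (the paper has to assume w.l.o.g.\ that satisfiable formulae have models of size greater than one, since its gadget needs $o_X^{\nn}\neq o_Y^{\nn}$); the price is a fresh nominal and a biconditional encoding for every atomic inclusion, including those whose truth value is never actually needed. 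Both yield a linear-size $\LoiqNOb$ formula, so either argument suffices for the lemma.
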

\begin{proof}
We prove the claim by induction on the construction of formulae in
$\Loiq$. The claim we prove is slightly augmented as follows:
\begin{itemize}
\item We assume without loss of generality that $\varphi$ is given in negation
normal form (NNF).
\item $\psi$ will not contain any negations.
\end{itemize}

We may assume without loss of generality that if $\varphi$ is satisfiable,
then it is satisfiable by a structure of size strictly larger than
$1$.

\end{proof}
\begin{description}
\item [{Base}] If $\varphi=C\sqsubseteq D$, then $C\sqsubseteq D\in\LoiqNOb$
and $\sigma=\tau$.
\item [{Closure}] Let $\varphi_{1},\varphi_{2}\in\Loiq(\tau)$ in NNF,
$\sigma_{1},\sigma_{2}\supseteq\tau$ be vocabularies, $\psi_{1}\in\mathcal{\LoiqNOb}(\sigma_{1})$
and $\psi_{2}\in\mathcal{\LoiqNOb}(\sigma_{2})$ as guaranteed. Without
loss of generality, $(\sigma_{1}\backslash\tau)\cap(\sigma_{2}\backslash\tau)=\emptyset$.

\begin{enumerate}
\item $\varphi=\varphi_{1}\land\varphi_{2}$: Let $\psi=\psi_{1}\land\psi_{2}$
and $\sigma=\sigma_{1}\cup\sigma_{2}$.
\item $\varphi=\neg\varphi_{1}$: By the assumption that that $\varphi$
is in NNF, $\varphi_{1}$ is of the form $(C\sqsubseteq D)$. Let
$o$ be a fresh nominal which does not occur in $\sigma_{1}$. Let
$\sigma=\sigma_{1}\cup\{o\}$. Let $\psi=(o\sqsubseteq C)\land(D\sqsubseteq\neg o)$.
\item $\varphi=\varphi_{1}\lor\varphi_{2}$. Let $r$ be a fresh role and
$o_{1},o_{2},o_{X},o_{Y}$ be fresh nominals.
\begin{eqnarray*}
\psi_{prep} & = & (o_{X}\sqsubseteq\neg o_{Y})\land(o_{1}\sqsubseteq\neg o_{2})\land(o_{X}\sqcup o_{Y}\equiv o_{1}\sqcup o_{2})\land\\
 &  & (\exists r.o_{X}\equiv\top)\land(\exists r.o_{Y}\equiv\bot)
\end{eqnarray*}
For a structure $\mm$ with universe $M$, $\mm\models\psi_{prep}$
iff

\begin{enumerate}
\item $o_{X}^{\mm}\not=o_{Y}^{\mm}$
\item $o_{1}^{\mm}\not=o_{2}^{\mm}$
\item $o_{1}^{\mm}=o_{X}^{\mm}$ and $o_{2}^{\mm}=o_{Y}^{\mm}$, or $o_{1}^{\mm}=o_{Y}^{\mm}$
and $o_{2}^{\mm}=o_{X}^{\mm}$.
\item $(\exists r.o_{1})^{\mm}=M$ and $(\exists r.o_{2})^{\mm}=\bot$,
or

$(\exists r.o_{1})^{\mm}=\bot$ and $(\exists r.o_{2})^{\mm}=M$.

\end{enumerate}

For $i\in\{1,2\}$, let $\theta_{i}$ be the formula obtained from
$\varphi_{i}$ by replacing every atomic formula $C\sqsubseteq D$
with $C\sqcap\exists r.o_{i}\sqsubseteq D\sqcap\exists r.o_{i}$.
Let $\sigma=\sigma_{1}\cup\sigma_{2}\cup\{r,o_{1},o_{2},o_{X},o_{Y})$.
Let $\psi=\psi_{prep}\land\theta_{1}\land\theta_{2}$. The desired
property follows directly from the claim:
\begin{claim}
Let $\nn$ be a $\sigma$-structure such that $\nn\models\psi_{prep}$,
and let $\mm_{\nn}$ be the substructure of $\nn$ which corresponds
to $\tau$.
\begin{enumerate}
\item If $(\exists r.o_{1})^\nn=M$, then $\nn\models\psi$ iff $\mm_{\nn}\models\varphi_{1}$.
\item If $(\exists r.o_{1})^\nn=\emptyset$, then $\nn\models\psi$ iff
$\mm_{\nn}\models\varphi_{2}$. \end{enumerate}
\begin{proof}
~\end{proof}
\begin{enumerate}
\item Let $\nn$ be a $\sigma$-structure such that $(\exists r.o_{1})^\nn=M$.
For every atomic formula $C\sqsubseteq D$ in $\varphi_{1}$, $(C\sqcap\exists r.o_{1})^\nn=C^\nn\cap M=C^\nn$
and $(D\sqcap\exists r.o_{1})^\nn=D^\nn\cap M=D^\nn$. Hence,
$\mm_{\nn}\models C\sqsubseteq D$ iff $\nn\models C\sqcap\exists r.o_{1}\sqsubseteq D\sqcap\exists r.o_{1}$.
By construction of $\theta_{1}$, $\mm\models\varphi_{1}$ iff $\nn\models\theta_{1}$.

For every atomic formula $C\sqsubseteq D$ in $\varphi_{2}$, $(C\sqcap\exists r.o_{2})^\nn=C^\nn\cap\emptyset=\emptyset$
and $(D\sqcap\exists r.o_{2})^\nn=D^\nn\cap\emptyset=\emptyset$.
Hence, $\nn\models C\sqcap\exists r.o_{2}\sqsubseteq D\sqcap\exists r.o_{2}$.
Since $\theta_{2}$ is a negation free Boolean combination of atomic
formulae, $\nn\models\theta_{2}$.

\item This case is symmetric to the previous case. Let $\nn$ be a $\sigma$-structure
such that $(\exists r.o_{1})^\nn=\emptyset$. Then $(\exists r.o_{2})^\nn=M$.
For every atomic formula $C\sqsubseteq D$ in $\varphi_{2}$, $(C\sqcap\exists r.o_{2})^\nn=C^\nn\cap M=C^\nn$
and $(D\sqcap\exists r.o_{2})^\nn=D^\nn\cap M=D^\nn$. Hence,
$\mm\models C\sqsubseteq D$ iff $\nn\models C\sqcap\exists r.o_{2}\sqsubseteq D\sqcap\exists r.o_{2}$.
By construction of $\theta_{2}$, $\mm_{\nn}\models\varphi_{2}$ iff
$\nn\models\theta_{2}$.

For every atomic formula $C\sqsubseteq D$ in $\varphi_{1}$, $(C\sqcap\exists r.o_{1})^\nn=C^\nn\cap\emptyset=\emptyset$
and $(D\sqcap\exists r.o_{1})^\nn=D^\nn\cap\emptyset=\emptyset$.
Hence, $\nn\models C\sqcap\exists r.o_{1}\sqsubseteq D\sqcap\exists r.o_{1}$.
Since $\theta_{1}$ is a negation free Boolean combination of atomic
formulae, $\nn\models\theta_{1}$.

\end{enumerate}
\end{claim}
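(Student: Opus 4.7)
The plan is to exploit the design of $\psi_{prep}$, which splits the universe-level interpretation of $\exists r.o_1$ and $\exists r.o_2$ into a ``full'' branch and an ``empty'' branch. Since $\psi = \psi_{prep} \land \theta_1 \land \theta_2$ and $\nn \models \psi_{prep}$ is assumed, satisfaction of $\psi$ at $\nn$ reduces to $\nn \models \theta_1 \land \theta_2$, and the proof naturally splits into the two cases distinguished by whether $(\exists r.o_1)^\nn = M$ or $(\exists r.o_1)^\nn = \emptyset$.

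First I would unpack $\psi_{prep}$: its conjuncts $o_X \sqsubseteq \neg o_Y$, $o_1 \sqsubseteq \neg o_2$ and $o_X \sqcup o_Y \equiv o_1 \sqcup o_2$ force $\{o_1^\nn,o_2^\nn\} = \{o_X^\nn,o_Y^\nn\}$ as a two-element set, while the remaining conjuncts $\exists r.o_X \equiv \top$ and $\exists r.o_Y \equiv \bot$ pin down $(\exists r.o_X)^\nn = M$ and $(\exists r.o_Y)^\nn = \emptyset$. Combined, these yield exactly two configurations, namely $(\exists r.o_1)^\nn = M$ with $(\exists r.o_2)^\nn = \emptyset$, or the swap, so the two cases of the Claim are exhaustive and mutually exclusive.

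For Case~1 I would treat $\theta_1$ and $\theta_2$ separately. Each atomic inclusion $C \sqcap \exists r.o_1 \sqsubseteq D \sqcap \exists r.o_1$ of $\theta_1$ interprets in $\nn$ as $C^\nn \cap M \subseteq D^\nn \cap M$, i.e.\ simply $C^\nn \subseteq D^\nn$; since $\mm_\nn$ agrees with $\nn$ on the $\tau$-symbols that appear in $C,D$, this is equivalent to $\mm_\nn \models C \sqsubseteq D$. As $\theta_1$ has the same Boolean skeleton as $\varphi_1$, propagating the atom-wise equivalence through the shared structure gives $\nn \models \theta_1$ iff $\mm_\nn \models \varphi_1$. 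For $\theta_2$, each atomic inclusion collapses to $\emptyset \sqsubseteq \emptyset$ and is trivially satisfied; the assumption that $\theta_2$ is a negation-free Boolean combination of atomic inclusions then delivers $\nn \models \theta_2$ outright. Combining yields $\nn \models \psi$ iff $\mm_\nn \models \varphi_1$. Case~2 is symmetric with the roles of $o_1, o_2$ and $\varphi_1, \varphi_2$ interchanged.

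The main obstacle I foresee is the negation-freeness invariant used in the last step for $\theta_2$. Since $\varphi_i$ is in NNF it may contain negations of atomic inclusions, and a literal substitution inside $\varphi_i$ would propagate those negations into $\theta_i$, breaking the monotone propagation of truth through $\land$ and $\lor$. The intended reading is that the replacement is performed on top of the inductively constructed $\psi_i$ (which is already in $\LoiqNOb$ and hence negation-free by the augmentation of the outer statement), or, equivalently, that one argues by a separate sub-induction on the NNF skeleton of $\varphi_i$ treating both $C \sqsubseteq D$ and $\neg(C \sqsubseteq D)$ as monotone atoms whose truth value is preserved by the replacement pattern. Verifying this bookkeeping is where the care lies; once it is in place, the atom-wise reduction above closes both cases of the Claim.
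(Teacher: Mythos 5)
Your argument follows the paper's proof essentially verbatim: you unpack $\psi_{prep}$ to isolate the two configurations, show that relativizing to $(\exists r.o_1)^\nn=M$ makes the atoms of $\theta_1$ equivalent to those of $\varphi_1$ while relativizing to $\emptyset$ makes the atoms of $\theta_2$ vacuously true, and then propagate through the negation-free Boolean skeleton. Your worry about negation-freeness is well placed --- the paper's own step ``since $\theta_2$ is a negation free Boolean combination of atomic formulae, $\nn\models\theta_2$'' is only sound if the substitution is applied to the inductively produced negation-free $\psi_i$ rather than literally to the NNF formula $\varphi_i$ --- but note that your two proposed repairs are not equivalent: treating $\neg(C\sqsubseteq D)$ as a monotone atom does not rescue the vacuous-satisfaction step, since relativizing a negated inclusion to $\emptyset$ yields $\neg(\emptyset\subseteq\emptyset)$, which is false; only the repair that first eliminates formula-level negations via the fresh-nominal construction and then relativizes actually works.
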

\end{enumerate}
\end{description}

\section{\label{app:proofs-lem:op-on-concepts} Proof of Lemma \ref{lem:op-on-concepts}}

The proof of (1) proceeds by induction on the construction of the concepts, showing that $C^{\mm} = C^{\op{\mm}{\mathfrak{t}}}$ for all $C\in\varphi$.
For ease of notation we write $\mm_{1}=\mm$ and $\mm_{2} = \op{\mm}{\mathfrak{t}}$.
\begin{enumerate}
\item If $A\in\NC$, then $A^{\mm_{1}} = A^{\mm_{2}}$ since none of the atomic concepts change between $\mm_{1}$ and $\mm_{2}$.
\item If $o\in\NI$, then similarly, there is no change.
\item If $C_{1}$ and $C_{2}$ are concepts satisfying the induction hypothesis, then $C_{1} \sqcap C_{2}$, $C_{1} \sqcup C_{2}$ and $\neg C_{1}$ also satisfy the claim, e.g.,
    $(C_{1} \sqcap C_{2})^{\mm_{1}} = C_{1}^{\mm_{1}} \cap C_{2}^{\mm_{1}} =
     C_{1}^{\mm_{2}} \cap C_{2}^{\mm_{2}} = (C_{1} \sqcap C_{2})^{\mm_{2}}$.
\item For a role $s$, a concept $C$ and a non-negative integer $n$, we consider the concepts $\exists s.C$, $\exists^{\leq n} s.C$, $\exists s^{-}.C$ and $\exists^{\leq n} s^{-}.C$:

    \begin{enumerate}
    \item If $s\not=r$, then $\left(\exists s.C^{\mm_{1}}\right) = \left(\exists s.C\right)^{\mm_{2}}$ since $s^{\mm_{1}}=s^{\mm_{2}}$ and by induction $C^{\mm_{1}} = C^{\mm_{2}}$.
        Similarly, this holds for $\exists^{\leq n} s.C$, $\exists s^{-}.C$ and $\exists^{\leq n} s^{-}.C$.
    \item If $s=r$:
    \begin{itemize}
    \item $\exists r.C$ and $\exists^{\leq n} s.C$:
        We fix some $i \in \{0,1\}$.
        We have $a_i \in \left(\exists r.C\right)^{\mm_{1}}$ iff
        $a_{i-1} \in \left(\exists r.C\right)^{\mm_{1}}$
        (by $\ovtpn_{\mm_{1}}^{\varphi}(a_{0}) = \ovtpn_{\mm_{1}}^{\varphi}(a_{1})$)
        iff there is a $b$ such that $(a_{i-1},b) \in r^{\mm_{1}}$ and $b \in C^{\mm_{1}}$
        iff there is a $b$ such that $(a_{i-1},b) \in r^{\mm_{1}}$ and $b \in C^{\mm_{2}}$ (by induction assumption) iff
        iff there is a $b$ such that $(a_i,b) \in r^{\mm_{2}}$ and $b \in C^{\mm_{2}}$ (by the definition of the operation $\rhd$) iff
        $a_i \in \left(\exists r.C\right)^{\mm_{2}}$.
        Since the only difference between $\mm_{1}$ and $\mm_{2}$ are the values of $r^{\mm_{i}}$ on $a_{0}$ and $a_{1}$, we have $\left(\exists r.C\right)^{\mm_{1}} = \left(\exists r.C\right)^{\mm_{2}}$
        and $\left(\exists^{\leq n} r.C\right)^{\mm_{1}} = \left(\exists^{\leq n} r.C\right)^{\mm_{2}}$.
    \item $\exists r^{-}.C$ and $\exists^{\leq n} r^{-}.C$:
        For every $u\in M$, $i=1,2$, we define
        \[
            S_{i}(u)=\left\{ v\mid(u,v)\in\left(r^{-}\right)^{\mm_{i}}\mbox{ and }v\in C^{\mm_{i}}\right\}
        \]
        We fix some $u\in M$.
        We have $v\in S_{1}(u)$ iff $v\in S_{2}(u)$ for every $v \notin \{a_{0},a_{1}\}$, using that $(u,v) \in  \left(r^{-}\right)^{\mm_{1}}$
        iff $(u,v) \in \left(r^{-}\right)^{\mm_{2}}$ and that by induction $C^{\mm_{1}}=C^{\mm_{2}}$.
        We now consider $v \in \{a_{0},a_{1}\}$:
        We have $a_{i} \in S_{1}(u)$
        iff $(a_{i},u) \in r^{\mm_{1}}$ and $a_{i}\in C^{\mm_{1}}$
        iff $(a_{i},u) \in r^{\mm_{1}}$ and $a_{i-1}\in C^{\mm_{1}}$ (because $\ovtpn_{\mm_1}^{\varphi}(a_{0}) = \ovtpn_{\mm_1}^{\varphi}(a_{1})$)
        iff $(a_{i-1},u) \in r^{\mm_{2}}$ and $a_{i-1}\in C^{\mm_{1}}$ (by the definition of the operation $\rhd$)
        iff $(a_{i-1},u)\in r^{\mm_{2}}$ and $a_{i-1}\in C^{\mm_{2}}$ (by induction assumption)
        iff $a_{i-1}\in S_{2}(u)$.
        So, $|S_{1}(u)| = |S_{2}(u)|$.
        Therefore, $u \in (\exists r^{-}.C)^{\mm_{1}}$ iff $u \in (\exists r^{-}.C)^{\mm_{2}}$ and $u \in (\exists^{\leq n} r^{-}.C)^{\mm_{1}}$ iff $u \in (\exists^{\leq n} r^{-}.C)^{\mm_{2}}$, $i=1,2$.
    \end{itemize}
\end{enumerate}
\end{enumerate}
We turn to the two conclusions:
\begin{enumerate} [(2)]
    \item We get directly that for every $u\in M$, $\ovtpn_{\mm}^{\varphi}(u) = \ovtpn_{\op{\mm}{\mathfrak{t}}}^{\varphi}(u)$.
\end{enumerate}
\begin{enumerate} [(3)]
    \item Follows using Lemma~\ref{lem:types}.
\end{enumerate}

\end{document}